\pgfplotsset{compat=newest}
\newcommand{\noop}[1]{}
\newcommand{\tool}{{\sc OverlayBB}\xspace}
\newcommand{\codename}{{\sc BCube}\xspace}
\newcommand{\Paragraph}[1]{\smallskip\noindent{\bf #1}}
\newtheorem{theorem}{Theorem}
\newtheorem{lemma}[theorem]{Lemma}
\renewcommand{\paragraph}[1]{\pagebreak\mbox{XXXXX} \pagebreak\mbox{XXXXX} \pagebreak\mbox{XXXXXX}}
\def\BibTeX{{\rm B\kern-.05em{\sc i\kern-.025em b}\kern-.08em
    T\kern-.1667em\lower.7ex\hbox{E}\kern-.125emX}}
\DeclareMathAlphabet{\mathpzc}{OT1}{pzc}{m}{it}
\begin{document}
\title{Using Throughput-Centric Byzantine Broadcast \\
	to Tolerate Malicious Majority in Blockchains}

\author{\IEEEauthorblockN{Ruomu Hou}
	\IEEEauthorblockA{National University of Singapore\\
		houruomu@comp.nus.edu.sg}
	\and
	\IEEEauthorblockN{Haifeng Yu}
	\IEEEauthorblockA{National University of Singapore\\
		haifeng@comp.nus.edu.sg}
	\and
	\IEEEauthorblockN{Prateek Saxena}
	\IEEEauthorblockA{National University of Singapore\\
		prateeks@comp.nus.edu.sg}
	\and}

\maketitle

\begin{abstract}
	
Fault tolerance of a blockchain is often characterized by the fraction $f$ of ``adversarial power'' that it can tolerate in the system. Despite the fast progress in blockchain designs in recent years, existing blockchain systems can still only tolerate $f$ below $0.5$. Can practically usable blockchains tolerate a malicious majority, i.e., $f$ above $0.5$?

This work presents a positive answer to this question. We first note that the well-known impossibility of {\em byzantine consensus} for $f$ above $0.5$ does not carry over to blockchains. To tolerate $f$ above $0.5$, we use {\em byzantine broadcast}, instead of byzantine consensus, as the core of the blockchain. A major obstacle in doing so, however, is that the resulting blockchain may have extremely low throughput. To overcome this central technical challenge, we propose a novel byzantine broadcast protocol \tool, that can tolerate $f$ above $0.5$ while achieving good throughput. Using \tool as the core, we present the design, implementation, and evaluation of a novel Proof-of-Stake blockchain called \codename. \codename can tolerate a malicious majority, while achieving practically usable transaction throughput and confirmation latency in our experiments with $10000$ nodes and under $f = 0.7$. To our knowledge, \codename is the first blockchain that can achieve such properties.
	
\end{abstract}

\section{Introduction}
\label{sec:intro}

Fault tolerance is a property of central importance in modern
distributed systems such as blockchains.
Fault tolerance is often characterized by the
fraction $f$ of ``adversarial power'' that a system can tolerate. Here the ``adversarial power'' may correspond to i) malicious nodes in permissioned blockchains, ii) adversarially-controlled computational power in Proof-of-Work-based permissionless blockchains, or iii) adversarially-controlled stake in Proof-of-Stake-based permissionless blockchains.
Bitcoin's consensus protocol, invented over a decade ago, can 
tolerate $f$ below $\frac{1}{2}$.
While subsequent blockchain systems (e.g., \cite{solida,prism19,ohie20,snow-white,algorand,ouroboros,omniledger,elastico,rapidchain}) have achieved significantly better performance than
Bitcoin, all of them still can only tolerate $f$ below
$\frac{1}{2}$, or sometimes even lower. This is regardless of whether these designs are for permissioned systems, or for permissionless systems using Proof-of-Work/Proof-of-Stake.

There are growing desires, however, for blockchains to tolerate $f\ge \frac{1}{2}$. For example, there have been
double-spending attacks on public blockchains, where malicious actors
temporarily control more than half of the adversarial power in the
network~\cite{bitcoin-gold,krypton,shift,bitcoin-cash}. Similarly, it
has been highlighted that a few centralized miners control more than
half of the power in many Proof-of-Work and Proof-of-Stake
blockchains~\cite{kwon-aft19}. 

One reason why blockchains today cannot tolerate $f\ge \frac{1}{2}$ is that they often build upon {\em byzantine consensus}~\cite{lynch1996distributed}. Byzantine consensus is a {\em one-shot} game --- among other things, it requires that if honest nodes all have the same proposal for the next block, then they must all decide on that block, instead of on some adversarially-chosen block. Such requirement\footnote{A similar requirement has led to the recent impossibility result in \cite{garay20}.} makes it impossible to tolerate $f\ge \frac{1}{2}$.
A blockchain, on the other hand, is a continuous process where usually each block in the distributed ledger is proposed by a random proposer. It is acceptable for the block to be adversarially-chosen, if the proposer happens to be malicious. 
Hence impossibility results on byzantine consensus under $f\ge \frac{1}{2}$ do not necessarily carry over to blockchains.

\Paragraph{Byzantine broadcast.}
In our pursuit of blockchains that can tolerate $f \ge \frac{1}{2}$, 
we have revisited various classical primitives. We eventually focus on one such primitive --- 
{\em byzantine
  broadcast}~\cite{chan20,dolev83,ganesh16,hirt14,nayak20,tsimos20,wan20}.  In byzantine broadcast, there is a single publicly known
{\em broadcaster} that broadcasts an {\em object} (e.g., a block in blockchain) to
all nodes. Some of the nodes, including the broadcaster, may be
malicious. A byzantine broadcast protocol guarantees:
\begin{itemize}
	\item All honest (i.e., non-malicious) nodes eventually output
          the same object (i.e., {\em agreement}). This object is
          allowed to be a special null object. 

        \item If the broadcaster is honest, all honest
          nodes must output the object broadcast by the broadcaster.
\end{itemize}
Byzantine
broadcast is closely related to byzantine consensus, but crucially
differs from it --- in particular, byzantine broadcast is solvable in synchronous systems for all $f < 1$. 
Starting from now on, {\em this paper will only be concerned with byzantine broadcast protocols that can tolerate  $f\ge \frac{1}{2}$}. 

While rarely mentioned in literature, byzantine broadcast can be used~\cite{pass2018thunderella} to build blockchains. In particular, if we use a byzantine broadcast protocol that can tolerate $f \ge \frac{1}{2}$, then the resulting blockchain will immediately be able to tolerate $f\ge \frac{1}{2}$ as well. But at the same time, the resulting blockchain's throughput will also inherit from the throughput of the byzantine broadcast protocol. This turns out to be the major obstacle, because existing byzantine broadcast protocols~\cite{chan20,dolev83,ganesh16,hirt14,nayak20,tsimos20,wan20} seriously fall short of providing acceptable throughput, as shown next.

\Paragraph{Throughput of existing byzantine broadcast protocols.}
Let us clearly define throughput. Imagine that each node has $\mathbb{B}$ available
bandwidth, as provided by the deployment environment. 
Under such a constraint, let $x$ be the total
number of bits that a byzantine broadcast protocol can broadcast
within a time period $y$. We define the protocol's {\em throughput} to be
$\mathbb{T} = x/y$, and
define the {\em normalized
	throughput} to be $\mathbb{R} = \mathbb{T}/\mathbb{B}$. We also
call $\mathbb{R}$ as the {\em throughput-to-bandwidth ratio} or {\em
	TTB ratio}. This ratio essentially serves to isolate the inherent merit of the protocol from the goodness of the deployment environment, since the throughput achievable by a protocol naturally increases when deployed in a better environment offering higher $\mathbb{B}$. Obviously, $\mathbb{R}$ is always between 0 and 1, and larger is better. 

Existing byzantine broadcast protocols unfortunately have rather low TTB ratios. 
This is perhaps not surprising, since throughput or TTB ratio has not been explicitly considered in prior research on byzantine broadcast.\footnote{Prior works have considered {\em communication complexity} (CC), which is the total number of bits sent by all honest nodes. But CC does not map to throughput or TTB ratio. For example, some protocols~\cite{chan20,dolev83,nayak20} require a node to send many bits in a few ``busy'' rounds, and nothing in other rounds. (For a given node, the adversary decides which rounds are ``busy'' for that node.) While such protocols may have low CC, the deployment environment still needs to provision for the high bandwidth need of those ``busy'' rounds.}
For example, the Dolev-Strong protocol~\cite{dolev83} and its variant~\cite{tsimos20} have $\mathbb{R}< \frac{1}{wfn}$ (see analysis in Section~\ref{sec:background2}), where $n$ is the total number of nodes and $w$ is the degree of a node in the overlay network. For $n = 10000$ and $w=40$, the protocol has  $\mathbb{R}< 3.6\times 10^{-6}$ when $f = 0.7$. If every node has $20$Mbps available bandwidth, then the protocol's throughput will be less than $0.072$Kbps, even if we ignore all other overheads in implementation. A blockchain built upon such a protocol 
will also have a throughput of less than $0.072$Kbps, which is practically unusable.
As another example, in our experiments, the state-of-the-art design recently proposed by Chan et al.~\cite{chan20} achieves a throughput
of only about $0.45$Kbps under $20$Mbps available bandwidth.
See Section~\ref{sec:background2} and \ref{sec:related} for more discussions on existing protocols.

\Paragraph{Our \tool protocol.}
As the first contribution of this paper, we propose a novel byzantine
broadcast protocol called \tool. \tool is particularly suitable for large-scale
systems where nodes communicate via a multi-hop overlay. 
\tool achieves $\mathbb{R}= \Theta(\frac{1}{w})$, by using fragmentation, proper delay/compensation in fragment propagation, and other techniques. Here $w$ is the degree of the
nodes in the overlay.
For example, if the overlay is a random graph,
then $w$ can be just $O(\log n)$. This $\Theta(\frac{1}{w})$ TTB ratio is significantly better than existing protocols. 
(We show later that \tool achieves a throughput of roughly $163$Kbps under $20$Mbps available bandwidth.)

\Paragraph{From byzantine broadcast to blockchain.}
As the second contribution of this paper, we present the design,
implementation, and evaluation of a novel Proof-of-Stake blockchain called 
\codename (i.e., Byzantine Broadcast-based Blockchain, or B$^3$). 
Using \tool as its core,
\codename can tolerate $f \ge \frac{1}{2}$, while achieving {\em practically usable}
transaction throughput and confirmation latency. Specifically,
we have implemented a prototype of \codename, and evaluated its performance, with up to $10000$ nodes and under similar configurations as in prior works~\cite{algorand,ohie20}. 
In our experiments with $f=0.7$ and a target error probability of $\epsilon \le 2^{-30}$, \codename achieves $163$Kbps throughput, with one $2$MB block generated about every 98 seconds, and has a transaction confirmation latency of less than $6$ hours.

Such performance of \codename is certainly not on par with blockchains that only tolerate $f < \frac{1}{2}$. 
But \codename's throughput and latency are nevertheless {\em practically usable}: As a reference point,
Bitcoin's throughput is about 14Kbps, with one 1MB block generated about every 600 seconds. Bitcoin 
entails a confirmation latency of about $9.3$ hours,\footnote{Based on the formula from \cite{kiffer18}, to ensure that the probability of the  
adversary (under all possible attack strategies) reverting a block is at most $2^{-30}$ in Bitcoin, the block needs to be at least $56$ blocks deep in the blockchain. Given Bitcoin's 10-minute inter-block time, this translates to $9.3$ hours. 
The well-known ``6 blocks deep'' rule of thumb in Bitcoin, and the corresponding 1-hour confirmation latency, would only give $\epsilon\approx 0.05$~\cite{kiffer18} under $f=0.25$.} based on the state-of-the-art analysis~\cite{kiffer18}, to achieve $\epsilon \le 2^{-30}$ under $f=0.25$. 

To our knowledge, \codename is the very 
first blockchain that can tolerate $f\ge \frac{1}{2}$, while achieving practically usable throughput and latency. There are only a few prior approaches~\cite{chan20,pass2018thunderella} for designing blockchains with $f \ge \frac{1}{2}$, which are all based on byzantine broadcast. The throughput achieved by those approaches (i.e., no more than $0.45$Kbps under same setting as \codename) is far from practically usable. Furthermore, all those prior works are purely theoretical, ignore various practical issues, and provide no implementation.

\Paragraph{Roadmap.}
The next section defines our system/attack model. Section~\ref{sec:background2} provides some background. Section~\ref{sec:design} and \ref{sec:pseudocode} describe the design of \tool. Section~\ref{sec:blockchain} presents the design of \codename. Section~\ref{sec:proof} gives the security analysis of \tool and \codename. Section~\ref{sec:exp} presents the implementation and evaluation of \codename.

\section{System Model and Attack Model}
\label{sec:model}

We model hash functions as random oracles. We assume that some initial trusted setup provides a {\em genesis block}, which contains an unbiased random beacon to be used in the very first epoch. This is a typical assumption in Proof-of-Stake blockchains (e.g., \cite{snow-white,algorand}).

\Paragraph{Nodes and stakes/coins.}
We consider a {\em permissionless} setting (i.e., similar to Algorand~\cite{algorand}), without PKI or initial trusted setup for binding nodes to identities. Each node in the system holds a locally-generated public-private key pair, and the public key is viewed as the node's {\em id}. Each {\em node} can either be {\em honest} or {\em malicious}. The malicious nodes are fully {\em byzantine}, and may deviate arbitrarily from the protocol. They may also arbitrarily collude, and we view them as all being controlled by the {\em adversary}. 
We allow the adversary to be {\em mildly-adaptive}~\cite{rapidchain, snow-white,ouroboros} --- for example, it takes multiple epochs for the adversary to adaptively corrupt a node. 

We rely on {\em Proof-of-Stake (PoS)}~\cite{snow-white,algorand,ouroboros} for Sybil defense in our permissionless setting: We assume that there are some {\em stakes} (or {\em coins}) in the system,
where the total number of coins may change over time.
At any point of time, each coin has an {\em owner}, which is the node holding that coin. 
Again, the owner may change over time. Information regarding which nodes hold which coins is stored in the blockchain itself, and is publicly known. 
We assume that at any point of time, at most $f$ fraction of the stakes/coins in the system are owned by malicious nodes, where $f$ is some constant no larger than $0.99$. (Our experiments mainly consider $f = 0.7$.) Sometimes as a stepping stone, we also consider a simplified permissioned setting with exactly $n$ nodes, where we use $f$ to denote the fraction of malicious nodes.

To simplify periodic beacon generation, \codename further relies on a weak Proof-of-Work (PoW) assumption: We assume that the adversary's computational power is at most $100$ times of the aggregate computational power of the honest nodes. This assumption is separate from and independent of the earlier $f$ threshold.
(If needed, this ``$100$'' value can be further increased without impacting security, but at the cost of lower performance.)
Note that our assumption differs from PoW-based blockchains, whose {\em security} depends on the adversary having {\em less computational power} than the honest nodes.

\Paragraph{Communication.}
We assume that all the honest nodes form a connected overlay network --- this is a typical assumption in large-scale blockchain systems (e.g., \cite{algorand}). 
Consider any two neighboring honest nodes $A$ and $B$ in the overlay. It will be convenient to view the undirected edge between $A$ and $B$ as two directed edges in two directions. With respect to some $\delta_1$ value, we say that the directed edge from $A$ to $B$ is {\em good} if a message sent by $A$ can reach (with proper retries) $B$ within $\delta_1$ time, as long as the message is relatively small (e.g, $\le 10$KB). Otherwise the edge is {\em bad}. In general, under reasonably large $\delta_1$  (e.g., $\delta_1 = 10$ seconds), one would expect that while some edges may occasionally be bad, most edges among the honest nodes will be good. Hence we assume the {\em honest subgraph} (i.e., the subgraph containing all the honest nodes and all the good edges) to be connected. We use $d$ to denote an upper bound on the diameter of this honest subgraph. 

Partitioning attacks~\cite{hijacking-bitcoin,tran20} can cause our assumption to be violated in general. But such attacks apply to many other existing blockchains as well (e.g., \cite{elastico, hijacking-bitcoin, marcus2018low, ohie20, ouroboros, tran20}),
{\em despite} that all these existing designs can only tolerate $f < \frac{1}{2}$. How to defend against such partitioning attacks is an active research topic by itself, and is beyond the scope of this paper: Possible defenses include hiding the overlay network structure~\cite{dandelion17}, diversifying neighbors' profile~\cite{heilman2015eclipse}, or preserving neighbors that provide fresher data~\cite{tran20}. 

We assume that nodes have loosely synchronized clocks, so that the clock readings on any two nodes do not differ by more than $\delta_2$ (e.g., $\delta_2 = 2$ second).
We will describe a byzantine broadcast execution as a sequence of {\em rounds}, and each node uses its local clock to keep track of the beginning of this execution as well as the current round number. We allow the starting time of each round on different nodes to be somewhat misaligned due to the $\delta_2$ clock error. Each round has a fixed duration $\delta = \delta_1+ \delta_2$ (e.g., $\delta = 12$ seconds). We assume that CPU processing delay is negligible, as compared to $\delta$. At the beginning of each round, a node receives messages, processes them, and then sends new messages. Since $\delta = \delta_1+ \delta_2$, a message sent in round $i$ along a good edge is received by the beginning of round $i+1$ on the receiver, as long as the message is relatively small (e.g, $\le 10$KB). 

\begin{table}[]
	\vspace*{-0mm}
\caption{Key notations.\label{tab:notation}}
\vspace*{-0mm}
	\begin{center}
		\begin{tabular}{|r|l|}
			\hline
			$n$&total number of nodes (for permissioned setting)\\ \hline
			$m$&number of nodes (for permissioned setting) in the committee, \\
			&or number of  coins (for PoS setting) held by committee members\\ \hline
			$f$&fraction of malicious nodes (for permissioned setting), \\
			&or fraction of coins (for PoS setting) held by malicious nodes\\ \hline
			$d$&upper bound on diameter of subgraph containing honest nodes/edges\\ \hline
			$w$&maximum number of neighbors (both honest neighbors \\
			&and malicious neighbors) that an honest node may have\\ \hline
			$\epsilon$&error probability \\ \hline
			$\delta$&round duration \\ \hline
			$l$&size of object to be broadcast in byzantine broadcast protocol\\ \hline
			$s$&total number of fragments (of the object to be broadcast) \\ \hline
		\end{tabular}
	\end{center}
\vspace*{-4mm}
\end{table}

\Paragraph{Problem definition.}
We aim to design a blockchain system where each node maintains an append-only sequence of blocks. (\codename has no forks, and all blocks in this sequence are considered as ``confirmed''.) Each block may contain, for example, a list of transactions. The blockchain should achieve standard {\em safety} and {\em liveness} guarantees, despite the byzantine behavior of all the malicious nodes. 
Roughly speaking, {\em safety} means that the sequences on all honest nodes are consistent with each other, while {\em liveness} means that the sequence on each honest node keeps growing over time. We defer the exact definitions to Section~\ref{sec:proof}.

\Paragraph{Notations.}
Table~\ref{tab:notation} summarizes the notations so far, and also defines several other notations.

\section{Background on Byzantine Broadcast}
\label{sec:background2}

We review two existing byzantine broadcast protocols~\cite{chan20,dolev83}, which \tool builds upon. To help understanding, we describe them in a simple permissioned setting with $n$ nodes, out of which $fn$ are malicious. We will first assume a clique topology among the $n$ nodes, and then generalize to arbitrary multi-hop topology.

\subsection{Dolev-Strong Protocol~\cite{dolev83}}
\label{sec:background3}

\Paragraph{Clique topology.}
In round $0$ of this protocol, the broadcaster sends the object to all nodes, with its own signature attached. Upon receiving an object in round $t$, if the object has less than $t$ signatures attached (including the broadcaster's signature), a node drops the object. Otherwise the node {\em accepts} this object, and then adds its own signature to the object and forwards the object to all other nodes. Once an object is accepted, a node will not forward the object again in the future. A node may accept more than one object, when the broadcaster is malicious. At the end of round $fn+1$, a node {\em outputs} the special null object $\bot$, if it has accepted more than one object (implying a {\em conflict}) or if it has accepted none. Otherwise it {\em outputs} the (single) object accepted.

The key intuition in this protocol is the following:
{\em When a node $B$ is about to forward/send an object, node $B$ can safely accept the object if $B$ knows that its send will cause all other honest nodes to accept this object (if they have not already done so).} 
This ensures that either all or none honest nodes accept that object.
Specifically in this protocol, if $B$ receives and then immediately forwards an object in round $t \le fn$, then $B$ is sure that all other honest nodes must receive and accept this object in round $t+1\le fn+1$, which is before the end of the protocol. On the other hand, if $B$ sends an object in round $fn+1$, then other nodes will not receive the object before the end of the execution. But in such a case, $B$ must have seen at least $fn+1$ signatures on the object. One of these must be from some honest node $A$, and $A$ must have previously already forwarded the object to all honest nodes. Namely, $A$ has already done the job for $B$.  

The protocol comes with a further optimization: Once a node has accepted two objects, it no longer accepts/forwards more objects. Hence a node only sends at most two messages throughout the execution. Agreement is still preserved: If one honest node $A$ accepts two objects, then another honest node $B$ must also accept two objects (which may be different from what $A$ accepts). Hence all honest nodes will output $\bot$.

\Paragraph{Multi-hop topology.}
The protocol naturally generalizes~\cite{dolev83,tsimos20} to multi-hop topologies.
The only modifications needed are: i) the protocol now runs for $fn+d$ rounds, and ii) a node now only forwards an object to its (up to $w$) neighbors.

\Paragraph{TTB ratio.}
Consider any honest node $A$. In the above protocol, there are some rounds (potentially chosen by the adversary) during which $A$ needs to forward objects to all its neighbors. Recall from Table~\ref{tab:notation} that $l$ is the object size. Hence in each of those rounds, $A$ needs to send at least $lw$ bits total.
Under the given bandwidth constraint $\mathbb{B}$, each node has the capacity to send at most $\mathbb{B}\delta$ bits in each round, where $\delta$ is the round duration. Hence the maximum $l$ the protocol can manage is $\frac{\mathbb{B}\delta}{w}$. The protocol has total $fn+d$ rounds. Since the protocol manages to broadcast an object of size $l = \frac{\mathbb{B}\delta}{w}$ using
total $(fn+d)\delta$ time, we have $\mathbb{T} = \frac{l}{(fn+d)\delta}$ and $\mathbb{R} = \mathbb{T}/\mathbb{B} = \frac{l}{(fn+d)\delta\mathbb{B}} \le \frac{\mathbb{B}\delta/w}{(fn+d)\delta\mathbb{B}} <\frac{1}{wfn}$. 

\subsection{Chan et al.'s Protocol~\cite{chan20}}
\label{sec:background4}

\Paragraph{Clique topology.}
Recently, Chan et al.~\cite{chan20}\footnote{Chan et al.'s original protocol~\cite{chan20} allows a fully-adaptive adversary, but can only broadcast messages containing a single-bit. The version we describe here is for mildly-adaptive adversaries and can broadcast multi-bit messages.} have proposed an elegant design to substantially reduce the number of rounds in the Dolev-Strong protocol.
Chan et al.'s protocol first selects a random committee of $m$ nodes. 
Now the $m$ committee members can do byzantine broadcast among themselves, using the Dolev-Strong protocol~\cite{dolev83} while taking at most $m$ rounds. But it is not immediately clear how the remaining non-committee members can decide. In particular, since a majority of the committee members can be malicious, voting will not work. 

Chan et al.~\cite{chan20} overcomes this problem in the following way. 
Consider one round in the Dolev-Strong protocol, where one committee member $A$ sends a message (containing the object and signatures) to all other committee members. Their idea~\cite{chan20} is to replace this round with two rounds, so that $A$ sends the message to all the non-committee members first, and then the non-committee members forward $A$'s message (unchanged) to all the committee members. (Hence there will be total $2m$ rounds.) This enables the non-committee members to observe the communication originated from the honest committee members. Before forwarding an object, a non-committee member $B$  can precisely predict (based on the signatures on the object) whether the committee members, upon receiving this object, will accept the object. If yes, $B$ accepts the object before forwarding it.

\Paragraph{Multi-hop topology.}
Chan et al.'s protocol trivially generalizes to a multi-hop topology, {\em assuming that each node  has sufficient bandwidth to relay all messages}. Specifically, whenever a node needs to send messages to other nodes, it simply does a {\em multicast} (i.e., flooding) on the multi-hop topology, taking $d$ rounds. Hence each of the $2m$ rounds in the clique setting now becomes $d$ rounds, and there are total $2dm$ rounds.

\section{Design of \tool}
\label{sec:design}

Byzantine broadcast protocols are often not complex in implementation, but their designs can be subtle. 
Because of this, this section focuses on intuitions. We do not aim to cover all possibilities, nor to rigorously argue for correctness here. Later, Section~\ref{sec:pseudocode} presents the complete pseudo-code of \tool, based on which Section~\ref{sec:proof} provides formal proof for correctness and analysis of the TTB ratio $\mathbb{R}$. Such an end-to-end proof is the only way to ultimately verify the protocol's correctness.

This section considers a multi-hop topology, but to help understanding, we still assume the simple permissioned setting with $n$ nodes. Section~\ref{sec:pseudocode} later generalizes to the PoS setting.

\subsection{Avoid Relaying Unlimited Number of Objects}
\label{sec:design1}

Chan et al.'s protocol~\cite{chan20} serves as a starting point of our design. 
When used in multi-hop topologies, their protocol relies on the implicit assumption that {\em a node has sufficient bandwidth to relay all possible multicast messages}. 
This section first shows that such an assumption can prevent the protocol from guaranteeing agreement in practice, namely, when $\mathbb{B}\ne \infty$.
We then propose a simple solution to fix this problem.
Section~\ref{sec:design2} and \ref{sec:design3} later propose more techniques to improve $\mathbb{R}$, 
to eventually get \tool.

\Paragraph{Chan et al.'s protocol in multi-hop overlay.}
In a multi-hop topology, each node in Chan et al.'s protocol simultaneously plays two roles: First, a node is either a committee member or a non-committee member. Second, a node is always a relaying node in the overlay for the purpose of multicast, and it needs to relay all multicast messages. 
Now a malicious broadcaster can generate many objects, all with valid signatures from itself. The malicious committee members can add further signatures to these objects, and then multicast all these objects. 
Since there can be unlimited number of such objects, eventually honest nodes will not have sufficient bandwidth to relay all multicast messages. Some message $x$ hence will not be properly propagated to all nodes (in time). It is possible that none of the other objects are eventually accepted by any honest nodes, while the object in $x$ is eventually accepted by some honest nodes. Since the propagation of $x$ was not properly done, the object in $x$ may not be accepted by other honest nodes, which then violates agreement/correctness.

To gain deeper insight, it helps to see why this problem does not exist when the protocol runs over a clique. In a clique, a node only plays a single role of either a committee member or a non-committee member. While a node also needs to forward messages there, 
{\em a node always first accepts an object before it forwards the object}. Throughout the execution, each node accepts at most $2$ objects, and hence sends/forwards at most $2$ messages. This is regardless of how many objects are injected by the malicious nodes.
Now with multi-hop propagation, upon receiving a certain object $x$, a relaying node $B$ {\em cannot tell whether $x$ will be accepted} (despite $B$ knowing an upper bound $d$ on the diameter of the network).\footnote{The crux here is that $B$ does not know whether $x$ can reach all nodes in time. One naive idea is for $B$ to refuse relaying $x$ when the ``residual lifespan'' of $x$ is less than $d$ rounds. This does not work because other honest nodes will do so as well, which in turn means that $B$ needs to see a  ``residual lifespan'' of at least $2d$ rounds.
This argument keeps going on without converging, from requiring $2d$ to $3d$, $4d$, and so on.}

\Paragraph{Our observation.}
Our solution to the above problem will be based on the following observation: When a node in the overlay network relays an object, while it cannot predict whether the object will be eventually accepted, the node can nevertheless determine how ``promising'' it is for the object to be accepted. Define a 
{\em push} to be the event of a node sending/forwarding/relaying a certain object $x$, together with $y$ signatures on $x$, in a certain round $t$. Intuitively, smaller $t$ and larger $y$ make it more likely for $x$ to be later accepted. More precisely, we assign each push a {\em score} of $2dy - t$ to summarize how promising it is, based on the following intuition: Roughly speaking, each signature gives the object an extra ``lifespan'' of $2d$ rounds, and an object will be accepted as long as it is received during its lifespan. The score $2dy-t$ is then the residual ``lifespan'' when the push is done in round $t$. We call a push with a higher score as a more {\em promising} push.

Now consider any node $B$, and all the pushes that $B$ has ever done. Conceptually, if all the other pushes in the network are triggered either directly or indirectly by $B$'s pushes, 
then we will have the following nice property: {\em If an object contained in a more promising push is eventually not accepted, then no objects contained in less promising pushes will ever be accepted.} Similarly, two objects contained in two pushes with the same score must have the same outcome: {\em They are either both accepted or both rejected.} (Our proofs later will formalize these properties, and also fully capture the interactions among pushes done by different nodes.)

\Paragraph{Our solution.}
With the above observation, let us proceed with the design of \tool. There may be many objects that a node $B$ needs to forward in a certain round $t$. In our design, node $B$ simply chooses the $2$ objects whose corresponding pushes would be the most promising, and forwards those $2$ objects (effectively ``materializing'' those $2$ pushes). Tie-breaking can be done arbitrarily. Note that since $t$ is fixed here, those will simply be the $2$ objects with the most number of signatures. (We nevertheless introduced the score of $2dy-t$, to facilitate later discussion.) If needed, to save storage space, $B$ can further discard all objects other than those $2$ objects. 

\Paragraph{Some intuitions.}
Section~\ref{sec:proof} will give formal correctness proofs, but we provide some quick intuitions here. First, forwarding $2$ objects (instead of one) is necessary for correctness. For example, consider the case where there would have been $2$ objects eventually accepted, if every node had materialized all possible pushes. Then forwarding only $1$ object in each round would lead to a wrong result. Second, forwarding $2$ objects in each round is also sufficient for correctness. Namely, not ``materializing'' the other less promising pushes will not cause any problem: If at least one of these two objects are not eventually accepted, then those less promising pushes would not contribute to the acceptance of any additional objects anyway. If both objects are accepted, recall from Section~\ref{sec:background3} that we no longer care about other objects, since we already have a conflict.

\subsection{Fragmentation, Delay, and Compensation}
\label{sec:design2}

\Paragraph{Avoid forwarding in every round.}
The design in Section~\ref{sec:design1} requires a node to forward 2 objects potentially in {\em every round}. For example, this may happen when a malicious broadcaster injects 2 objects in each round, with objects in later rounds being more promising. 
To further improve $\mathbb{R}$, we want to avoid forwarding objects in every round. We achieve this by using two {\em phases}. The first phase uses the design in Section~\ref{sec:design1} to broadcast the hash of the object, where a node forwards up to 2 hashes in every round (regardless of how many hashes the adversary injects). At the end of the first phase, the honest nodes will all agree on a certain hash.
The second phase uses the design in Section~\ref{sec:design1} again to broadcast the object itself. We will focus on improving the second phase, since the bandwidth bottleneck will be in the second phase.
Given the agreed-upon hash, in the second phase, each node now only needs to forward (once) a single object that matches the hash, in one single round. We call that single round as the ``busy'' round. 
Of course, each node may still have many signatures (for the object) to forward. But we leave that to Section~\ref{sec:design3}.

\Paragraph{Naive parallelism fails.}
We have explained that among the $2dm$ rounds in the second phase, each node has only one ``busy'' round. Given this, a naive attempt to improve $\mathbb{R}$ is to use simple parallelism.
Namely, we break the $l$-size object into $2dm$ fragments, build a Merkle tree with all the fragments being the leaves, and add the Merkle proof to each fragment. The first phase will now broadcast the Merkle root. The second phase would then {\em conceptually} run
$2dm$ parallel instances of the protocol in Section~\ref{sec:design1}, with one instance for each fragment.
This seems to enable each node to fully utilize all the bandwidth in the $2dm$ rounds, with one ``busy'' round from each instance. Unfortunately, a malicious broadcaster controls which round will be ``busy'' in each instance. It can thus align all the ``busy'' rounds in all the instances, so that they all occur at exactly the same time (Figure~\ref{fig_aligned_parallel}). This defeats this naive design, regardless of how we arrange the $2dm$ instances.

\begin{figure}[]
    \centering
    \includegraphics[width=\linewidth]{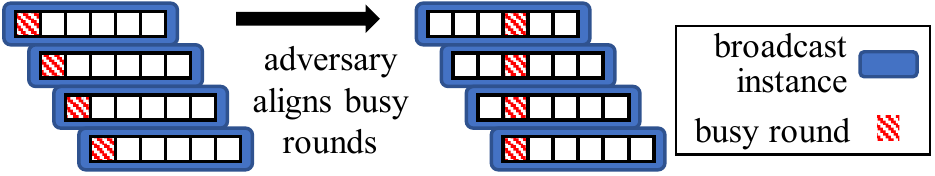}
    \vspace*{0mm}
    \caption{Naive parallelism fails.}
    \vspace*{-6mm}
    \label{fig_aligned_parallel}
\end{figure}
\Paragraph{Delay and compensation.}
Given that the adversary can choose the ``busy'' round for each node, we might just as well start all the parallel instances at the same time. 
Our first idea is that if on any node $A$, the ``busy'' rounds of two instances collide in round $t$, then $A$ will send the fragment $x_1$ in the first instance in round $t$, and delay the sending of the fragment $x_2$ in the second instance to round $t+1$. When a neighbor $B$ processes $x_2$, $B$ should compensate, and process $x_2$ as if $x_2$ were received one round earlier. Intuitively, $A$ is essentially telling $B$ that because $A$ was busy sending $x_1$ to $B$ in round $t$, the fragment $x_2$ is late by one round and is only sent in round $t+1$. Since $B$ sees that $A$ indeed sent $x_1$ in round $t$, $B$ should be willing to compensate.

If the overlay topology were a line topology, the above idea would work. In a more general topology, however, things get complicated. For example in Figure~\ref{fig_hard_compensate}, node $C_1$ sends $x_1$ to $A$ in round $t-1$, while $C_2$ sends $x_1$ and $x_2$ to $A$ in round $t$ and $t+1$, respectively. Then $A$ will send $x_1$ to $B$ in round $t$, and $x_2$ to $B$ in round $t+2$. Despite all nodes being honest in this example, $B$ sees a one-round ``gap'' between $A$'s forwarding of $x_1$ and forwarding of $x_2$. Generalizing this example can make this ``gap'' contain many rounds. In such a case, $B$ cannot be sure how much it should compensate --- in fact, since $A$ could be maliciously and intentionally add the ``gap'', $B$ cannot even decide whether to compensate at all.

\Paragraph{A classic result and its intuition.}
Before presenting our solution, we revisit a classic result~\cite{topkis85} on competing propagations in networks.
Let $x_1$ through $x_s$ be the total $s$ fragments of the object. Let us focus on the instance for $x_s$, while assuming for now that all other instances already work. (Section~\ref{sec:design3}  will show that the last instance is the key.) The propagation of $x_s$ may get delayed due to competing fragments in other instances. 

The classic result in \cite{topkis85} tells us that $x_s$ can be delayed by at most $s-1$ rounds. In particular, this is not $d\times (s-1)$ rounds, and is regardless of how the nodes prioritize different fragments during propagation. The intuition behind this result is also important. The intuition is that if $x_s$ is delayed at node $A$ for $y$ rounds, then $A$ must have been busy sending some other $y$ fragments. Once $A$ forwards those $y$ fragments before $x_s$, downstream nodes will have $y$ fewer remaining opportunities to delay $x_s$.

\Paragraph{Our solution.}
Guided by the above intuition, \tool does not have each node individually determine the amount of compensation. Instead, we use a {\em fixed amount of compensation} together with a {\em forerunner rule} during forwarding. Specifically, \tool gives a fixed compensation of $s-1$ rounds for $x_s$: Whenever any node $B$ is about to send $x_s$ in round $t$, node $B$ decides whether to accepted $x_s$, as if $x_s$ were about to be sent in round $t-(s-1)$. (If $t-(s-1) < 0$, we view it as $0$.) For example, if $B$ is a committee member, then $B$ checks whether the number of signatures on $x_s$ is at least $\frac{t-(s-1)}{2d}$.

Next, \tool 
requires nodes to follow a simple {\em forerunner rule} during forwarding: Before a node sends $x_s$, it is required to have already sent all the other $s-1$ fragments. (Those $s-1$ fragments can be sent in any ordering and in any rounds, potentially with ``gaps'' among such forwardings.) By the earlier intuition, doing so ensures that when $x_s$ is sent and when the compensation of $s-1$ is applied, all the possible delays for $x_s$ have already occurred, and there will be no further delays for $x_s$ during propagation. Note that since an honest node needs to forward all fragments anyway, the restriction from the forerunner rule has no negative effects on honest nodes. If a malicious node sends $x_s$ to an honest node $A$, without having sent all the other fragments in previous rounds, then $A$ simply ignores this message.

\begin{figure}[]
	\centering
	\includegraphics[width=\linewidth]{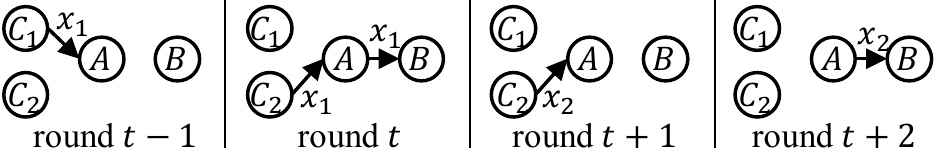}
	\vspace*{-1mm}
	\caption{$B$ sees a gap (i.e., round $t+1$) between $A$'s forwardings.}
	\label{fig_hard_compensate}
	\vspace*{-5mm}
\end{figure}

\Paragraph{Quick summary.}
Section~\ref{sec:proof} will give security analysis for the above design. As a quick summary, the classic result from \cite{topkis85} suggests that compensation of $s-1$ rounds will always be sufficient. The intuition behind this classic result, together with our forerunner rule, roughly suggests that if $x_s$ has already experienced a delay of $s-1$ round by the time that a node $B$ sends $x_s$, then $x_s$ will not experience further delays in downstream honest nodes. This ultimately implies that the key invariant from the Dolev-Strong protocol~\cite{dolev83} still holds: When a node accepts $x_s$, it knows that it can make all other nodes accept $x_s$ within a $2d$ rounds.

\subsection{Forwarding Signatures and Combining the Two Phases}
\label{sec:design3}

\Paragraph{Forwarding signatures.}
Section~\ref{sec:design2} ignored the overhead of sending the signatures. 
To minimize such overhead, 
\tool only uses signatures on the last fragment $x_s$. The other fragments do not carry signatures, and there is no notion of acceptance for each such fragment individually. The entire object (i.e., all its $s$ fragments) is accepted iff $x_s$ is accepted. To intuitively see why this works, note that by our forerunner rule, if $x_s$ is accepted, then the node must have previously sent (and hence seen) all the other $s-1$ fragments. Thus if a node accepts $x_s$, it must be able to reconstruct the object from all the fragments. Furthermore, if all the honest nodes agree on whether $x_s$ is accepted, they must also have agreement on whether the object is accepted. 
As a further optimization, since $x_s$ is the only fragment carrying signatures, we want to make $x_s$ as small as possible. To do so, the broadcaster simply chooses a random nonce as $x_s$, and the object is now split into only $s-1$ fragments. 

\label{sec:design4}
\Paragraph{Running the two phases in parallel.}
The design in Section~\ref{sec:design2} requires two sequential phases: the first phase for the Merkle root and the second phase for the object itself. To further improve performance, we next explain how to run these two phases in parallel, using the following two modifications.

First, a node in the second phase needs to determine whether a fragment is a leave of the Merkle tree with root $r$, where $r$ is agreed upon at the end of the first phase.
When the two phases run in parallel, such determination cannot be easily made anymore. But recall from Section~\ref{sec:design1} that every node $A$ assigns a score to every push that it has ever done. Such a score captures how promising the push is. Now with the two phases running concurrently, 
in the (concurrent) second phase, node $A$ simply uses the Merkle root $r^A$ contained in its most promising push done so far in the first phase, as its current {\em guess} for $r$. 
Our later proof will show that using such a guess suffices to ensure the correctness of the protocol.

Second, let $t^A$ be the round during which $A$ is about to send the last fragment $x_s$ in the second phase. Previously in Section~\ref{sec:design2}, $A$ would decide whether to accept $x_s$ based on the value of $t^A$. 
Now that the two phases run in parallel, we need to adjust this part as well. Specifically, let $t^A_{\textnormal{root}}$ be the round during which $A$ accepts $r$ in the first phase, and define $t^A_{\textnormal{frag}} = \max(t^A, t^A_{\textnormal{root}}+s-1)$. When deciding whether to accept $x_s$, node $A$ will make the decision as if $x_s$ were sent in round $t_{\textnormal{frag}}^A$  (instead of in round $t^A$).
The exact reasoning behind this $t_{\textnormal{frag}}^A$ term is slightly complex. For the lack of space, instead of going through a lengthy example here, we directly prove the correctness of such a design later.

\section{Complete Pseudocode for \tool}
\label{sec:pseudocode}

\Paragraph{The PoS setting.}
Section~\ref{sec:design} assumed a permissioned setting. The actual \tool protocol is designed for a
permissionless PoS setting. With PoS, each node holds some coins
(i.e., stakes). For each \tool invocation, Section~\ref{sec:blockchain} later will choose $m$
random coins among all these coins. The nodes holding those coins then
become committee members in \tool. A node $B$ may hold $x\ge 1$
chosen coins.
In such a case, 
$B$'s signature will be viewed as being equivalent to $x$ signatures from
$x$ different committee members. We also call $x$ as the {\em weight}
of $B$. 
Among the $m$ chosen coins, the node holding the first chosen coin will further 
be the broadcaster in \tool. The information
regarding which coins are chosen will be public --- specifically, they are chosen by some random beacon, which is periodically computed and released. Hence all parties
know the public keys (but not necessarily IP addresses) of all the committee members, each time before \tool is 
invoked.\footnote{We will explain later that each epoch in \codename computes a beacon to select the committees in the next epoch. Hence our design allows a {\em mildly-adaptive} adversary as in \cite{rapidchain, snow-white,ouroboros} --- namely, if it takes multiple epochs for the adversary to adaptively corrupt nodes, then the adversary will not be able to cherry-pick the committee members to corrupt, after seeing the beacon and before the committee members have done their work.}

\Paragraph{Signature aggregation.}
\tool uses signature aggregation to reduce signature size, as an optimization. 
One suitable signature aggregation scheme is the $\mathpzc{MSP\text{-}pop}$ scheme using BLS381, which gives aggregate signatures of size only $96$ bytes~\cite{multisig-bdn18}.
The $\mathpzc{MSP\text{-}pop}$ scheme requires certain public parameters, which can easily be published in the genesis block of \codename. Each node can generate public keys independently and non-interactively, as and when needed, based on these public parameters. $\mathpzc{MSP\text{-}pop}$ additionally requires a proof-of-possession for each public key.
In \codename, we simply require a node to add a transaction containing this proof to the blockchain, before it is allowed to be a committee member.
In each invocation of \tool, the possible signers are all the $m$ committee members for that invocation.
Hence for each aggregate signature, an $m$-bit vector suffices to indicate which of the $m$ members are signers. 

Consider any Merkle root $x$ or fragment $x$. In our pseudo-code, the
set ${\tt all\_sig}$ keeps track of all the aggregate signatures seen
by a node so far. Note that ${\tt all\_sig}$ may contain multiple
aggregate signatures for $x$, since we do not combine multiple
aggregate signatures into one. We use $\sigma(x)$ to denote the aggregate signature
for $x$ whose signers have the largest total weight, with arbitrary
tie-breaking, among all aggregate signatures in ${\tt
  all\_sig}$. If there is no aggregate signature for
$x$ in ${\tt all\_sig}$, we define $\sigma(x)=\emptyset$.
We use $|\sigma(x)|$ to denote the total weight of the signers in
$\sigma(x)$. We use $\sigma(x).{\tt add\_my\_sig}()$ to
denote the aggregate signature obtained by adding the invoking node's
signature to $\sigma(x)$.
If the invoking node is already a signer in
$\sigma(x)$, then $\sigma(x).{\tt add\_my\_sig}() = \sigma(x)$.

\setlength{\textfloatsep}{8pt}\setlength{\floatsep}{5pt}\begin{algorithm}[t]
	\small
	\caption{\tool (\textbf{Parameters}: $m, d, s$)}
	\label{alg_top}
	\hspace*{\algorithmicindent} 
\begin{algorithmic}[1]
		\State ${\tt all\_root} \leftarrow \emptyset$; // all received (Merkle) roots
		\State ${\tt all\_push} \leftarrow \emptyset$; // all pushes done so far for roots
		\State ${\tt root\_accepted} \leftarrow \emptyset$; // roots accepted so far
		\State $t_{\textnormal{root}}\leftarrow \infty$; // round number when first root accepted 
		\State ${\tt all\_frag} \leftarrow \emptyset$; // received fragments
		\State $\tt frag\_accepted \leftarrow false$; // last fragment has been accepted?
		\State ${\tt all\_sig} \leftarrow \emptyset$; // received signatures on roots and fragments
		\State
		\If {I am the broadcaster}
		\State break the object (to be broadcast) into $s-1$ fragments;
		\State pick a random nonce as the last fragment (i.e., $s$th fragment); 
		\State let $r$ be the Merkle root of all these $s$ fragments;
		\State add the Merkle proof into each fragment;
		\State ${\tt all\_root} \leftarrow {\tt all\_root} \cup \{r\}$;
		\State ${\tt all\_frag} \leftarrow {\tt all\_frag} \cup \{\text{the }s\text{ fragments}\}$;
\EndIf
		\State
		\For{$t$ from $0$ to $2dm+s-1$ (both inclusive)}\label{line_forloop}
		\State receive messages from all neighbors;\label{line_receive}
		\State discard those received Merkle roots whose aggregate signatures do not contain the broascaster
               as a signer; \label{line_broadcast_sign}
		\State add received Merkle roots to $\tt all\_root$; 
		\State add received fragments to $\tt all\_frag$;
		\State add received aggregate signatures to $\tt all\_sig$;\label{line_receive_end}
		\State ForwardMerkleRoot(); ForwardFragment();\label{line_invoke_sub}
		\State wait until the current round $t$ ends;\label{line_round_wait}
		\EndFor\label{line_forloop_end}
		\State
		\If{$(|{\tt root\_accepted}| = 1) \wedge ({\tt frag\_accepted} = {\tt true})$}  \label{line_final}
		\State \Return the object by combining the fragments (in $\tt all\_frag$) that correspond to the (single) Merkle root in ${\tt root\_accepted}$; // we will prove that there are exactly $s$ such fragments 
		\label{line_returnobj}
		\Else \hspace*{0mm} \Return $\bot$;
		\EndIf
	\end{algorithmic}
\end{algorithm}

\begin{algorithm}
	\small
	\caption{ForwardMerkleRoot()}\label{alg_root}
	\begin{algorithmic}[1]
		\setcounter{ALG@line}{31}
		\State {\bf if} $|{\tt all\_root}|\le 1$ {\bf then} ${\tt top\_root}\leftarrow {\tt all\_root}$;\label{line_root_1root}
		\State {\bf else} 
${\tt top\_root} \leftarrow \{r_1, r_2\}$ such that $|\sigma(r_1)|\ge |\sigma(r_2)|\ge |\sigma(r)|$ for all $r \in {\tt all\_root}$; // tie-breaking can be done arbitrarily
		\label{line_root_2roots}
		\For{each $r \in {\tt top\_root}$}\label{line_root_pick}
				\If{(I am in committee) and ($2d|\sigma(r)|\geq t$)}\label{line_root_comm}
					\State ${\tt all\_sig} \leftarrow {\tt all\_sig} \cup \{\sigma(r).{\tt add\_my\_sig()}\}$ \label{line_root_addsig}; \State ${\tt root\_accepted}\leftarrow {\tt root\_accepted} \cup \{r\}$;
					\label{line_committee_accept_root}
					\State $t_{\textnormal{root}} \leftarrow \min(t_{\textnormal{root}},t)$;\label{line_root_t0_comm}
				\EndIf
				\If{(I am not in committee) and ($2d|\sigma(r)|\geq t+d$)}\label{line_root_noncomm}
					\State ${\tt root\_accepted}\leftarrow {\tt root\_accepted}\cup \{r\}$;
					\label{line_noncommittee_accept_root}
					\State $t_{\textnormal{root}} \leftarrow \min(t_{\textnormal{root}},t)$;\label{line_root_t0_noncomm}
				\EndIf
			\State send $r$ and $\sigma(r)$ to all my neighbors;\label{line_root_push}
			\State let $p$ be the push corresponding to the above send;
			\State $p.{\tt score}\leftarrow 2d|\sigma(r)|-t$;\label{line_root_calculate}
			\State ${\tt all\_push} \leftarrow {\tt all\_push} \cup \{p\}$;
		\EndFor
\end{algorithmic}
\end{algorithm}

\begin{algorithm}
	\small
	\caption{ForwardFragment()}\label{alg_frag}
	\begin{algorithmic}[1]
		\setcounter{ALG@line}{48}
		\State {\bf if} ${\tt all\_push} = \emptyset$ {\bf then return}; \label{line_frag_nopush}
		\State let $p \in {\tt all\_push}$ be the push with largest $p.{\tt score}$; // tie-breaking can be done arbitrarily \label{line_frag_pick_root}
		\State let $x_1$ through $x_s$ denote the $s$ fragments corresponding to the Merkle root in $p$; // I may or may not have received all of them
		\State
		\If{(there exists any $i\in [1,s-1]$ such that $x_i\in {\tt all\_frag}$ and I have not forwarded $x_i$ before)}\label{line_frag_allfrag}
		\State pick any such $i$ and send $x_i$ to all my neighbors; \label{line_send_xi}
		\State {\bf return};
		\EndIf\label{line_frag_allfrag_end}
		\State
		\If{($x_i\in {\tt all\_frag}$ for all $i\in [1,s])$}\label{line_frag_receive_last}
		\State $t_{\textnormal{frag}} \leftarrow \max(t, t_{\textnormal{root}}+s-1)$;\label{line_frag_t2}
		\If{(I am in committee) and ($2d|\sigma(x_s)|\geq t_{\textnormal{frag}}-(s-1)$)}\label{line_frag_comm}
		\State ${\tt all\_sig} \leftarrow {\tt all\_sig}\,\,\cup$ $\{\sigma(x_s).{\tt add\_my\_sig}()\}$;\label{line_frag_addsig}\State ${\tt frag\_accepted} \leftarrow {\tt true}$;\label{line_frag_comm_set}
		\EndIf
		\If{(I am not in committee) and ($2d|\sigma(x_s)|\geq t_{\textnormal{frag}}-(s-1)+d$)}\label{line_frag_noncomm}
		\State ${\tt frag\_accepted} \leftarrow {\tt true}$;\label{line_frag_noncomm_set}
		\EndIf
		\State send $x_s$ and $\sigma( x_s)$ to all my neighbors; \label{line_frag_send}
		\EndIf
	\end{algorithmic}
\end{algorithm}

\Paragraph{Algorithm 1.}
Algorithm 1 is the main algorithm for \tool, run by every node in the system.
\tool has total $2dm+s$ rounds (Line~\ref{line_forloop} to \ref{line_forloop_end}). Here $2dm$ follows from the discussion in Section~\ref{sec:background4}, while the $s$ rounds comes from the delay/compensation design in Section~\ref{sec:design2}. Recall from Section~\ref{sec:model} that each node uses its local clock to keep track of the beginning of the execution (not explicitly shown in the pseudo-code) as well as the progress of each round (Line~\ref{line_round_wait}).

The two phases, one for the Merkle root and one for the object itself, run in parallel by the design in Section~\ref{sec:design4}. In each round, a node first adds the various received roots/fragments/signatures into the corresponding sets (Line~\ref{line_receive} to \ref{line_receive_end}). Next Line~\ref{line_invoke_sub} invokes ForwardMerkleRoot() and ForwardFragment() to do the processing for the first and second phase, respectively. 

After all these $2dm+s$ rounds, a node outputs a non-$\bot$ object iff i) the first phase has accepted exactly one Merkle root $r$, and ii) the second phase has accepted the last fragment corresponding to this Merkle root $r$.

\Paragraph{Algorithm 2.}
Algorithm 2 largely follows the design in Section~\ref{sec:design1}. In particular, Line~\ref{line_root_2roots} chooses two Merkle roots with the largest total weight of signers, and Line~\ref{line_root_calculate} computes the score of the push. 
At Line~\ref{line_root_comm}, a committee member accepts a root $r$ if the total weight of signers is at least $\lceil \frac{t}{2d} \rceil$. This matches the intuition in Section~\ref{sec:background2} and \ref{sec:design1}, since each round in Dolev-Strong protocol~\cite{dolev83} corresponds to $2$ rounds in Chan et al.'s protocol~\cite{chan20} (under clique setting), which in turn map to $2d$ rounds in OverlayBB. 
Similarly, a non-committee member accepts a root $r$ if the total weight of signers is at least $\lceil \frac{t+d}{2d} \rceil$ --- this simply means that within $d$ rounds, the root $r$ will reach some committee member, and will be accepted by that committee member. 

\Paragraph{Algorithm 3.}
Algorithm 3 corresponds to (one round of) the second phase in Section~\ref{sec:design2}. Section~\ref{sec:design2} explained that {\em conceptually}, the second phase uses one instance for each fragment, with total $s$ instances. But in each round, a node only sends message for at most one instance. Algorithm 3 chooses that instance (implicitly) at Line~\ref{line_send_xi} and \ref{line_frag_receive_last}, and then processes {\em only} that single instance. Hence Algorithm 3 remains single-threaded, despite that it actually implements $s$ parallel instances.

Line~\ref{line_frag_pick_root} follows the design in Section~\ref{sec:design4}, and uses the Merkle root contained in the most promising push as a guess for the final accepted root.
Line~\ref{line_frag_allfrag} to \ref{line_frag_allfrag_end} follow the forerunner rule in  Section~\ref{sec:design2}. Line~\ref{line_frag_t2} computes $t_{\textnormal{frag}}$ as discussed in Section~\ref{sec:design4}. Line~\ref{line_frag_comm} and \ref{line_frag_noncomm} check whether to accept $x_s$, based on $|\sigma(x_s)|$, $t_{\textnormal{frag}}$, and the $s-1$ compensation as discussed in Section~\ref{sec:design2}. The actual decision rule is similar to Line~\ref{line_root_comm} and Line~\ref{line_root_noncomm}.

\section{From Byzantine Broadcast to Blockchain}
\label{sec:blockchain}

So far we have presented our byzantine broadcast protocol, \tool. We now explain how to use \tool to build our blockchain, \codename. 

\Paragraph{Basic design.}
While largely neglected in the literature, blockchains can be relatively easily built from byzantine broadcast, in the following way. In  a blockchain protocol, every node aims to maintain an append-only sequence of blocks, and all the sequences on all the honest nodes need to be consistent with each other. 
For convenience, imagine that there is a sequence of {\em slots}, which initially are all empty. The nodes in the system invoke \tool periodically (e.g., every $98$ seconds), and a node uses the return value from the $i$th invocation of \tool as the block for the $i$th slot. 
For each invocation, the broadcaster is chosen randomly, who assembles a block and then uses \tool to disseminate that block to all nodes. We say that a block/slot is {\em confirmed} if its corresponding \tool invocation has ended. Note that since each \tool invocation can take much longer than $98$ seconds, the $(i+1)$th invocation will start before the $i$th invocation ends. This effectively results in {\em pipelined invocations}, and at any point of time, there can be many 
active \tool invocations. 
All these pipelined invocations can be implemented efficiently:
Our actual implementation will simply use a {\em single thread} to loop through all the pipelined invocations, and process them one by 
one.
We also stagger the round starting time of all these invocations, so that invocations near the end of the processing loop start their rounds a bit later.

The above basic framework is already used in Pass and Shi~\cite{pass2018thunderella}, which describes a theoretical design of a blockchain using the Dolev-Strong protocol~\cite{dolev83} (instead of \tool).
\codename also follows this basic framework, but there are several practical issues we need to overcome, as following.

\Paragraph{Choosing broadcaster/committee.}
We use two independent hash functions, ${\tt hash}_1$ and ${\tt hash}_2$, in \codename. 
The execution of \codename is divided into {\em epochs} (e.g., 1 epoch $=$ 1 day). 
In each epoch $i-1$, the nodes compute (explained later) a fresh public random beacon, denoted as ${\tt beacon}_i$, to be used in epoch $i$. 
Recall from Section~\ref{sec:model} that the {\em genesis block} contains an unbiased random beacon to be used in the very first epoch. Hence the genesis block bootstraps this sequential process of beacon generation.

We say that a slot/block is {\em in an epoch} if the starting time of the corresponding \tool invocation is in that epoch. Note that the ending time may be in the next epoch. 
Let $y$ be the last block that has been confirmed by the beginning of epoch $i-1$ (i.e., before the computation of ${\tt beacon}_i$ starts). Let the coin distribution $\mathbb{D}$ (i.e., which nodes hold which coins) be 
the coin distribution immediately after block $y$ (i.e., when we apply all the transactions in blocks 1 through $y$). For the $k$th slot in epoch $i$, every honest node uses ${\tt hash}_1(k|{\tt beacon}_i)$ as randomness to 
select $m$ coins (with replacement) from $\mathbb{D}$. The holders (in $\mathbb{D}$) of these coins then become the committee for the \tool invocation corresponding to that slot. The holder of the first coin selected will be the broadcaster. Since ${\tt beacon}_i$, $k$, and $\mathbb{D}$ are all public information in epoch $i$, all honest nodes will select the same broadcaster/committee, if their have the same sequences of blocks prior to epoch $i$.

\Paragraph{Generating beacons: Overview.}
Beacon generation is a central issue in PoS blockchains, and there have been a number of prior approaches~\cite{snow-white,algorand,ouroboros}. Some of these~\cite{algorand,ouroboros} do not work well under malicious majority. We build upon the approach in \cite{snow-white}. Roughly speaking, they~\cite{snow-white} observe that the beacon is eventually only used to select a committee. Assuming that a random oracle is used to select the committee based on the beacon, the committee will be bad (e.g., having no honest committee member) with only exponentially small probability. Hence a computationally-bounded adversary simply will have a hard time finding a bad beacon, even if it can choose any beacon it wants.

Directly adopting this idea in \codename does not lead to a practical solution, since the number of beacons the adversary can try is still huge. To make it work, we use a simple idea of {\em weak Proof-of-Work} ({\em weak PoW}), so that generating a valid beacon takes some computational effort.
Recall that Section~\ref{sec:model} assumed that the adversary's computational power is at most $100$ times of the computational power of the honest nodes. 

\setlength{\textfloatsep}{20pt}\begin{figure}[]
    \centering
    \includegraphics[width=\linewidth]{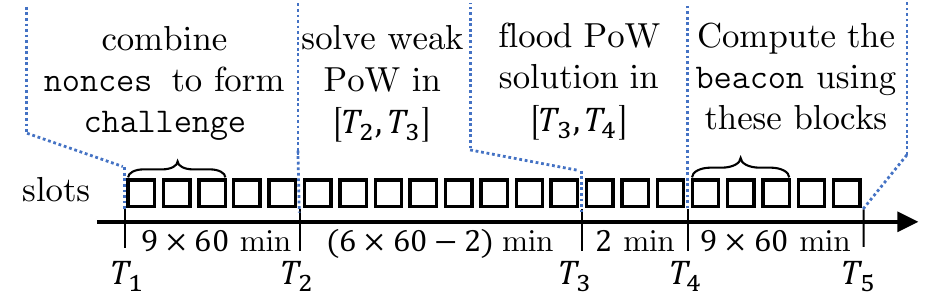}
    \vspace*{-2mm}
    \caption{Generating the beacon in an epoch. The number of slots in each portion is not to scale.}
    \label{fig_beacon_gen}
    \vspace*{-5mm}
\end{figure}

\Paragraph{Generating beacons: Details.}
To facilitate beacon generation, each block in \codename contains two additional fields: ${\tt nonce}$ and ${\tt candidate}$. The ${\tt nonce}$ field is just some uniformly random bits locally generated by the broadcaster, who is also the creator, of that block. A block is called an {\em honest block} if its broadcaster/creator is an honest node. 

Recall that the nodes generate ${\tt beacon}_i$ during epoch $i-1$. Let $T_1$ and $T_5$ be the start and end time, respectively, of epoch $i-1$  (Figure~\ref{fig_beacon_gen}). Let $T_2$ be the time when the first $\tau$ slots in epoch $i-1$ have been confirmed. At time $T_2$, all the ${\tt nonce}$ values in these $\tau$ blocks are concatenated (not XOR-ed), and used as the fresh ${\tt challenge}$ for the weak PoW in this epoch. Here $\tau$ is chosen such that with high probability, there is at least one honest block (and hence one honest
${\tt nonce}$) among those $\tau$ blocks. This ensures that the adversary cannot pre-compute PoW solutions before the beginning of epoch $i-1$.

The honest nodes will try solving the weak PoW, starting from time $T_2$ and until time $T_3$, where $T_3$ can be any value no larger than $T_4-d\delta$. Here $T_4$ is the latest time such that there are still $\tau$ slots (called {\em candidate-holding slots}) whose \tool invocations have not yet started at time $T_4$, but will end by time $T_5$. 
To solve the weak PoW, a node needs to find $x$ such that ${\tt hash}_2({\tt challenge}|x)$ has a certain number of leading zeroes. We also call such $x$ as a PoW solution.

At time $T_3$, every node will flood/multicast whatever PoW solution (if any) it has found. To avoid unnecessary bandwidth consumption, each node only sends/relays the {\em very first} PoW solution it finds/receives, and {\em ignores} all other PoW solutions. Since all the honest nodes form a connected component, if they collectively find at least one PoW solution by $T_3$, then every honest node $B$ must see some PoW solution by $T_4$. But different honest nodes may see different PoW solutions.

The period from $T_4$ to $T_5$ serves to enable the honest nodes to agree on one PoW solution.
To achieve this, from time $T_4$ to $T_5$, whenever a node is chosen as the broadcaster, it sets the ${\tt candidate}$ field in its block to be the single PoW solution that it previously sent/relayed, 
or ${\tt null}$ if it does not have any. 
Finally, at time $T_5$, a node examines all the $\tau$ candidate-holding slots, and picks the first block with a ${\tt candidate}$ value that is not ${\tt null}$. It then uses ${\tt hash}_2({\tt challenge}|{\tt candidate})$ as ${\tt beacon}_i$. Note that most likely, this ${\tt candidate}$ is from a malicious block and is set by the adversary. This is not a problem --- all we need is that i) the adversary do not have too many candidates to choose from, and ii) the honest nodes agree on this ${\tt candidate}$. 

If all the $\tau$ candidate-holding slots have ${\tt candidate} = {\tt null}$, then a node will set ${\tt beacon}_i$ to be ${\tt beacon}_{i-1}$, which means that the system simply reuses the old beacon and the corresponding old coin distribution $\mathbb{D}$. Note that this can only occur when either there is no honest block in the $\tau$ candidate-holding slots (whose probability can be tuned by adjusting $\tau$), or the honest nodes have found no PoW solution. 
Our analysis next will fully take into account all such possibilities.

\section{Security Analysis}
\label{sec:proof}

This section analyses the security guarantees, or more specifically, {\em safety} and {\em liveness/throughput}, of \codename. 

\subsection{Safety of \codename}
Recall that in \codename, each node maintains an append-only sequence of blocks. The $i$th block is simply the return value from the $i$th invocation of \tool. A node invokes \tool periodically (e.g., every 98 seconds), and each invocation takes the same amount of time to complete. 
Hence a node adds blocks, one by one, to the sequence.

{\em Safety} of \codename essentially means that for each $i\ge 1$ and after the $i$th invocation of \tool returns, the $i$th block on all honest nodes should always be the same.
This is also sometimes called the {\em consistency} or {\em agreement} property of the blockchain. To eventually prove such safety guarantee of \codename, the following lemma first summarizes the properties of \tool, whose proof is deferred to Appendix~\ref{app:theagreement}:
\begin{theorem}[{restate=[name=Restated]objagree}]
	\label{thm_agreement}
	{\bf [guarantees of \tool{}]} 
	In Algorithm~\ref{alg_top}, if the committee has at least one honest member, then
	\begin{itemize}
		\item All honest nodes must return the same object.
		\item If the broadcaster is honest, then all honest nodes must return the object broadcast by the broadcaster.
	\end{itemize} 
Finally, regardless of the committee, Algorithm~\ref{alg_top} always returns within $2dm+s$ rounds.
\end{theorem}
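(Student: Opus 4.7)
The plan is to establish the three claims separately, with the bulk of the work going into agreement. Termination is essentially immediate: the main \textbf{for} loop at Line~\ref{line_forloop} runs from $t=0$ to $2dm+s-1$ and each iteration completes within one round by the wait at Line~\ref{line_round_wait}, so Algorithm~\ref{alg_top} always returns within $2dm+s$ rounds regardless of the adversary's behavior.

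For agreement, I would decompose along the two-phase structure and first prove a ``root-agreement'' lemma: at the end of the protocol, either every honest node's ${\tt root\_accepted}$ contains the same single root $r^\ast$, or every honest node ends with $|{\tt root\_accepted}| \ne 1$, so by the guard at Line~\ref{line_final} all honest nodes return $\bot$. The central invariant, following Section~\ref{sec:design1}, is that the pushes done by any honest node $B$ control the fate of every ``less promising'' push in the network. Concretely, I would prove by induction on $t$ that if some honest committee member sees a root $r$ with $2d|\sigma(r)| \ge t$ in round $t$ (and accepts it via Line~\ref{line_root_comm}), then by round $t + 2d$ every honest committee member sees $r$ with weight sufficient to accept. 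The key ingredients are (i) each extra signature extends the ``lifespan'' by $2d$ rounds while honest multi-hop propagation loses at most $d$ rounds, and (ii) the rule at Line~\ref{line_root_2roots} forwards the top two roots so that no accepting push can be starved from $B$'s schedule; forwarding two rather than one is exactly what is needed because a conflict already requires two accepted roots. The ``at least one honest committee member'' hypothesis is used to anchor the induction.

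For fragment agreement, condition on Phase~1 having delivered the same single $r^\ast$ to all honest nodes. I would then show ${\tt frag\_accepted}$ is identical across honest nodes for the fragments of $r^\ast$. This is where the classic result of \cite{topkis85} enters: along any honest propagation path, the last fragment $x_s$ can be delayed by at most $s-1$ rounds relative to a no-delay schedule, because each unit of delay at a node $A$ corresponds to $A$ transmitting a competing fragment which is thereby removed from downstream nodes' future queues. Combined with the forerunner rule at Line~\ref{line_frag_allfrag} and the fixed compensation of $s-1$ appearing in the guards $2d|\sigma(x_s)| \ge t_{\textnormal{frag}} - (s-1)$ at Lines~\ref{line_frag_comm} and \ref{line_frag_noncomm}, this bound restores the Dolev--Strong-style invariant for $x_s$: whenever an honest node accepts $x_s$, either $x_s$ has enough residual lifespan after compensation to reach every other honest committee member in time, or $x_s$ already carries a signature from an honest committee member that forwarded it early enough. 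The auxiliary $t_{\textnormal{root}} + s - 1$ term folded into $t_{\textnormal{frag}}$ at Line~\ref{line_frag_t2} handles the case where a node accepts $r^\ast$ late in Phase~1 but still needs to accept $x_s$; I would check that this shift preserves the invariant without stranding late non-committee members. Once ${\tt frag\_accepted}$ is consistent, agreement on the return value follows because a node returns a non-$\bot$ object only when a single root has been accepted, and all $s$ fragments needed to reassemble the object are in ${\tt all\_frag}$ by the forerunner rule.

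Validity with an honest broadcaster is the easiest case: the broadcaster forms the unique root $r$ in round~0, attaches its signature, and transmits it together with the fragments. Every honest committee member sees $r$ with the broadcaster's signature, and under the ``honest committee member'' hypothesis no conflicting root can accumulate enough weight to coexist, so $r^\ast = r$; the fragments flow through Phase~2 and are accepted by the same lifespan argument. The hardest part will be the inductive step for root agreement together with the delay bound for $x_s$, because the interleaving of the two phases means that the effective arrival round of $x_s$ is not a simple function of Phase~1's acceptance time, and the subtle cases from Section~\ref{sec:design4} (mismatched Phase~1 guesses $r^A$ on different honest nodes, late-accepting non-committee members, and the extra $s$ rounds appended to the main loop) must all line up for the $2dm+s$ bound to be tight.
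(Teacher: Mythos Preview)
Your overall decomposition matches the paper's: termination from the loop bound, root agreement (the paper's Theorem~\ref{thm_root}) followed by fragment agreement (Theorem~\ref{thm_frag}), then validity. Two points, however, are genuine gaps rather than omitted detail.

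First, for fragment agreement you plan to invoke the Topkis delay bound directly on $x_s$. That bound only controls delay once every honest relay is forwarding fragments \emph{for the correct root}. Because the two phases run concurrently, an honest node $B$ may spend many rounds forwarding fragments for a wrong guess $r^B \ne r_0$, and those rounds do not count toward the $s-1$ budget you need. The paper closes this hole with a separate lemma (Lemma~\ref{lem_high_push}): once $A$ first accepts $r_0$ in round $i$, the ``most promising push'' selected at Line~\ref{line_frag_pick_root} on any honest $B$ at honest distance $g$ must contain $r_0$ from round $i+g$ onward. Only after this lemma does the counting argument go through, and the $t_{\textnormal{root}}+s-1$ term in $t_{\textnormal{frag}}$ is there precisely so that at least $s$ rounds of \emph{correct-root} forwarding are guaranteed before the acceptance test. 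Your proposal names ``mismatched guesses'' as a subtle case but gives no mechanism to rule them out; without something playing the role of Lemma~\ref{lem_high_push}, the Topkis step does not apply.

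Second, your validity argument is incorrect as stated. You write that ``under the honest-committee-member hypothesis no conflicting root can accumulate enough weight to coexist,'' but with $f \ge \tfrac{1}{2}$ the malicious committee members can easily give a bogus root $r'$ weight up to $m-1$, which is more than enough to pass Lines~\ref{line_root_comm}--\ref{line_root_noncomm}. The paper's argument does not use weight at all here: it uses Line~\ref{line_broadcast_sign}, which discards any root not signed by the broadcaster. An honest broadcaster signs only its own root, so every honest node simply never processes any competing root. A smaller point: your root-agreement invariant should carry the disjunction ``$D$ accepts $r_0$ \emph{or} $D$ accepts two distinct roots,'' exactly because the top-two rule at Line~\ref{line_root_2roots} may displace $r_0$; you gesture at this but your stated inductive claim omits it.
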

Part of Theorem~\ref{thm_agreement} requires the committee to contain some honest member. Consider any slot in epoch $i$. Recall that the committee for that slot is chosen using ${\tt beacon}_i$. This ${\tt beacon}_i$ is generated in epoch $i-1$, and may be biased and influenced by the adversary: i) the adversary may find multiple PoW solutions in epoch $i-1$, and cherry-pick the one that it likes; ii) if the $\tau$ broadcasters in the first $\tau$ slots of epoch $i-1$ are all malicious, then the adversary can predict the PoW challenge before time $T_1$, and can pre-compute many PoW solutions; iii) if the honest nodes fail to find any PoW solution in epoch $i-1$ or if the $\tau$ broadcasters in the $\tau$ candidate-holding slots in epoch $i-1$ are all malicious,
then ${\tt beacon}_{i-1}$ may be reused as ${\tt beacon}_i$, and ${\tt beacon}_{i-1}$ may already be biased; iv) if the committee for some slot in epoch $i-1$ contains no honest members, then the honest nodes may not even agree on the PoW challenges and on what ${\tt beacon}_i$ is.

We will later reason about the probabilities of various random events, such as whether the committee in Theorem~\ref{thm_agreement} contains some honest member. 
The adversary may influence such probabilities, by for example, biasing the beacons as explained above. The amount of such influence will depend on what strategy the adversary uses. We will carefully ensure that all our analyses (e.g., regarding the probabilities) hold, even under {\em the worst-case adversary that uses the optimal strategy}. In particular, our analyses will not make claims such as $\Pr[X]=y$, but only make claims such as $\Pr[X]\le y$. This just means that while $\Pr[X]$ may be different under different strategies of the adversary, it can never be above $y$. 

We now introduce some random variables.
Consider all the slots in the blockchain. Let $\rho$ be the number of slots in each epoch.
For all integer $j \in [1, \infty)$ and all $\lambda\ge 1$, let random variable $\mathcal{Z}_\lambda(j)$ denote the event that all of the following events happen in the execution of \codename:
\begin{itemize}
	\item For each slot $j'$ where $1\le j' \le j$, the committee for that slot contains at least one honest member.
	\item For each epoch $i'$ where $1\le i'\le \lceil\frac{j+1}{\rho}\rceil - 1$, no more than $\lambda$ different PoW solutions are seen by honest nodes
in epoch $i'$.
\end{itemize}
For all $\lambda\ge 1$, define $\mathcal{Z}_\lambda(0)$ to be an event that always occurs.

Roughly speaking, $\mathcal{Z}_\lambda(j)$ means that the execution is ``good'' up to slot $j$.
The first part in $\mathcal{Z}_\lambda(j)$ corresponds to the requirement in Theorem~\ref{thm_agreement}, and 
the second part serves to facilitate later reasoning about $\Pr[\mathcal{Z}_\lambda(j)]$ via a recursion.
In this second part, the $\lambda$ solutions ``seen by honest nodes'' can be i) PoW solutions for epoch $i'$ found by honest nodes, ii) PoW solutions for epoch $i'$ found by malicious nodes in epoch $i'$, and iii) PoW solutions for epoch $i'$ found by malicious nodes before epoch $i'$ started (if the PoW challenge is not fresh). 

We now formally state the safety guarantee of \codename:
\begin{theorem}
	\label{the:bcubesafety}
	{\bf [safety guaranteed in ``good'' execution]} 
	For any given $\lambda\ge 1$ and $j\ge 1$, if $\mathcal{Z}_\lambda(j)$ occurs, then for each $j'$ where $1\le j' \le j$, all honest nodes in \codename must always have the same block in slot $j'$ once the \tool invocation for slot $j'$ has completed.
\end{theorem}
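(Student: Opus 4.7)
The plan is to prove Theorem~\ref{the:bcubesafety} by strong induction on $j'$, showing for each $j'\in[1,j]$ that once the $j'$-th \tool invocation has returned, all honest nodes hold the same block in slot $j'$. The inductive step hinges on reducing block agreement at slot $j'$ to (i) agreement among honest nodes on the committee and broadcaster used for the $j'$-th \tool invocation, and (ii) the fact that this committee contains at least one honest member, at which point Theorem~\ref{thm_agreement} immediately finishes the step by forcing a common return value.

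Part (ii) is essentially free: the first clause in the definition of $\mathcal{Z}_\lambda(j)$ guarantees that the committee for every slot $1,\ldots,j$ contains at least one honest member, so in particular the committee for slot $j'$ does. For part (i), I would unpack how honest nodes compute that committee: they apply ${\tt hash}_1(k\,|\,{\tt beacon}_i)$ to the coin distribution $\mathbb{D}$ inherited from the last block confirmed before epoch $i-1$ started, where $i$ is the epoch containing slot $j'$. Both of these inputs trace back to blocks in slots strictly less than $j'$, so the inductive hypothesis yields agreement on $\mathbb{D}$ directly, and reduces agreement on ${\tt beacon}_i$ to agreement on the ${\tt nonce}$ and ${\tt candidate}$ fields of those earlier blocks.

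To close this last reduction, I would trace the beacon-generation rule from Section~\ref{sec:blockchain}: ${\tt beacon}_i$ is a deterministic function of (a) the concatenation of ${\tt nonce}$ fields in the first $\tau$ slots of epoch $i-1$ (forming ${\tt challenge}$) and (b) the first non-${\tt null}$ ${\tt candidate}$ among the $\tau$ candidate-holding slots of epoch $i-1$, falling back to ${\tt beacon}_{i-1}$ if all such candidates are ${\tt null}$. Every block this formula touches lives in slots $<j'$, so the inductive hypothesis ensures all honest nodes see identical ${\tt nonce}$ and ${\tt candidate}$ fields and therefore compute the same ${\tt beacon}_i$; the fallback case is handled by a secondary induction on the epoch index, grounded at the genesis beacon from Section~\ref{sec:model}. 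The base case $j'=1$ is then immediate, since ${\tt beacon}_1$ and the initial $\mathbb{D}$ come directly from the genesis block and are known to all nodes.

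The main subtlety I expect is the bookkeeping needed to confirm that every block feeding the beacon and committee computation for slot $j'$ really does lie in a strictly earlier slot, because \tool invocations are pipelined and an epoch's \tool invocations can finish well after the nominal epoch boundary. I would therefore verify, using the timing definitions of $T_1,\ldots,T_5$, that the prefix up to the last block confirmed by the start of epoch $i-1$ consists only of slots in epochs $\le i-2$, and that the first $\tau$ slots and the $\tau$ candidate-holding slots of epoch $i-1$ are all confirmed (hence fixed by the inductive hypothesis) before the $j'$-th \tool invocation's committee needs to be determined. It is worth noting that the $\lambda$ component of $\mathcal{Z}_\lambda(j)$ plays no role in the safety argument itself; it enters only when later bounding $\Pr[\mathcal{Z}_\lambda(j)]$ in the probabilistic part of the analysis.
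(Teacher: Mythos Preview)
Your proposal is correct but substantially more elaborate than the paper's own proof. The paper dispatches Theorem~\ref{the:bcubesafety} in three sentences: it reads off directly from the first clause of $\mathcal{Z}_\lambda(j)$ that the committee for slot $j'$ contains an honest member, and then applies Theorem~\ref{thm_agreement} to conclude that all honest nodes return the same object from that \tool invocation. No induction on $j'$, no unwinding of the beacon-generation procedure, and no argument about agreement on $\mathbb{D}$ or ${\tt beacon}_i$ appears.

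The difference in approach is that the paper treats ``the committee for slot $j'$'' as a single well-defined object already when defining $\mathcal{Z}_\lambda(j)$, so the question of whether honest nodes agree on who the committee is never enters the safety proof; that bookkeeping is implicitly pushed into the definition of the event and into the later probabilistic analysis (Theorem~\ref{the:extension}). Your induction explicitly establishes this agreement from the block-level consistency of earlier slots, which is a legitimate and arguably more rigorous reading, but it is extra work relative to what the paper actually does. Your closing observation that the $\lambda$ clause plays no role in the safety argument is exactly right and matches the paper.
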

\begin{proof}
	Consider any given $j'$ where $1\le j' \le j$.
	Then $\mathcal{Z}_\lambda(j)$ occurring means that the committee for slot $j'$ has some honest member. For any given node, the $j'$-th block in its blockchain is simply the return value of the \tool invocation on that node for the $j'$-th slot. By Theorem~\ref{thm_agreement}, such return value must be the same on all honest nodes.
\end{proof}

Theorem~\ref{the:bcubesafety} guarantees the safety of \codename in ``good'' executions, but does not tell us the likelihood of the execution being ``good''. Theorem~\ref{the:extension} 
next shows that conditioned upon the execution being ``good'' up to slot $j-1$, with high probability, it continues to be ``good'' up to slot $j$. 
Theorem~\ref{the:extension} is based on the following parameterization of \codename: We set $T_1$ through $T_5$ to match the respective durations in Figure~\ref{fig_beacon_gen}, and we set the weak PoW difficulty so that the honest nodes on expectation obtain two PoW solutions from $T_2$ to $T_3$. Changing these parameters will only affect the two constants ``$0.86$'' and ``$807$'' in the theorem. Also, Theorem~\ref{the:extension} assumes that the adversary cannot adaptively corrupt honest nodes. Appendix~\ref{app:theextension} will explain that the negative effect of adaptive corruption is easily bounded, as long as the adaptivity is sufficiently ``mild''. 

\begin{theorem}[{restate=[name=Restated]theextension}]
	\label{the:extension}
	{\bf [``good'' execution occurs w.h.p.]}
	Consider any constant $f \le 0.99$, and any positive integers $\lambda$ and $j$.  If $\Pr[\mathcal{Z}_\lambda(j-1)]> 0.9$, then conditioned upon $\mathcal{Z}_\lambda(j-1)$, we must have:\footnote{We use Poisson distribution to approximate binomial distributions here. } \vspace*{1mm} \\ 
 \hspace*{2mm}$\Pr[\mathcal{Z}_\lambda(j)]\,\ge\, 1-\frac{\lambda f^m}{0.9(0.86-f^\tau)} - \frac{\lambda f^\tau}{0.9(0.86-f^\tau)} - {\tt Pois}(807, \lambda)$\vspace*{2mm}\\ 
 \hspace*{18mm}$=\, 1 - \lambda e^{-\Omega(m)} - \lambda e^{-\Omega(\tau)} - e^{-\Omega(\lambda)}$\vspace*{1mm}\\
 Here ${\tt Pois}(807,\lambda)$ is defined to be $\Pr[X>\lambda]$, where $X$ follows a Poisson distribution with mean $807$.
\end{theorem}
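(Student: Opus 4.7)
The plan is to bound $\Pr[\neg \mathcal{Z}_\lambda(j) \mid \mathcal{Z}_\lambda(j-1)]$ by identifying precisely which new bad events the extension from slot $j-1$ to slot $j$ introduces, and then union-bounding over them. Comparing the two definitions, $\mathcal{Z}_\lambda(j)$ adds two requirements over $\mathcal{Z}_\lambda(j-1)$: (E1) the committee for slot $j$ contains at least one honest member; and (E2), \emph{only when} $j=k\rho$ for some integer $k$ (so that $\lceil(j+1)/\rho\rceil-1$ jumps from $k-1$ to $k$), no more than $\lambda$ PoW solutions are seen by honest nodes in epoch $k$. The three terms on the right-hand side of the theorem correspond to two sub-bounds on E1 plus one tail bound for E2.

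For E1, let $i=\lceil j/\rho\rceil$, so the committee is selected by seeding ${\tt hash}_1$ with ${\tt beacon}_i$, which was generated during epoch $i-1$. First I would split into two sub-cases according to whether the PoW challenge in epoch $i-1$ is ``fresh,'' i.e.\ whether at least one of the first $\tau$ broadcasters of epoch $i-1$ is honest. In the fresh sub-case, any beacon the adversary can publish must come from a PoW solution found after $T_2$ of epoch $i-1$; by $\mathcal{Z}_\lambda(j-1)$ at most $\lambda$ such solutions exist, and for each the random oracle plus the $f$-stake bound imply that all $m$ chosen coins belong to malicious owners with probability at most $f^m$, contributing $\lambda f^m$. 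In the unfresh sub-case, the probability that all $\tau$ first broadcasters are malicious is at most $f^\tau$, contributing the $\lambda f^\tau$ term after accounting for the adversary's freedom to pick among pre-computed solutions. Dividing by $\Pr[\mathcal{Z}_\lambda(j-1)]>0.9$ produces the $1/0.9$ factor, and the extra $1/(0.86-f^\tau)$ factor arises from conditioning away the rare event that honest nodes fail to produce any PoW solution in the fresh branch (probability at most $e^{-2}\approx 0.135$, leaving roughly $0.86$, minus an $f^\tau$ correction for the unfresh branch).

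For E2, I would enumerate and sum the expected numbers of PoW solutions visible to honest nodes in epoch $k$: the solutions honest nodes themselves produce between $T_2$ and $T_3$ (expected $2$ by the theorem's parameterization); the solutions the adversary produces with its $100\times$ compute over the relevant window $[T_2,T_4]$; and, in the unfresh sub-case, pre-computed solutions admissible via the ${\tt candidate}$ field of candidate-holding slots. Using the durations implied by Figure~\ref{fig_beacon_gen}, these expected counts add up to the constant $807$. Modeling the aggregate as Poisson with mean $807$ and applying the upper-tail bound at $\lambda$ yields ${\tt Pois}(807,\lambda)$. A final union bound combines the three terms, and the asymptotic form $1-\lambda e^{-\Omega(m)}-\lambda e^{-\Omega(\tau)}-e^{-\Omega(\lambda)}$ follows from $f\le 0.99$ being bounded away from $1$ together with standard Poisson tail estimates.

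The main obstacle I expect is the bookkeeping in the unfresh sub-case of E1: the adversary can simultaneously pre-compute many PoW solutions \emph{and} adaptively select which to publish after observing how each candidate ${\tt beacon}_i$ would shape the committee, so the random-oracle argument must isolate a single fresh committee draw per published beacon and must not allow amortization of pre-computation across the slot index $k$. A secondary subtlety is justifying the precise constants $0.86$ and $807$, which requires carefully unpacking the duration fractions in Figure~\ref{fig_beacon_gen} and the expected-two-honest-solutions parameterization, and replacing the underlying binomial counts by their Poisson approximations in a rigorous (worst-case-adversary) manner rather than heuristically.
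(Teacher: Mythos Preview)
Your high-level decomposition into E1 and E2 is right, but the roles you assign to the three terms and the origin of the $1/(0.86-f^\tau)$ factor are both off, and this is a genuine gap rather than a stylistic difference.

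First, the $\lambda f^\tau/[0.9(0.86-f^\tau)]$ term does \emph{not} come from an ``unfresh'' sub-case of E1. In the paper it bounds a separate event $\mathcal{W}_2$: that at least one of the first $\tau$ broadcasters \emph{of the new epoch $i$} is honest. This event is what guarantees the PoW challenge in epoch $i$ is fresh, and it is a prerequisite for your Poisson bound on E2 --- without it the adversary could have pre-computed unboundedly many solutions before $T_1$ and your ``enumerate and sum expected numbers'' argument collapses. So you need a $\mathcal{W}_2$-style event feeding into E2, and the $f^\tau$ term belongs there.

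Second, and more structurally, the factor $1/(0.86-f^\tau)$ is not obtained by a single conditioning as you describe. It is the sum of a geometric series over \emph{all past epochs} $i'\le i-1$: the beacon used in epoch $i$ may have been freshly generated in epoch $i'$ and then \emph{reused} through every intermediate epoch $g\in[i'+1,i-1]$. Reuse in epoch $g$ requires that either the honest nodes found no PoW solution (probability $\approx e^{-2}\approx 0.14$) or all $\tau$ candidate-holding broadcasters under that beacon are malicious (probability $\le f^\tau$), giving a per-epoch survival probability of at most $f^\tau+0.14$. Summing $\lambda f^m\cdot(f^\tau+0.14)^{i-i'-1}$ over $i'$ yields $\lambda f^m/(1-(f^\tau+0.14))=\lambda f^m/(0.86-f^\tau)$; the same geometric sum with $f^\tau$ in place of $f^m$ gives the $\mathcal{W}_2$ bound. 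Your one-step fresh/unfresh split in epoch $i-1$ never sets up this recursion over the beacon-reuse chain, so it cannot produce the stated denominator. The $1/0.9$ then enters via a Bayes-type step converting probabilities conditioned on $\mathcal{Z}_\lambda(i'\rho)$ to probabilities conditioned on $\mathcal{Z}_\lambda(j-1)$, since the latter event has probability at least $0.9$ by hypothesis.

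Your treatment of the Poisson constant $807$ is essentially on target (the paper gets it as $2$ expected honest solutions plus at most $805$ expected adversarial solutions over $[T_1,T_5]$), but it only applies once challenge-freshness is secured by the missing $\mathcal{W}_2$ event.
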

\begin{proof} See Appendix~\ref{app:theextension}. \end{proof}

\noindent 
Asymptotically, the error probability in the above theorem is exponentially small with respect to $m$, $\tau$, and $\lambda$, which can all be viewed as security parameters.
As a concrete example, with $f= 0.7$, $\lambda = 1000$, $m \ge 79$, and $\tau \ge 91$, the above theorem gives\footnote{Conceptually, Theorem~\ref{the:bcubesafety} focuses on the error probability of a given committee (i.e., for the $j$th slot) in \codename. This is consistent with other analysis in the literature~\cite{elastico,omniledger,rapidchain,byzcoin}. If needed, one can easily translate such guarantees to the entire execution.}
$\Pr[\mathcal{Z}_\lambda(j)] \ge 1-2^{-30}$. Hence we use a committee size of $m=80$ in our later experiments when $f=0.7$.

\subsection{Liveness and Throughput of \codename}

For any given slot, {\em liveness} of \codename means that \codename should always eventually confirm a block for that slot. {\em Throughput} simply equals block size times the average number of blocks confirmed per second. In some sense, throughput captures the ``rate of liveness'', in terms of the number of bits confirmed per second. Note that each slot in \codename has a corresponding \tool invocation, which starts at a pre-determined time. The following theorem shows that once the invocation starts, within some well-defined time, we will have a confirmed block in that slot:
\begin{theorem}
	{\bf [liveness guaranteed]}
	At most $2dm+s$ rounds (or $(2dm+s)\delta$ time with $\delta$ being the round duration) after the start of the corresponding \tool invocation for a given slot in \codename, all honest nodes in \codename must have a confirmed block in that slot. 
\end{theorem}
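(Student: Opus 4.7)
The plan is to derive this liveness guarantee essentially as an immediate corollary of the last clause of Theorem~\ref{thm_agreement}. First, I would invoke that clause, which asserts that regardless of the committee composition (and hence regardless of whether any committee member is honest), Algorithm~\ref{alg_top} always terminates within $2dm+s$ rounds. Inspecting Algorithm~\ref{alg_top} confirms this structurally: the for-loop at Line~\ref{line_forloop} runs $t$ from $0$ to $2dm+s-1$ inclusive, i.e.\ exactly $2dm+s$ iterations of one round each, and Line~\ref{line_round_wait} synchronises each iteration to the local round boundary; immediately after the loop the node returns either a non-null object (Line~\ref{line_returnobj}) or $\bot$ (the else branch).

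Next, I would observe that the statement only asks for the \emph{existence} of a confirmed block in the slot, and recall from Section~\ref{sec:blockchain} that the confirmed block for the $i$th slot on a node is by definition the return value of the $i$th \tool invocation on that node, which is allowed to be $\bot$. Consequently, no further argument about when a non-null object is produced is needed: once the loop terminates, the block for that slot is automatically confirmed on that node. The $(2dm+s)\delta$ time bound then follows from the per-round duration $\delta$ established in Section~\ref{sec:model}, applied from the pre-determined invocation start time tracked by each honest node's local clock.

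The only minor subtlety is that honest nodes start and step their rounds according to their own local clocks, which may differ by up to $\delta_2$. However, since $\delta = \delta_1 + \delta_2$ already absorbs the clock drift between neighbours, each honest node independently completes its $2dm+s$ rounds within $(2dm+s)\delta$ time of its own invocation start, which matches the theorem's per-node claim. I do not expect any genuine obstacle here: the proof is essentially a bookkeeping restatement of the termination clause of Theorem~\ref{thm_agreement}, combined with the definition of a confirmed block in \codename, so the writeup should be only a few lines long and contains no probabilistic or adversarial reasoning.
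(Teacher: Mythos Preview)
Your proposal is correct and matches the paper's approach exactly: the paper's proof is the single line ``Trivially follows from the fact that Algorithm~1 has exactly $2dm+s$ rounds,'' and your plan is just a more detailed unpacking of that same observation together with the definition of a confirmed block. There is nothing to add or change.
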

\begin{proof}
	Trivially follows from the fact that Algorithm 1 has exactly $2dm+s$ rounds.
\end{proof}

Due to space constraint, we defer our throughput analysis of \codename to Appendix~\ref{app:throughputanalysis}. 
Appendix~\ref{app:throughputanalysis} first derives an upper bound on the total number of bits that each honest node needs to send in each round. This upper bound will hold under {\em all possible strategies of the adversary and all possible randomness outcomes}. Using this upper bound, 
Appendix~\ref{app:throughputanalysis} then shows that, under practical parameters, \codename has a throughput of $\mathbb{T} \approx \frac{\mathbb{B}}{2w} = \Theta(\frac{\mathbb{B}}{w})$ and a TTB ratio of $\mathbb{R} \approx \Theta(\frac{1}{w})$.

 \section{Implementation and Experimental Results}
\label{sec:exp}

\Paragraph{Implementation.}
We have implemented \codename in Go and using TCP, except the following parts that have no effects on our experimental results: Since beacon generation from the weak PoW takes one epoch (e.g., one day), we did not implement the weak PoW or propagate the PoW solutions. (Propagating the PoW solutions has negligible cost, since each node only sends/relays one 20-byte PoW solution in each epoch.) We instead directly inject a random beacon. We still properly determine various parameters, such as committee size, based on our weak PoW design.
We did not implement transactions, and we fill each block with random bits. There is no stake transfer, and each node always holds one stake (coin). Finally, we will run up to $500$ \codename nodes on each physical machine. Due to CPU constraint, we did not implement aggregate signature signing/validation, and also did not implement secure hash function. We replace all of these with dummy functions. Appendix~\ref{appendix:crypto} 
will show, via a careful calculation, that {\em regardless of the strategy of the adversary and regardless of what messages the malicious nodes may send to the honest nodes}, under all settings in this section, 
in every second each honest \codename node only needs to do at most $152$ aggregate signature signing/validation operations, and at most $610$ secure hashes (for Merkle proof verification). 
Similarly, due to memory constraint, 
the $500$ \codename nodes on the same machine are implemented as separate threads in one Go process, instead of as $500$ separate Go processes. Of course, these threads do not interact with each other via the shared heap space.

\Paragraph{Experimental settings.}
We run our experiments on $21$ high-end PCs, each with $10$Gbps bandwidth, in a local-area network.
The first PC runs a single \codename node (with the maximum degree of $42$ --- see later). Each of the remaining $20$ PCs run $500$ \codename nodes (with the last PC running $499$ nodes), so that each \codename node has about $20$Mbps bandwidth. Altogether, this gives us total 10000 \codename nodes. Running one \codename node on the first PC allows us to directly measure the total network traffic on the Ethernet interface of that PC in every second, using the Linux bandwidth monitoring tool ${\tt bmon}$. Our measurement results in Appendix~\ref{app:sanity} confirm that a \codename node (even with the maximum degree of $42$) indeed never uses more than $20$Mbps bandwidth.

We construct the overlay topology in a similar way as in \cite{algorand,ohie20}: Each node $A$ keeps choosing random nodes to establish (undirected) edges to, until it manages to establish $20$ edges. To prevent $A$ from forming edges to the other nodes on the same machine as $A$ and hence bypassing the network, the random nodes are chosen from all the nodes on the other machines. 
To avoid having too many neighbors, each node stops accepting new edges after it has accepted $22$ edges from other nodes. Hence the node degrees range from $20$ to $42$, with the average being $40$. We set $\delta_1 = 10$s, $\delta_2 = 2$s, and $\delta = 12$s. We assume that all edges are good in our experiments. 
While we do not explicitly emulate wide-area message propagation delay, we 
expect such delay to be typically well below our $\delta_1$ value of $10$ seconds. (The messages in our experiments always have size no larger than $10$KB.) We observe that with the above construction, the honest subgraph typically has diameter of no more than $6$ (even if we uniformly randomly choose $0.7$ fraction of the nodes to be malicious). Hence we assume $d = 6$ in our experiments.

We consider $f$ ranging from $0.4$ to $0.7$. Since \codename focuses on malicious majority, we do not consider smaller $f$ values. Recall that larger $s$ (i.e., number of fragments) gives higher throughput but longer confirmation latency. To strike a balance, our experiments always use $s=800$. To achieve an error probability $\epsilon \le 2^{-30}$, and following  Theorem~\ref{the:extension}, we use $m = 35$, $45$, $55$, and $80$, for $f = 0.4$, $0.5$, $0.6$, and $0.7$, respectively. 
We use a block size of $2$MB in \codename, and an inter-block time of roughly $68$, $74$, $81$,  and $98$ seconds for $f = 0.4$, $0.5$, $0.6$, and $0.7$, respectively. These parameters are chosen such that based on the analysis in Appendix~\ref{app:throughputanalysis}, each node consumes no more than about 90\% of its $20$Mbps available bandwidth (even in the very worst-case). Note that here we use the exact version of the analysis in Appendix~\ref{app:throughputanalysis}, without applying any approximation such as $\mathbb{Y} \approx \frac{wl}{s}$.

\begin{figure}[]
	\centering
	\includegraphics[width=\linewidth]{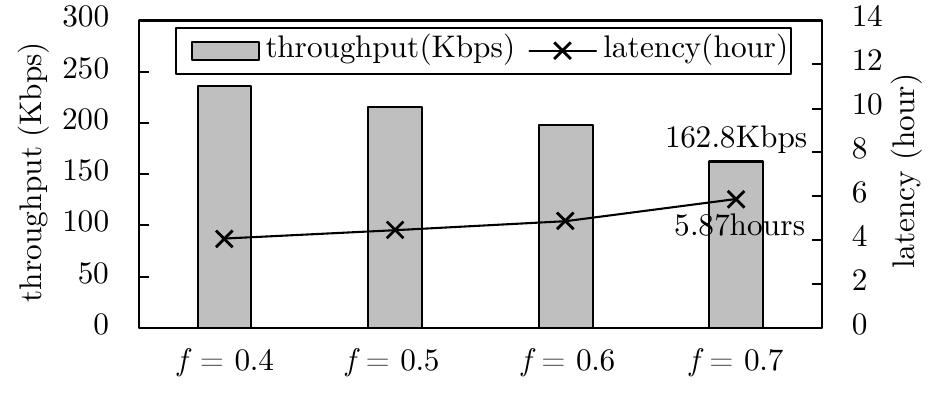}
\caption{End-to-end performance of \codename.}
	\label{fig:endtoendperformance}
	\vspace*{-5mm}
\end{figure}

\Paragraph{End-to-end performance.}
Figure~\ref{fig:endtoendperformance} plots \codename's transaction throughput and confirmation latency. As expected, the confirmation latency increases with $f$, since larger $f$ entails a larger committee size ($m$) and in turn more rounds in \tool. Similarly, the transaction throughput decreases with larger $f$ since as each invocation of \tool takes longer to finish, we need to correspondingly increase the inter-block time. This then decreases throughput. Nevertheless, even when $f=0.7$, \codename still achieves a throughput of about 163Kbps and a confirmation latency of less than $6$ hours. As explained in Section~\ref{sec:intro} where we used Bitcoin as a reference point, such performance is already  ``practically usable'': Bitcoin entails a confirmation latency of about $9.3$ hours to achieve $\epsilon \le 2^{-30}$ under $f=0.25$, and Bitcoin's throughput is about 14Kbps.

\begin{figure}[]
	\centering
	\includegraphics[width=\linewidth]{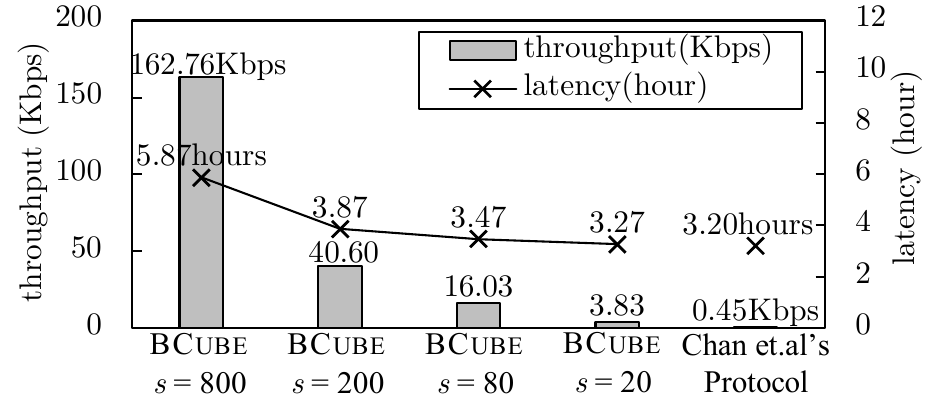}
	\vspace*{-2mm}
	\caption{Comparison of our protocol \codename and the state-of-the-art Chan et al.'s protocol~\cite{chan20}.}
	\label{fig_chan_comparison}
	\vspace*{-5mm}
\end{figure}

\Paragraph{Compare with state-of-the-art design.} 
There has been rather limited amount of prior work on designing blockchains for tolerating $f\ge 0.5$.
The current state-of-the-art approach is via Chan et al.'s protocol~\cite{chan20}.
Strictly speaking, Chan et al.'s protocol is a byzantine broadcast protocol, instead of a blockchain. But one could use Chan et al.'s protocol as the core to build a blockchain, in the same way as we use \tool to build \codename. To enable a direct comparison, we take \codename, and then replace \tool with Chan et al.'s protocol (as described in Section~\ref{sec:background2}), to obtain a blockchain based on their protocol. We use our own implementation of their protocol, since their work does not have implementation.
Section~\ref{sec:design1} explained that when running on multi-hop topologies, Chan et al.'s protocol would require infinite bandwidth if a malicious broadcaster keeps injecting conflicting messages. Our experiments for their protocol explicitly assume away this particular adversarial behavior --- doing so only makes the results for their protocol better.

Due to space constraints, we only present our results on Chan et al.'s protocol for $f=0.7$. Specifically, we measure the throughput/latency of Chan et al.'s protocol (i.e., the resulting blockchain), under the same setting as our \codename, such as 12-second round duration, same topology, around 98-second inter-block time, and a committee size of 80. We also use pipelined invocations for Chan et al.'s protocol, in the same way as we do in our protocol.
We set the block size to be such that in all cases (including all adversarial strategies and randomness), the bandwidth consumed by each node is no more than 90\% of the 20Mbps available bandwidth, which is the same constraint we imposed in the experiments of our \codename. Under such a constraint, the maximum block size we can use in the experiments for Chan et al.'s protocol is about $5.5$KB. 

Figure~\ref{fig_chan_comparison} compares the performance of our protocol and Chan et al.'s protocol, as observed in our experiments. Chan et al.'s protocol achieves a throughput of about $0.45$Kbps, while \codename achieves about $163$Kbps, which is over 350 times higher. 
Such large improvement primarily comes from the fact that \codename/\tool breaks each block into $s-1=799$ fragments and can delay the forwarding of individual fragments whenever needed, to avoid congestion in any given round. With some over-simplification, Chan et al.'s protocol can be viewed as having only a single fragment. This is also why their protocol can only use $5.5$KB block size, while we can support $2$MB block size. 

Using many fragments in \codename does increase the latency: 
Chan et al.'s protocol has a latency of $3.20$ hours, while ours is $5.87$ hours. To gain more insights, Figure~\ref{fig_chan_comparison} further presents the performance of \codename when using fewer fragments, with $s=200$, $s=80$, and $s=20$. In particular, with $s=20$ fragments, our latency is 3.27 hours, which is only $2.2$\% larger than their latency. Yet with $s=20$, we still achieve more than $850$\% of the throughput of their protocol, and can support block size of about $47$KB. Hence even if \codename is forced to provide almost the same latency as Chan et al.'s protocol, \codename still provides significantly higher throughput.

\section{Related Works}
\label{sec:related}

\Paragraph{Byzantine broadcast.}
Being a classic distributed computing problem, byzantine broadcast has been extensively studied. We will only focus on byzantine broadcast protocols~\cite{chan20,dolev83,ganesh16,hirt14,nayak20,tsimos20,wan20} that can tolerate $f \ge \frac{1}{2}$. Most of these are actually theoretical designs without implementation. 
Section~\ref{sec:intro} and \ref{sec:background2} already discussed \cite{chan20,dolev83,tsimos20}.
The protocols from \cite{hirt14,ganesh16,nayak20} all require direct point-to-point communication on a clique, and hence does not work for multi-hop topologies. Furthermore, these protocols are designed for a permissioned setting with a fixed set of $n$ nodes. The following nevertheless still reviews the techniques used in \cite{hirt14,ganesh16,nayak20}, and characterizes their TTB ratios.

Hirt and Raykov's protocol~\cite{hirt14} breaks the object into $n$ fragments, each with $\frac{l}{n}$ size, to optimize for communication complexity. For each fragment and each node, they invoke a smaller black-box byzantine broadcast protocol, resulting in total $n^2$ sequential invocations. 
Doing so enables later invocations to benefit from information collected during earlier invocations. 
The protocol takes total at least $n^2$ rounds. In some rounds, 
a node needs to send one fragment (i.e., $\frac{l}{n}$ bits). 
Hence the maximum $l$ the protocol can support, given $\mathbb{B}$ available bandwidth, is 
$l_0 = \mathbb{B}\delta n$. 
We thus have $\mathbb{T} \le \frac{l_0}{n^2 \delta} = \frac{\mathbb{B}\delta n}{n^2 \delta} = 
\frac{\mathbb{B}}{n}$ and $\mathbb{R} = \mathbb{T}/\mathbb{B} \le \frac{1}{n}$.
Ganesh and Patra's protocol~\cite{ganesh16} improves upon \cite{hirt14}, and reduces the time complexity to about $n$ rounds. In \cite{ganesh16}, some rounds are used for propagating the fragments. In each such round, a node may need to send its fragment to up to $n$ nodes, incurring $n\times \frac{l}{n} = l$ bits of communication. 
Hence the maximum $l$ the protocol can support, given $\mathbb{B}$ available bandwidth, is 
$l_0 = \mathbb{B}\delta$. 
We thus have $\mathbb{T} \le \frac{l_0}{n \delta} =
\frac{\mathbb{B}}{n}$ and $\mathbb{R} = \mathbb{T}/\mathbb{B} \le \frac{1}{n}$.
In comparison to \cite{ganesh16,hirt14}, \tool also breaks an object into fragments, but for a different purpose of improving throughput. Because of this, most issues in \tool such as delaying and compensation are not relevant to \cite{ganesh16,hirt14}.

Nayak et al.'s protocol~\cite{nayak20} further improves the communication complexity of  \cite{ganesh16}.
In their protocol, instead of sending the object to all other nodes directly, a node uses erasure coding and sends one fragment to each of the $n$ nodes. 
The $n$ nodes will then each forward its received fragment to all other nodes. Each fragment has size at least $\frac{l}{n}$, and hence a node needs to send at least $n\cdot \frac{l}{n} = l$ bits in some rounds. The maximum $l$ the protocol can support is then $l_0 = \mathbb{B}\delta$.
Their protocol has total $fn+1$ rounds. This leads to 
$\mathbb{T} \le \frac{l_0}{(fn+1) \delta}$ and $\mathbb{R} = \mathbb{T}/\mathbb{B} < \frac{1}{fn}$.
Their idea of using erasure coding is largely orthogonal to the techniques in \tool.

Finally, Wan et al.~\cite{wan20} recently propose a constant-round byzantine broadcast protocol for tolerating $f \ge \frac{1}{2}$.
When adapted to our multi-hop setting, their protocol takes at least $d$ rounds and in each round, a node may need to send the $l$-bit object to its $w$ neighbors. 
Hence the maximum $l$ the protocol can support is 
$l_0 = \frac{\mathbb{B}\delta}{w}$. In turn, $\mathbb{R} = \mathbb{T}/\mathbb{B} = \frac{l_0}{d \delta}/\mathbb{B} = \Theta(\frac{1}{d w})$.
In comparison, our \tool has $\mathbb{R} = \Theta(\frac{1}{w})$. More importantly, their protocol further needs each node to send up to $n^2$ bits or more (for additional protocol information) to each of its $w$ neighbors. Hence their protocol achieves $\mathbb{R} = \Theta(\frac{1}{dw})$ only when $l$ reaches the order of $wn^2$, which translates to about $500$MB under our experimental parameters. The block size in blockchains is typically much smaller than $500$MB. 

\Paragraph{Blockchains.}
Most existing blockchains (e.g., \cite{solida,prism19,ohie20,snow-white,algorand,ouroboros,omniledger,elastico,rapidchain}) 
today can only tolerate $f < \frac{1}{2}$. By leveraging the ``reputation'' of the nodes, RepuCoin~\cite{jiangshan19} can tolerate temporary malicious majority --- namely, temporary spikes in $f$ (but not $f\ge \frac{1}{2}$ in general).
While blockchains can be built from byzantine broadcast, and hence tolerate $f\ge \frac{1}{2}$, this fact has been largely neglected in the literature. 
Pass and Shi~\cite{pass2018thunderella} mention the design of a blockchain based on the Dolev-Strong protocol~\cite{dolev83} for byzantine broadcast. 
As explained in Section~\ref{sec:intro}, using the 
Dolev-Strong protocol
will result in rather low throughput (e.g., $0.072$Kbps). Our contribution is exactly to overcome this central issue. Note that the main focus of \cite{pass2018thunderella} is not on tolerating $f\ge \frac{1}{2}$, but on providing fast transaction confirmation when a super majority of the users are honest. In addition, their work is mainly theoretical, with no implementation. 

\section{Conclusions}
We have presented \codename, the very first blockchain that can tolerate $f\ge \frac{1}{2}$, while achieving practically usable
transaction throughput and latency. At the core of \codename is our novel byzantine broadcast protocol \tool, which can achieve significantly better throughput than prior protocols. 
\codename still leaves many questions unanswered. For example, can we further improve its performance? 
Can we generalize beyond Proof-of-Stake? 
Can we offer progressive confirmation, as in Bitcoin, so that a transaction's likelihood of being confirmed grows with time, even before it is fully-confirmed? 
All these are interesting open questions for future research.

\section*{Acknowledgment}
We thank the anonymous IEEE Security \& Privacy reviewers for their detailed and helpful comments on this paper.

\section*{Disclosure by Authors}
Haifeng Yu is an Associate Professor in School of Computing, National University of Singapore (NUS). Haifeng is also a Co-PI of NUS CRYSTAL Centre, which is a blockchain-related research centre. Prateek Saxena is an Associate Professor in School of Computing, NUS. Prateek is also a Co-Director of NUS CRYSTAL Centre, and a co-founder of Zilliqa Research, which is related to blockchains. 


\appendices

\renewcommand{\thesectiondis}[2]{\Roman{section}:}

\vspace*{1mm}
\section{Proof for Theorem~\ref{the:extension}}
\label{app:theextension}

\theextension*
\begin{proof}
	All probabilities in this proof, unless otherwise mentioned, are conditioned upon $\mathcal{Z}_\lambda(j-1)$.
	We only prove the harder case where $j\ge 2$ and $\lceil \frac{j}{\rho}\rceil \ne \lceil \frac{j+1}{\rho}\rceil$. (In other cases, the second part in $\mathcal{Z}_\lambda(j)$ trivially follows from the second part in $\mathcal{Z}_\lambda(j-1)$, and hence the proof is similar but easier.)
	Let $i = \lceil\frac{j+1}{\rho} \rceil -1$. We define several random events:
	\begin{itemize}
		\item $\mathcal{W}_1$: The committee for slot $j$ contains at least one honest member.
		\item $\mathcal{W}_2$: Among the first $\tau$ slots of epoch $i$, where each slot has a corresponding committee and broadcaster, there exists at least one slot whose broadcaster is honest. 
		\item $\mathcal{W}_3$: From time $T_1$ through $T_5$ in epoch $i$, the honest nodes and the adversary combined find no more than $\lambda$ PoW solutions. (This does not include PoW solutions found by the adversary prior to $T_1$, for example, when the PoW challenge is not fresh.)
	\end{itemize}
	We will later prove that:
	\begin{eqnarray}
	\Pr[\mathcal{W}_1] &\ge& 1 - \frac{\lambda f^m}{0.9\times (0.86-f^\tau)} \\
	\Pr[\mathcal{W}_2] &\ge& 1 - \frac{\lambda f^\tau}{0.9\times (0.86-f^\tau)} \\
	\Pr[\mathcal{W}_3] &\ge& 1-  {\tt Pois}(807, \lambda)
	\end{eqnarray}
	Hence with probability at least $1-\frac{\lambda f^m}{0.9(0.86-f^\tau)} - \frac{\lambda f^\tau}{0.9(0.86-f^\tau)} - {\tt Pois}(807, \lambda)$, all three events occur. Recall that the PoW challenge in epoch $i$ is the concatenation of all the nonces in the first $\tau$ slots. By events $\mathcal{Z}_\lambda(j-1)$, $\mathcal{W}_1$, $\mathcal{W}_2$, and Theorem~\ref{thm_agreement}, we have that i) all honest nodes agree on the PoW challenge in epoch $i$, and ii) the PoW challenge in epoch $i$ is fresh in the sense that the adversary does not see the challenge before $T_1$. Together with event $\mathcal{W}_3$, this means that no more than $\lambda$ PoW solutions are seen by honest nodes in epoch $i$, which we define as event $\mathcal{W}_4$. Finally, $\mathcal{Z}_\lambda(j)$ follows directly from $\mathcal{W}_1$, $\mathcal{W}_4$, and $\mathcal{Z}_\lambda(j-1)$.
	
	In the following, we analyze $\Pr[\mathcal{W}_1]$, $\Pr[\mathcal{W}_2]$, and $\Pr[\mathcal{W}_3]$. We start with $\Pr[\mathcal{W}_1]$.
	Consider any given $i'$ where $1\le i'\le i-1$, and PoW solution $x$ seen by some honest node in epoch $i'$. Let $y = {\tt hash}_2({\tt challenge}|x)$, where ${\tt challenge}$ is the PoW challenge corresponding to $x$. Essentially, $y$ is a potential beacon value for epoch $i'+1$, and may potentially further be reused later in epoch $i$. 
	If we choose the committee by using ${\tt hash}_1(\mbox{slot number}|y)$ as randomness, then the probability of the committee containing no honest member is at most $f^m$. 

Next, we upper bound the probability that $y$ is used/reused as the beacon in epoch $i$. Define $Z_1 = \mathcal{Z}_\lambda(i'\rho)$ and $Z_2 = \mathcal{Z}_\lambda(j-1)$. 
Conditioned upon $Z_1$ only, 
define $A_{i'}$ to be the random event where for every $g\in [i'+1, i-1]$, epoch $g$ satisfies at least one of the following two conditions: i) no honest node finds any PoW solution in epoch $g$, or ii) if $y$ were used as the beacon in epoch $g$, then none of the $\tau$ broadcasters in the $\tau$ candidate-holding blocks in epoch $g$ would be honest. Note that $A_{i'}$ is well-defined, even if  $Z_2$ does not happen, and even if the honest nodes do not agree on the PoW challenge in epoch $g$: In those cases, the first condition in $A_{i'}$ simply means that no honest node solves the PoW, based on whatever each honest node individually believes to be the PoW challenge. Define $p(i')$ to be the probability of $A_{i'}$ happening, conditioned upon $Z_1$ only. Define $q(i')$ to be the probability of $A_{i'}$ happening, conditioned upon $Z_2$. With a Poisson approximation and since the honest nodes on expectation find two PoW solutions in each epoch, we have $p(i') \le (f^\tau + 0.14)^{i-i'-1}$. In turn, by Bayes' formula and since $Z_2$ implies $Z_1$, we have 
$q(i') =  \frac{\Pr[A_{i'} Z_2]}{\Pr[Z_2]}  = \frac{\Pr[A_{i'} Z_1 Z_2]}{\Pr[Z_2]} \le \frac{\Pr[A_{i'} Z_1]}{\Pr[Z_2]} = 
\frac{\Pr[A_{i'}|Z_1]\Pr[Z_1]}{\Pr[Z_2]} \le \frac{\Pr[A_{i'}|Z_1]}{\Pr[Z_2]} = 
 \frac{p(i')}{\Pr[\mathcal{Z}_\lambda(j-1)]} \le \frac{(f^\tau + 0.14)^{i-i'-1}}{0.9}$.
Now conditioned upon $\mathcal{Z}_\lambda(j-1)$, in order for $y$ to be used as the beacon in epoch $i$, the event $A_{i'}$ must happen. Hence the probability of $y$ being used as the beacon in epoch $i$ is at most $\frac{(f^\tau + 0.14)^{i-i'-1}}{0.9}$. 

The probability of $y$ being used as the beacon in epoch $i$ and further causing the committee for slot $j$ to not contain any honest member is then at most $f^{m}\times \frac{(f^\tau + 0.14)^{i-i'-1}}{0.9}$.
Finally, there are at most $\lambda$ different $y$ values in each epoch $i'\in [1, i-1]$, and we need to take a union bound over all those. Hence we have $\Pr[\mathcal{W}_1]\ge 1- \sum_{i'=1}^{i-1}(\lambda f^{m} \frac{(f^\tau + 0.14)^{i-i'-1}}{0.9}) \ge 1 - \frac{\lambda f^m}{0.9(0.86-f^\tau)}$. 	
	
	We move on to $\Pr[\mathcal{W}_2]$. Each of the first $\tau$ slots in epoch $i$ has a corresponding broadcaster. $\mathcal{W}_2$ essentially is the event that at least one of these $\tau$ broadcasters is honest. Following similar reasoning as above, we have $\Pr[\mathcal{W}_2] \ge 1- \sum_{i'=1}^{i-1}(\lambda f^{\tau}\frac{(f^\tau + 0.14)^{i-i'-1}}{0.9})\ge 1 - \frac{\lambda f^\tau}{0.9(0.86-f^\tau)}$. 
	
	Finally we consider $\Pr[\mathcal{W}_3]$. Since the adversary has at most 100 times the computational power as honest nodes, and since the honest nodes on expectation find $2$ PoW solutions from $T_2$ to $T_3$, one can verify that on expectation the adversary finds no more than $805$ solutions from $T_1$ to $T_5$. Hence $\Pr[\mathcal{W}_3] \ge 
	1- {\tt Pois}(2+ 805, \lambda) = 1- {\tt Pois}(807, \lambda)$.
\end{proof}

\Paragraph{Remark.}
Theorem~\ref{the:extension} assumes that the adversary cannot adaptively corrupt honest nodes. If adaptive corruption is possible, then the adversary may corrupt the committee members after seeing the beacon and before the committee has done its work. Assume that the adversary takes at least $x$ epochs to adaptively corrupt nodes. Then 
for a given slot in epoch $i$ and following a similar reasoning as in the proof of Theorem~\ref{the:extension}, the probability of the adversary adaptively corrupting the committee members in that slot is at most $\sum_{i'=1}^{i-x+1}(\lambda \frac{(f^\tau + 0.14)^{i-i'-1}}{0.9}) \le \frac{\lambda(f^\tau + 0.14)^{x-2}}{0.9(0.86-f^\tau)} \approx \lambda\cdot 0.14^{x-2}$, which drops exponentially with $x$. If needed, the constant $0.14$ can be further decreased as well, by setting the PoW easier (and increasing $m$, $\tau$, and $\lambda$ accordingly).

\vspace*{1mm}
\section{Throughput Analysis of \codename}
\label{app:throughputanalysis}

This section analyzes the throughput of \codename. The throughput of \codename follows from the throughput of all the \tool invocations. For any given \tool invocation, define $\mathbb{Y}$ to be the maximum number of bits that an honest node needs to send in a round, with the maximum taken across all honest nodes, all rounds, all possible strategies of the adversary, and all possible randomness. Intuitively, $\mathbb{Y}$ is the very worst-case number of bits a node needs to send in a round. 
We derive $\mathbb{Y}$ first, and then derive throughput.

From the pseudo-code of Algorithm~\ref{alg_top} through 3, one can easily see that in each round, an honest node only sends messages at Line~\ref{line_root_push}, \ref{line_send_xi}, and \ref{line_frag_send}, {\em regardless of the attack strategy of the adversary and regardless of the randomness}. Furthermore, all these messages are always of fixed size.
Let $l_{\tt nonce}$, $l_{\tt hash}$, and $l_{\tt sig}$ be the size of a nonce, a hash, and an aggregate signature in Algorithm~\ref{alg_top}, respectively. Also recall that each message is sent to all the neighbors of the node, and each node has at most $w$ neighbors. Straight-forward counting shows that 
the total number of bits sent by each honest node in each round in one \tool invocation is at most $\mathbb{Y} = w\times (2\times (l_{\tt hash} + l_{\tt sig} + m)+
\max(\lceil\frac{l}{s-1}\rceil+ (l_{\tt hash} + 1)\cdot \lceil\log_2 s\rceil,\,\,
l_{\tt nonce} + (l_{\tt hash}+1)\cdot \lceil\log_2 s\rceil+ l_{\tt sig} + m))$, since:
\begin{itemize}
	\item At Line~\ref{line_root_push}, the total size of $r$ and $\sigma(r)$ is at most $l_{\tt hash} + l_{\tt sig} + m$.
	\item At Line~\ref{line_send_xi}, the size of $x_i$ (including the Merkle proof and the index $i$) is $\lceil\frac{l}{s-1}\rceil+ (l_{\tt hash} + 1)\cdot \lceil\log_2 s\rceil$.
	\item At Line~\ref{line_frag_send}, the total size of $x_s$ and $\sigma(x_s)$ is at most $l_{\tt nonce} + (l_{\tt hash}+1)\cdot \lceil\log_2 s\rceil+ l_{\tt sig} + m$.
\end{itemize} 
Under practical settings, including our experimental settings later, the term  $\lceil\frac{l}{s-1}\rceil$ is usually significantly (e.g., 10 times) larger than the other terms, and the value $s$ is usually not too small (e.g., $\ge 20$). In such cases, we simply have $\mathbb{Y} \approx \frac{wl}{s}$.

We now derive throughput.
Let $\mathbb{B}$ be the available bandwidth on each node. Recall that $\delta$ is the round duration, and let $\gamma$ denote the number of pipelined invocations 
of \tool that each node has at any point of time. 
Let $l = l_0$ be the solution for the equation $\mathbb{Y}\times \gamma = \mathbb{B} \times \delta$ (namely, the bandwidth needed in each round equals the bandwidth available). 
Each \tool invocation can thus confirm a block of size $l_0$ every $(2dm+s)\delta$ time.
The total throughput is then $\mathbb{T} = \frac{\gamma l_0}{(2dm+s)\delta}$. Under the approximation of $\mathbb{Y} \approx \frac{wl}{s}$, we have $l_0 \approx \frac{s\mathbb{B}\delta}{w\gamma}$ and $\mathbb{T} / \mathbb{B} \approx \frac{\gamma s\mathbb{B}\delta}{(2dm+s)\delta w \gamma}/\mathbb{B} = \frac{s}{(2dm+s)w}$. Setting $s = 2dm$ gives $\mathbb{R} = \mathbb{T} / \mathbb{B} \approx \frac{1}{2w} = \Theta(\frac{1}{w})$ and $\mathbb{T} \approx \frac{\mathbb{B}}{2w}$. 

Note that the above final throughput and TTB ratio is {\em independent} of $\gamma$, since eventually the $\gamma$ term gets cancelled out. Hence our final results hold {\em regardless of} whether pipelined invocations are used (i.e., whether $\gamma \ge 2$ or $\gamma=1$). This is expected: Using multiple pipelined invocations implies that each invocation gets only a portion of the available bandwidth and hence can only broadcast smaller-sized objects. These two factors, multiple invocations and smaller objects, cancel out. This is also consistent with the intuition that one cannot boost throughput, simply by using pipelined invocations.

To summarize, our analysis in this section shows that regardless of the attack strategy of the adversary, the total throughput of all the \tool invocations in \codename is 
$\mathbb{T} \approx \frac{\mathbb{B}}{2w} = \Theta(\frac{\mathbb{B}}{w})$, under the approximation of $\mathbb{Y} \approx \frac{wl}{s}$. 

\vspace*{1mm}
\section{Number of Aggregate Signature Signing/Validation Operations and Secure Hash Computations}
\label{appendix:crypto}

Via a careful calculation, this section shows that under all settings in Section~\ref{sec:exp}, 
{\em regardless of the strategy of the adversary and regardless of what messages the malicious nodes may send to the honest nodes}, in every second an honest \codename node only needs to do:
\begin{itemize}
    \item Adding a signer to a aggregate signature: at most 55 times
    \item Aggregate signature verification (where the signature {\em passes} verification): at most 55 times
    \item Aggregate signature verification (where the signature {\em fails} verification): at most 42 times
    \item Merkle proof verification (where the proof {\em passes} verification): at most 19 times
    \item Merkle proof verification (where the proof {\em fails} verification): at most 42 times
\end{itemize}
Under all settings in Section~\ref{sec:exp},
each Merkle proof verification takes no more than 10 secure hashes. 
Hence in every second, each honest \codename node only needs to do at most $55+55+42 = 152$ aggregate signature signing/validation operations, and at most $(19+42)\times 10 = 610$ secure hash computations for Merkle proof verification. Note that all these numbers are {\em worst-case} numbers: The actual numbers can be even smaller, for example, when there is no active attack on \codename.

The following calculates the worst-case number of various operations. In any given round of \tool, by the pseudo-code in Section~\ref{sec:pseudocode}, a node adds a signer to an aggregate signature at most 3 times. Under all settings in Section~\ref{sec:exp}, we have no more than 217 pipelined invocations of \tool at any given point of time. Since each round has 12 seconds, this translates to at most $3\times 217/12 < 55$ times/second. 

For verifying aggregate signatures, we use {\em lazy verification}: A node {\em only} verifies the signature on an item when it is about to use that item, instead of immediately upon receiving that item from its neighbors. For example in each round, a node may receive many Merkle roots from all its neighbors, but only picks the top two Merkle roots with the largest number of weighted signers, and processes those. The node will then simply verify the signatures (including the number of signers) on those two Merkle roots. If the signature does not pass verification, the node will pick again, until it gets two Merkle roots with valid signatures. One can then confirm, based on the pseudo-code in Section~\ref{sec:pseudocode}, that in each round, a node does at most 3 aggregate signature verifications where the signature {\em passes} verification. Hence the rate of such verification is again at most 55 times/second. 
By similar reasoning, one can confirm, based on the pseudo-code, that a node does at most 19 Merkle proof verifications (where the proof {\em passes} verification) per second. 

Finally, whenever an aggregate signature or Merkle proof does not pass verification, the neighbor who sent the corresponding item must be malicious. Hence a node {\em blacklists such a neighbor, and discards all previous/future messages from that neighbor}. With this simple trick, since each node has at most 42 neighbors in all our experiments, a node does at most 42 aggregate signature verifications where the signature {\em fails} verification, and at most 42 Merkle proof verifications where the proof {\em fails} verification.
	
\vspace*{1mm}
\section{Sanity Check on the Bandwidth Consumption of \codename Node}
\label{app:sanity}

This section provides a sanity check on the bandwidth consumption of a \codename node in our experiments. Our goal is to verify that each \codename node indeed never uses more than $20$Mbps bandwidth. Note that this does not directly follow from the $10$Gbps aggregate available bandwidth across the $500$ \codename nodes on one physical machine, since the $10$Gbps may not be shared evenly.
To do this sanity check, we pick an arbitrary node with the maximum degree of $42$ (larger degree leads to more bandwidth consumption), and allocate a PC to run only that node. We then directly measure the total network traffic on the Ethernet interface of that PCs in every second, using the linux bandwidth monitoring tool ${\tt bmon}$.

Figure~\ref{fig_bws}(a)-(c) plot such measured bandwidth consumption under $f=0.7$, as a fraction of $20$Mbps. Results under other $f$ values are similar. As expected, this fraction never exceeds $1.0$, confirming that the node indeed never uses more than $20$Mbps bandwidth. The zig-zag pattern in Figure~\ref{fig_bws}(c) is also expected: Recall that each \tool invocation has $2dm+s = 1760$ rounds. When there is no active attack, a node only needs to send messages in the first $800$ rounds. Also recall that in these experiments, at any point of time, a node has many pipelined \tool invocations. Based on such parameters, Figure~\ref{fig_bws2} plots the computed number of invocations that need to send messages, in every 1-second window.  Figure~\ref{fig_bws2} shows a similar zig-zag pattern as in Figure~\ref{fig_bws}(c),
which explains such a pattern.

	\begin{figure}[t]
	\centering
	\begin{subfigure}{\linewidth}
		\centering
		\includegraphics[width=\linewidth]{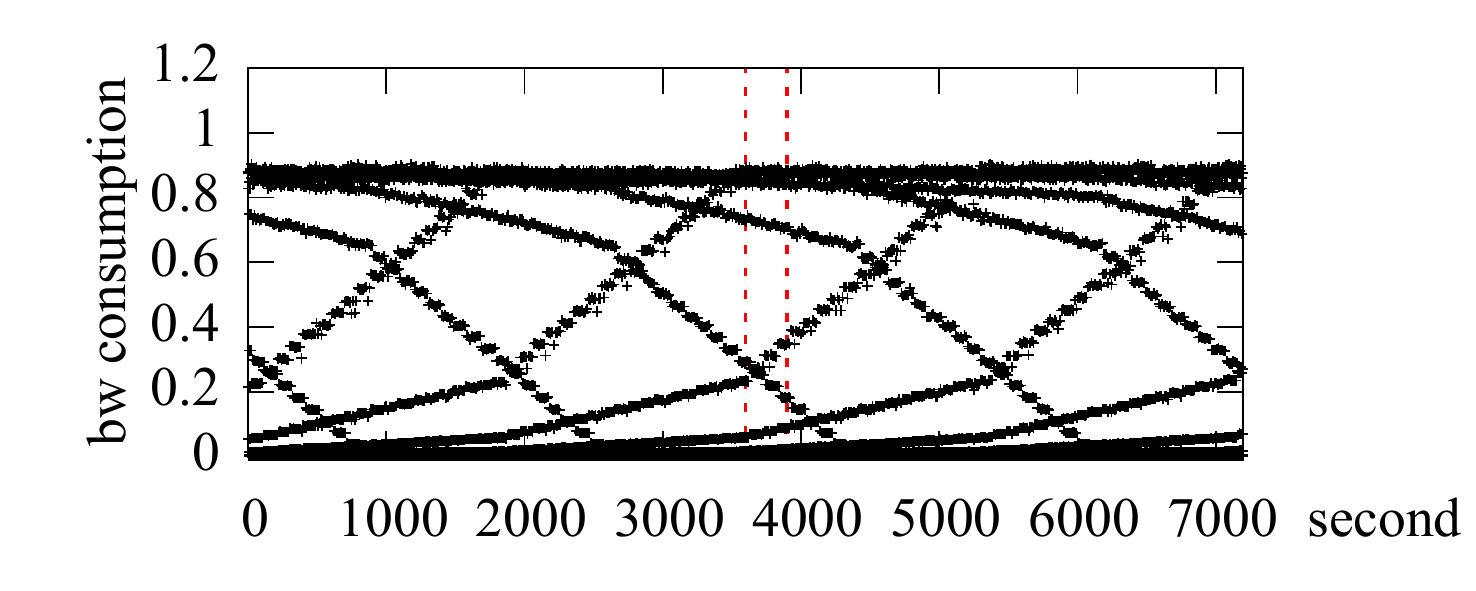}
		\vspace*{-8mm}
		\caption{bandwidth consumption in every second}
		\label{fig:big_bw}
	\end{subfigure}
	\begin{subfigure}{0.5\linewidth}
		\centering
		\includegraphics[width=\linewidth]{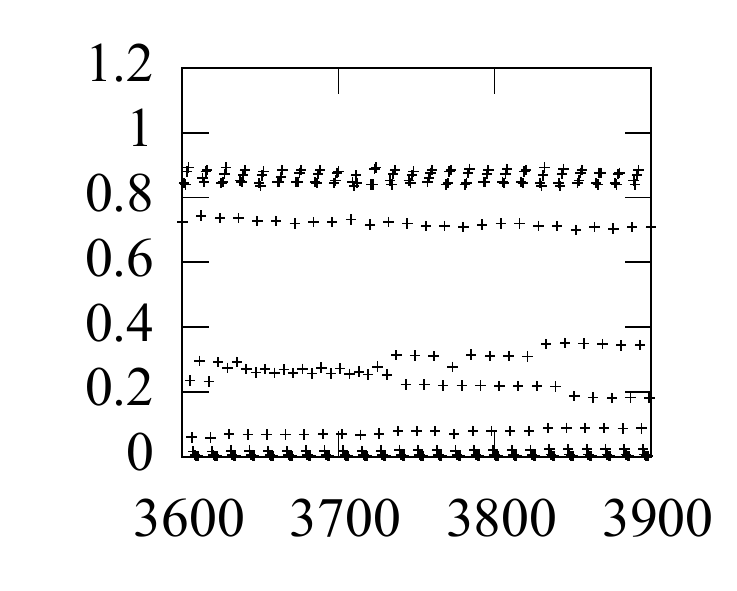}
		\vspace*{-8mm}
		\caption{zoomed in from above}
		\label{fig:medium_bw}
	\end{subfigure}\begin{subfigure}{0.5\linewidth}
		\centering
		\includegraphics[width=\linewidth]{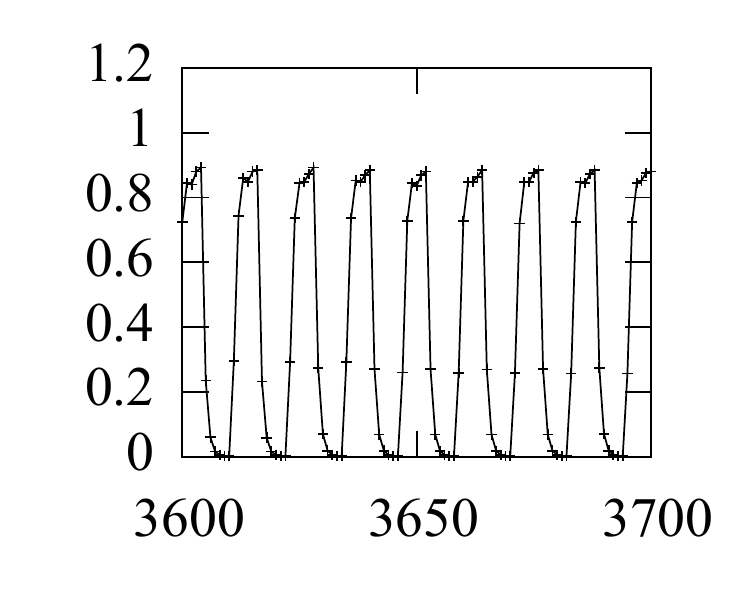}
		\vspace*{-8mm}
		\caption{zoomed in from above}
		\label{fig:small_bw}
	\end{subfigure}
	\vspace*{4mm}
	\caption{\label{fig_bws} Bandwidth consumption of an \codename node as a fraction of $20$Mbps. As expected, this fraction never exceeds $1.0$.
		Figure~\ref{fig_bws}(a) is plotted using points, but those dense points appear to be several curves. To make it clearer, we zoom into smaller time windows, using points in Figure~\ref{fig_bws}(b) and linespoints in Figure~\ref{fig_bws}(c).}
	\vspace*{0mm}
\end{figure}

\begin{figure}[t]
	\centering
	\includegraphics[width=0.65\linewidth]{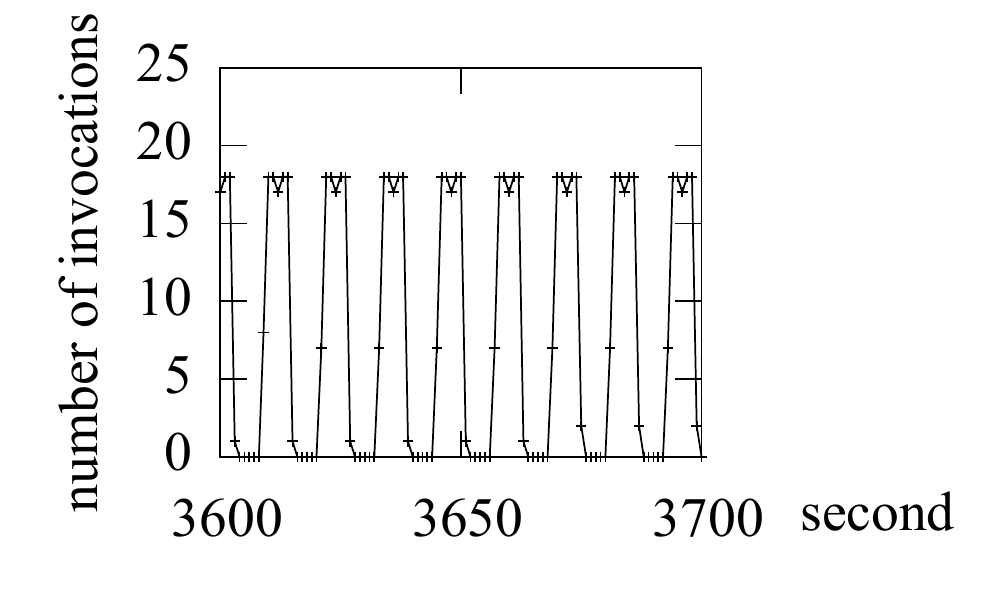}
	\caption{\label{fig_bws2} 
		Computed number of invocations that send messages.
	}
	\vspace*{-2mm}
\end{figure}
 
\vspace*{1mm}
\section{Proof for Theorem~\ref{thm_agreement}}
\label{app:theagreement}

This section proves Theorem~\ref{thm_agreement}. All line numbers in this section refer to lines in Algorithm 1 through 3.
We say that a node {\em accepts} a Merkle root $r$ if the node adds $r$ to its ${\tt root\_accepted}$ at either Line~\ref{line_committee_accept_root} or Line~\ref{line_noncommittee_accept_root}. A node may accept the same $r$ multiple times. In the overlay network, we call a path as an {\em honest path} if it (including the starting and ending node) contains only honest nodes and good edges. 
The {\em honest distance} between two honest nodes $A$ and $B$ is the length of the shortest honest path between $A$ and $B$. The proofs will use superscript to indicate variables on a give node --- for example, ${\tt top\_root}^C$ refers to the ${\tt top\_root}$ in the algorithm running on node $C$.

\Paragraph{Roadmap.}
The following is a roadmap for the proofs. Appendix~\ref{app:rootagree} 
presents several lemmas and then Theorem~\ref{thm_root}, which shows that the first phase enables the honest nodes to agree on the Merkle root. Appendix~\ref{app:fragagree} eventually gives Theorem~\ref{thm_frag}, which captures the agreement property of the second phase (for the fragments). Finally, Appendix~\ref{app:finalproof} proves Theorem~\ref{thm_agreement}, by using Theorem~\ref{thm_root} and \ref{thm_frag}.

\vspace*{1mm}
\subsection{Agreement on the Merkle Root}
\vspace*{1mm}
\label{app:rootagree}

\ifthenelse{\boolean{short}}{
	The proofs of all the lemmas/theorem in this section are deferred to \cite{badmajority_techreport}.
}{}
Lemma~\ref{lem_comm_acc} next roughly says that if an honest committee member $A$ accepts a certain Merkle root $r_0$, then all other honest nodes must also accept $r_0$ within some rounds after that, assuming the algorithm has not already terminated by then. But 
there will be an exception --- an honest node may accept two different roots $r_1$ and $r_2$, without accepting $r_0$.

\begin{lemma}\label{lem_comm_acc}
	Consider any honest committee member $A$ and any honest node $D$, and let $g\in[0,d]$ be the honest distance between $A$ and $D$. If $A$ accepts $r_0$ in round $i$ and if $i+g\le 2dm+s-1$, then by round $i+g$, node $D$ must satisfy either one or both of the following properties:
	\begin{itemize}
		\item $D$ accepts $r_0$.
		\item $D$ accepts two different roots.
	\end{itemize} 
\end{lemma}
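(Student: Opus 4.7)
The plan is to proceed by strong induction on the honest distance $g$ between $A$ and $D$. The base case $g = 0$ gives $D = A$, and $A$ already has $r_0 \in {\tt root\_accepted}$ at round $i$, so the conclusion holds trivially. For the inductive step, I would fix a shortest honest path $A = v_0, v_1, \ldots, v_g = D$ and let $C = v_{g-1}$; since $i + g - 1 \le 2dm+s-2$, the induction hypothesis applies to $C$ at round $i+g-1$, yielding one of the two alternatives at $C$, which I would split on.

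\textbf{Case analysis.} In the first case, $|{\tt root\_accepted}^C| \ge 2$ by round $i+g-1$. Because accepted roots never leave ${\tt all\_root}$, $C$'s ${\tt top\_root}$ in round $i+g-1$ contains two distinct roots, and $C$ sends both (with their current aggregate signatures) to its neighbors, including $D$, who receives them at the start of round $i+g$. In the second case, $r_0 \in {\tt root\_accepted}^C$; letting $j \le i+g-1$ be the first such round, $r_0$ was in $C$'s ${\tt top\_root}$ at round $j$ and $C$ forwarded $(r_0, \sigma(r_0))$ at that round, so $D$ sees $r_0$ with that signature set by the start of round $j+1 \le i+g$.

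\textbf{Closing via the push-score invariant.} In both cases I would lean on the scoring intuition from Section~\ref{sec:design1}: the score $2d|\sigma(r)| - t$ decreases by at most one per honest hop, since $t$ grows by exactly one per round while $\sigma(r)$ can only grow. Because $A$ is a committee member of weight $w_A \ge 1$, its outgoing push for $r_0$ has initial score at least $2d w_A \ge 2d$, so after $g \le d$ hops the push reaches $D$ with score at least $2d - g \ge d$, which clears even the stricter non-committee threshold ($2d|\sigma(r_0)| \ge (i+g)+d$). Consequently either $r_0 \in {\tt top\_root}^D$ and $D$ accepts $r_0$, or $D$'s ${\tt top\_root}$ at round $i+g$ contains two other roots each with $|\sigma| \ge |\sigma(r_0)|$, in which case those two roots inherit at least the same score $\ge d$ and therefore also satisfy $D$'s threshold, so $D$ accepts two distinct roots. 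An analogous score bound applied to the two roots forwarded by $C$ in the first case shows that they arrive at $D$ with score $\ge d$ as well, again yielding two accepted roots at $D$.

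\textbf{Expected main obstacle.} The delicate part is the intermediate-dropping phenomenon: an honest node on the path may decline to forward $r_0$ in some round because its ${\tt top\_root}$ is occupied by two competing roots with strictly higher signature counts. I would need to argue that whenever this happens, the displacing pair inherits a push score at least as large as $r_0$'s, so the ``two distinct roots'' alternative propagates forward rather than being killed, and continues to beat even the non-committee threshold at $D$. The slack $2d - g \ge d$, obtained from $g \le d$ together with $w_A \ge 1$, is precisely what absorbs the extra ${+d}$ asymmetry in the non-committee acceptance rule; carefully pairing the signature-monotonicity with the per-round increment of $t$ across the two acceptance regimes (committee vs.\ non-committee) is where the heaviest bookkeeping lies.
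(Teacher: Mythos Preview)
Your push-score argument is the right core idea and is essentially what the paper does: it proves, by induction on the hop index $g_1$, that at round $i_1 + g_1$ (where $i_1 \le i$ is the \emph{first} round $A$ accepts $r_0$) the node $v_{g_1}$ on a fixed shortest honest path either has $r_0 \in {\tt top\_root}$ with $|\sigma(r_0)| \ge \lceil i_1/(2d)\rceil + 1$, or has two roots in ${\tt top\_root}$ each with at least that many signers. Your displacement observation is precisely the inductive step for the second alternative, and applying the invariant at $D$ clears both acceptance thresholds since $2d(\lceil i_1/(2d)\rceil + 1) \ge i_1 + 2d \ge (i_1 + g) + d$.

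The problem is your outer strong induction on the Lemma statement itself: that hypothesis is too weak to drive the step. Knowing only that $C$ accepted $r_0$ (your Case~2) gives no control over $C$'s signature count --- e.g., if $C$ is a non-committee member that accepted $r_0$ at round $j$, its outgoing push is only guaranteed score $\ge d$, which after one hop drops to $d-1$ and misses the non-committee threshold at $D$. You patch this by tracing the score all the way from $A$, but that tracing never uses the Lemma-IH at $C$; it is exactly the signature-count invariant above, and it should \emph{be} the induction hypothesis rather than a closing device layered over a weaker one. The same issue bites Case~1: the Lemma-IH tells you $C$ accepted two roots, but not that the two roots currently in ${\tt top\_root}^C$ at round $i+g-1$ carry enough signers --- that again needs the stronger invariant. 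Drop the Case~1/Case~2 split, make the score (equivalently signature-count) bound the inductive invariant along the path, and the proof goes through cleanly. One further correction: your bound ``initial score $\ge 2d w_A$'' requires that $A$ is not yet a signer on $r_0$, so you must work from the first acceptance round $i_1$ rather than an arbitrary $i$; otherwise adding $A$'s signature gains nothing and the $\ge 2d$ bound can fail.
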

\ifthenelse{\boolean{short}}{
}{
	\begin{proof}
		Obviously, the case for $g=0$ (implying $A=D$) is trivial, and we only prove for $g\in[1,d]$. 
		Let $i_1\leq i$ be the first round when $A$ accepts $r_0$. Since $A$ first accepts $r_0$ in round $i_1$, $A$ must satisfy the condition of $2d|\sigma^A(r_0)| \geq i_1$ at Line~\ref{line_root_comm} in that round. $A$ must then immediately adds its own signature to $\sigma^A(r_0)$. Then at Line~\ref{line_root_push}, $A$ must send to its neighbors $r_0$ together with an aggregate signature containing at least $\lceil\frac{i_1}{2d}\rceil+1$ weighted signers for $r_0$.
		
		The remainder of the proof relies on the following claim, which will be later proved: For any honest node $C$ with $g_1\in [1,g]$ honest distance from $A$, at least one of the following properties must hold in round $i_1+g_1$ immediately before Line \ref{line_root_pick} on node $C$:
		\begin{itemize}
			\item $r_0\in {\tt top\_root}^C$ and $|\sigma^C (r_0)| \geq \lceil \frac{i_1}{2d} \rceil+1$
			\item $|{\tt top\_root}^C| =2$ and $|\sigma^C (r)| \geq \lceil \frac{i_1}{2d} \rceil+1$ for all $r\in {\tt top\_root}^C$
		\end{itemize}
		Applying the above claim to node $D$ with $g_1=g$ then shows that in round $i_1+g\leq i+g$, node $D$ must satisfy the condition at either Line \ref{line_root_comm} (if $D$ is a committee member) or Line \ref{line_root_noncomm} (if $D$ is a non-committee member). This implies that by round $i+g$, $D$ must either accept $r_0$ or accept two different roots (in ${\tt top\_root}^D$), which gives us the lemma.
		
		The following proves the earlier claim via an induction. 
		For $g_1=1$, recall that $A$ has sent $r_0$ with a signature containing at least $\lceil\frac{i_1}{2d}\rceil+1$ weighted signers in round $i_1$. Node $C$ must receive these by the beginning of round $i_1+1$. Next in round $i_1+1$, if $r_0 \in {\tt top\_root}^C$ on node $C$, we are done. Otherwise we must have $r_0 \notin {\tt top\_root}^C$ and $r_0 \in {\tt all\_root}^C$, which means ${\tt top\_root}^C\ne {\tt all\_root}^C$. By Line~\ref{line_root_1root}, this implies that $|{\tt top\_root}^C| = 2$. Furthermore, since $r_0$ is not among the two Merkle roots with an aggregate signature containing most weighted signers, we have $|\sigma^C(r)| \geq \lceil\frac{i_1}{2d}\rceil+1$ for all $r \in {\tt top\_root}^C$, and we are done as well.
		
		Next we assume the claim holds for $g_1$, and we prove for $g_1+1$. Since $C$ has honest distance of  $g_1+1$ from $A$, let honest node $B$ be the second to the last node on the honest path from $A$ to $C$. By inductive hypothesis, the claim holds for $B$ in round $g_1$. 
		If we have $r_0\in {\tt top\_root}^B$ in round $i_1+g_1$, then in that round $B$ must send $C$ the root $r_0$ together with an aggregate signature on $r_0$ containing at least $\lceil\frac{i_1}{2d}\rceil+1$ weighted signers. Next in round $i_1+g_1+1$, if $r_0 \in {\tt top\_root}^C$ on node $C$, we are done. Otherwise by similar argument as earlier, we must have $|{\tt top\_root}^C| = 2$ and all roots in ${\tt top\_root}^C$ must have an aggregate signature containing at least $\lceil\frac{i_1}{2d}\rceil+1$ weighted signers, and we are done as well.
		
		If we have $|{\tt top\_root}^B| = 2$ in round $i_1+g_1$, with all roots in ${\tt top\_root}^B$ having an aggregate signature containing at least $\lceil\frac{i_1}{2d}\rceil+1$ signs, then in that round $B$ must send to $C$ the two roots in ${\tt top\_root}^B$, together with the signatures on them. Next on node $C$ immediately before Line~\ref{line_root_pick} in round $i_1+g_1+1$, we must have that $|{\tt all\_root}^C| \geq 2$, $|{\tt top\_root}^C| = 2$, and all roots in ${\tt top\_root}^C$ have an aggregate signature containing at least $\lceil \frac{i_1+g}{2d} \rceil$ weighted signers. This completes our inductive proof for the claim.
	\end{proof}
}

Lemma~\ref{lem_push_score_g} and \ref{lem_noncomm_acc}
next intend to eventually show that if an honest non-committee member $A$ accepts a certain Merkle root $r_0$, then all honest committee members must also accept $r_0$ within some rounds after that. Same as in Lemma~\ref{lem_comm_acc}, there will be an exception --- namely, accepting two different roots $r_1$ and $r_2$ instead of $r_0$. 

\begin{lemma}\label{lem_push_score_g}
	Consider any honest node $A$ and any honest committee member $D$, and let $g\in [0,d]$ be the honest distance between $A$ and $D$. If at Line~\ref{line_root_push} of round $i$, node $A$ makes a push with a score of at least $g$, then we must have $i+g\le 2dm+s-1$, and furthermore node $D$ must satisfy either one or both of the following properties in round $i+g$:
	\begin{itemize}
		\item $D$ accepts $r_0$, where $r_0$ is the root contained in $A$'s push. 
		\item $D$ accepts two different roots.
	\end{itemize} 
\end{lemma}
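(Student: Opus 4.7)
The plan is to mirror the inductive argument already used for Lemma~\ref{lem_comm_acc}, after translating the hypothesis ``push with score at least $g$'' into a lower bound on the associated signature weight. Concretely, the score recorded at Line~\ref{line_root_calculate} equals $2d|\sigma^A(r_0)| - i$, so the hypothesis is exactly $|\sigma^A(r_0)| \ge \lceil (i+g)/(2d) \rceil$, where $\sigma^A(r_0)$ already reflects any local invocation of ${\tt add\_my\_sig}$, and the same aggregate signature is the one that $A$ forwards to all its neighbors at Line~\ref{line_root_push}. This plays the exact role of the signature weight that drives propagation in Lemma~\ref{lem_comm_acc}.

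Before running the induction, I would discharge the round-budget part of the conclusion: since the total committee weight is $m$ we have $|\sigma^A(r_0)| \le m$, and combined with $|\sigma^A(r_0)| \ge \lceil (i+g)/(2d) \rceil$ this gives $i+g \le 2dm \le 2dm+s-1$. Hence all of rounds $i, i+1, \ldots, i+g$ lie inside the main loop of Algorithm~\ref{alg_top}, so the downstream propagation actually has time to happen.

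The core of the argument is an induction on honest distance $g_1 \in [0,g]$ from $A$, with the following invariant on every honest node $C$ at distance $g_1$, evaluated just before Line~\ref{line_root_pick} of round $i+g_1$: either $r_0 \in {\tt top\_root}^C$ with $|\sigma^C(r_0)| \ge \lceil (i+g)/(2d) \rceil$, or $|{\tt top\_root}^C|=2$ and every $r \in {\tt top\_root}^C$ satisfies $|\sigma^C(r)| \ge \lceil (i+g)/(2d) \rceil$. The base case $g_1=0$ is immediate, since $A$ pushed $r_0$ at Line~\ref{line_root_push} (so $r_0 \in {\tt top\_root}^A$) with weight meeting the threshold by hypothesis. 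The inductive step is structurally identical to that in Lemma~\ref{lem_comm_acc}: the predecessor $B$ forwards either $r_0$ alone with its signature, or both of its two top roots with theirs, along a good edge to $C$; these messages arrive by the start of round $i+g_1+1$; and then either $r_0$ is already in ${\tt top\_root}^C$, or the top-two selection at Line~\ref{line_root_2roots} must pick two other roots whose weights dominate that of $r_0$ and hence also clear the $\lceil (i+g)/(2d) \rceil$ threshold. Applying the invariant at $g_1=g$ to the honest committee member $D$ then closes the proof: the condition $2d|\sigma^D(r)| \ge t = i+g$ at Line~\ref{line_root_comm} is satisfied for every $r \in {\tt top\_root}^D$, so $D$ either accepts $r_0$ or accepts both of the two distinct roots currently in its top list.

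The main obstacle is phrasing the invariant carefully enough to survive Algorithm~2's book-keeping. An honest intermediate $C$ may legitimately drop $r_0$ from its top-two list whenever two strictly heavier roots are present, and the score at Line~\ref{line_root_calculate} is measured after, not before, any local ${\tt add\_my\_sig}$. The disjunctive form of the invariant absorbs the first issue by propagating the two replacement roots instead of $r_0$, and aligning the definition of $\sigma^A(r_0)$ with the post-add value actually transmitted at Line~\ref{line_root_push} handles the second. Both are resolved by exactly the accounting developed for Lemma~\ref{lem_comm_acc}, so no new technique beyond that lemma is needed.
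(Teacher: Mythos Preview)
Your proposal is correct and follows essentially the same approach as the paper: translate the score hypothesis into the signature-weight bound $|\sigma^A(r_0)|\ge\lceil(i+g)/(2d)\rceil$, derive the round budget $i+g\le 2dm$, and then reuse the inductive claim from Lemma~\ref{lem_comm_acc} with the threshold $\lceil\frac{i_1}{2d}\rceil+1$ replaced by $\lceil\frac{i+g}{2d}\rceil$. The only cosmetic difference is that you start the induction at $g_1=0$ while the paper starts at $g_1=1$ and handles $g=0$ separately.
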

\ifthenelse{\boolean{short}}{
}{
	\begin{proof}
		We only prove the lemma for $g\in[1,d]$ --- the case for $g=0$ (implying $A=D$) can be proved in a similar (and easier) way. 
		Let $p$ be the push done by $A$ as specified in the lemma. We have $p.{\tt score} = 2d|\sigma^A(r_0)|-i$. Hence we have $i+g = 2d|\sigma^A(r_0)| -p.score + g \le 2dm - g+g \leq 2dm+s-1$. We also have $|\sigma^A(r_0)| = \frac{i+p.{\tt score}}{2d} = 
		\lceil \frac{i+p.{\tt score}}{2d}\rceil \geq \lceil \frac{i+g}{2d} \rceil$.
This means that in round $i$, node $A$ sends to its neighbors $r_0$ together with an aggregate signature containing at least $\lceil\frac{i+g}{2d}\rceil$ weighted signers on $r_0$. 
		
		To prove the rest of the lemma, we rely on the following claim: Let $i_1= i$. 
		For any honest node $C$ with
		$g_1\in[1, g]$ honest distance from $A$, at least one of the following must hold in round $i_1+g_1$ immediately before Line \ref{line_root_pick} on node $C$:
		\begin{itemize}
			\item $r_0\in {\tt top\_root}^C$ and $|\sigma^C (r_0)| \geq \lceil \frac{i_1+g}{2d} \rceil$
			\item $|{\tt top\_root}^C| =2$ and $|\sigma^C (r)| \geq \lceil \frac{i_1+g}{2d} \rceil$ for all $r\in {\tt top\_root}^C$
		\end{itemize}
		The above claim can be proved using exactly the same proof as in Lemma~\ref{lem_comm_acc}, after replacing ``$\lceil \frac{i_1}{2d} \rceil+1$'' with ``$\lceil \frac{i_1+g}{2d} \rceil$''. For clarity, we do not repeat the proof here.
		
		Applying the above claim to node $D$ with $g_1=g$ and $i_1 = i$ then shows that in round $i+g$, node $D$ must satisfy the condition at Line~\ref{line_root_comm}, and hence must either accept $r_0$ or accept two different roots (in ${\tt top\_root}^D$).
	\end{proof}
}

\begin{lemma}\label{lem_noncomm_acc}
	Consider any honest non-committee member $A$ and any honest committee member $D$, and let $g\in [1,d]$ be the honest distance between $A$ and $D$. If $A$ accepts $r_0$ in round $i$, then we must have $i+g\le 2dm+s-1$, and furthermore node $D$ must satisfy either one or both of the following properties in round $i+g$:
	\begin{itemize}
		\item $D$ accepts $r_0$.
		\item $D$ accepts two different roots.
	\end{itemize} 
\end{lemma}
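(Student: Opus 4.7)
The plan is to reduce Lemma~\ref{lem_noncomm_acc} directly to Lemma~\ref{lem_push_score_g}, by verifying that when the honest non-committee member $A$ accepts $r_0$ in round $i$, the push that $A$ executes at Line~\ref{line_root_push} in that same round has score at least $g$. The key observation is that the non-committee acceptance rule at Line~\ref{line_root_noncomm} requires $2d|\sigma^A(r_0)| \geq i+d$, which is strictly stronger than the committee-member rule, and this slack of $d$ is exactly what guarantees the push is ``promising'' enough to propagate $g\le d$ hops.

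First I would argue that if $A$ accepts $r_0$ at Line~\ref{line_root_noncomm} in round $i$, then $r_0\in{\tt top\_root}^A$ during that iteration of ForwardMerkleRoot(), so in the same round $A$ actually performs the send at Line~\ref{line_root_push} for $r_0$ with the aggregate signature $\sigma^A(r_0)$ on hand. By Line~\ref{line_root_calculate}, the resulting push $p$ satisfies $p.{\tt score} = 2d|\sigma^A(r_0)| - i$. Combining this with the non-committee acceptance inequality $2d|\sigma^A(r_0)| \geq i+d$ gives $p.{\tt score} \geq d \geq g$, using the hypothesis $g \in [1,d]$.

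Next I would invoke Lemma~\ref{lem_push_score_g} with this push $p$ on node $A$, target node $D$ (which is a committee member at honest distance $g$ from $A$), and the same round index $i$. The hypothesis of Lemma~\ref{lem_push_score_g} is met since $p$ has score $\geq g$, so its conclusion applies verbatim: we obtain both $i+g\leq 2dm+s-1$ and that by round $i+g$, node $D$ either accepts $r_0$ or accepts two different roots. This is exactly the statement of Lemma~\ref{lem_noncomm_acc}.

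I do not anticipate any serious obstacle here, since essentially all of the propagation-based reasoning has already been encapsulated in Lemma~\ref{lem_push_score_g}. The only point that deserves a careful sentence is the synchronization between acceptance and forwarding within ForwardMerkleRoot()---namely, that the same iteration of the for-loop at Line~\ref{line_root_pick} both performs the non-committee acceptance test and then immediately executes the corresponding push, so the signature weight $|\sigma^A(r_0)|$ used in the acceptance test is precisely the one recorded into $p.{\tt score}$. Once that is noted, the rest of the argument is a one-line arithmetic reduction to the already-established Lemma~\ref{lem_push_score_g}.
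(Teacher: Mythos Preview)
Your proposal is correct and follows essentially the same approach as the paper's own proof: compute the score of the push $A$ makes in round $i$ using the non-committee acceptance condition $2d|\sigma^A(r_0)|\ge i+d$, conclude $p.{\tt score}\ge d\ge g$, and invoke Lemma~\ref{lem_push_score_g}. Your added remark about the synchronization between the acceptance test and the push within the same loop iteration is a small clarification but not a departure from the paper's argument.
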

\ifthenelse{\boolean{short}}{
}{
	\begin{proof}
		Since $A$ accepts $r_0$ in round $i$, then at Line~\ref{line_root_noncomm} node $A$ must see $2d|\sigma^A(r_0)|\ge i+d$. Let $p$ be the push corresponding to the send done by $A$ in that round. Then $p.{\tt score} = 2d|\sigma^A(r_0)|- i\ge (i+d) -i = d\ge g$. Invoking Lemma~\ref{lem_push_score_g} then gives to the current lemma.
	\end{proof}
}

Intuitively, Theorem~\ref{thm_root} next implies that exactly one of the following cases must happen at Line~\ref{line_final}:
\begin{itemize}
	\item All honest nodes have the same singleton set as the value for ${\tt root\_accepted}$; or
	\item $|{\tt root\_accepted}|\ne 1$ on all honest nodes. (In this case, all honest nodes will eventually output $\bot$.)
\end{itemize}

\begin{theorem} [Agreement on Merkle Root] \label{thm_root}
	Consider any execution of Algorithm~\ref{alg_top}, where at least one honest node has $|{\tt root\_accepted}| = 1$ at Line~\ref{line_final}. Then in this execution, all honest nodes must have the same ${\tt root\_accepted}$ value at Line~\ref{line_final}.
\end{theorem}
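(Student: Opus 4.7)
The plan is to fix an arbitrary honest node $A$ with $|{\tt root\_accepted}^A|=1$, write $\{r_0\} = {\tt root\_accepted}^A$, and show that every other honest node $B$ likewise satisfies ${\tt root\_accepted}^B = \{r_0\}$. The proof will shuttle implications back and forth between $A$, $B$, and a well-chosen honest committee member $E$, repeatedly invoking Lemmas~\ref{lem_comm_acc} and~\ref{lem_noncomm_acc} and exploiting the fact that the hypothesis $|{\tt root\_accepted}^A|=1$ forbids the ``accepts two different roots'' disjunct of those lemmas whenever it would involve $A$.

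First I would establish that some honest committee member $E$ accepts $r_0$. If $A$ itself lies in the committee, take $E := A$. Otherwise pick any honest committee member $E_0$ (which exists by the hypothesis of Theorem~\ref{thm_agreement}) and apply Lemma~\ref{lem_noncomm_acc} to $A$'s acceptance of $r_0$: $E_0$ must accept $r_0$ or accept two different roots. To rule out the second case, I would assume $E_0$ accepts some $r' \ne r_0$ and apply Lemma~\ref{lem_comm_acc} in the reverse direction, from $E_0$'s acceptance of $r'$ back to $A$; this forces $A$ either to accept $r'$ or to accept two distinct roots, both of which contradict ${\tt root\_accepted}^A = \{r_0\}$. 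Hence $E_0$ accepts $r_0$ and we set $E := E_0$.

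Second, for any honest node $B$, I apply Lemma~\ref{lem_comm_acc} to $E$'s acceptance of $r_0$, obtaining that $B$ accepts $r_0$ or accepts two different roots. The first disjunct already puts $r_0 \in {\tt root\_accepted}^B$; to complete the argument I must show nothing else lies in ${\tt root\_accepted}^B$. I would therefore suppose for contradiction that $B$ accepts some $r'' \ne r_0$, and branch on whether $B$ sits in the committee. If $B$ is a committee member, Lemma~\ref{lem_comm_acc} applied from $B$'s acceptance of $r''$ to $A$ contradicts ${\tt root\_accepted}^A = \{r_0\}$. If $B$ is a non-committee member, Lemma~\ref{lem_noncomm_acc} applied from $B$ to $E$ shows $E$ accepts $r''$ or two different roots; in either case $E$ accepts some root $r''' \ne r_0$, and a final application of Lemma~\ref{lem_comm_acc} from $E$'s acceptance of $r'''$ back to $A$ again contradicts ${\tt root\_accepted}^A = \{r_0\}$.

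The main obstacle is that Lemmas~\ref{lem_comm_acc} and~\ref{lem_noncomm_acc} only provide implications \emph{out of} a committee member or \emph{into} a committee member, and never directly between two non-committee members; the workaround is to route every cross-node implication through the committee member $E$ fixed in the first step, and to use the singleton ${\tt root\_accepted}^A$ as a global anchor that kills the ``two different roots'' escape clause wherever it appears. A minor technical point is to verify that the precondition $i+g \le 2dm+s-1$ of Lemma~\ref{lem_comm_acc} holds at each of its applications; this follows from $i \le 2dm$ (forced by the acceptance test $2d|\sigma(r)| \ge t$ together with $|\sigma(r)| \le m$) and $g \le d$, provided $s \ge d+1$, which is the regime the paper works in.
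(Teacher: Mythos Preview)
Your approach is essentially the paper's: both proofs shuttle between an arbitrary honest node and an honest committee member via Lemmas~\ref{lem_comm_acc} and~\ref{lem_noncomm_acc}, using the singleton hypothesis on the anchor node (your $A$, the paper's $B$) to eliminate the ``two different roots'' disjunct each time it arises. Your organization---first isolating a committee member $E$ that accepts $r_0$, then routing all cross-node implications through $E$---is a mild restructuring of the paper's case analysis on $X = {\tt root\_accepted}^A \setminus {\tt root\_accepted}^B$, but the underlying mechanism is identical.

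One minor looseness worth tightening: you bound the acceptance round by $i \le 2dm$ via $|\sigma(r)| \le m$, which forces the side condition $s \ge d+1$. The paper instead observes that at the \emph{first} round an honest committee member accepts a root, it has not yet added its own signature (that happens at Line~\ref{line_root_addsig}, after the test at Line~\ref{line_root_comm}), so in fact $|\sigma(r)| \le m-1$ at that moment. This yields $i \le 2d(m-1)$ and hence $i+g \le 2dm-d < 2dm+s-1$ for every $s \ge 1$, removing your extra parameter assumption.
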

\ifthenelse{\boolean{short}}{
}{
	\begin{proof}
		Let node $B$ be any honest node with 
		$|{\tt root\_accepted}^B| = 1$ at Line~\ref{line_final}. Let $\{r_1\} = {\tt root\_accepted}^B$. We prove via a contradiction, and assume that there exists some honest node $A$ with ${\tt root\_accepted}^A \ne {\tt root\_accepted}^B$. Let $X = {\tt root\_accepted}^A \setminus {\tt root\_accepted}^B$. 
		
		\Paragraph{If $X$ is not empty.} If $X \ne \emptyset$, let $r_0$ be any element of $X$. We must have $r_0\ne r_1$. 
		We now derive a contradiction from the existence of $r_0$, by considering two cases. The first case is where $A$ is a committee member. Let $i$ be the first round during which $A$ accepts $r_0$. At Line~\ref{line_root_comm} in that round, we must have $|\sigma^A(r_0)| \le m-1$, which implies that $i \le 2d |\sigma^A(r_0)| \le 2d(m-1)$. Let $g_1$ be the honest distance between $A$ and $B$. Since $i+g_1\le i+d \le 2d(m-1)+ d < 2dm+s-1$, Lemma~\ref{lem_comm_acc} tells us that by the end of the execution, node $B$ must either accept $r_0$ (which contradicts with $r_0\in {\tt root\_accepted}^A \setminus {\tt root\_accepted}^B$) or accept two different roots (which contradicts with $|{\tt root\_accepted}^B| = 1$).  
		
		The second case is where $A$ is a non-committee member. Let $i$ be the first round during which $A$ accepts $r_0$. Let $D$ be any honest committee member. Let $g_2$ be the honest distance between $A$ and $D$, and $g_3$ be the honest distance between $D$ and $B$. By Lemma~\ref{lem_noncomm_acc}, $D$ must either accept $r_0$ or accept two different roots in round $i+g_2$. In either case, among the root(s) that $D$ accepts in round $i+g_2$, there must exists some root $r_2$ ($r_2$ may or may not equal $r_0$) such that $r_2\ne r_1$. 
		
		Let round $j \le i+g_2$ be the first round during which $D$ accepts $r_2$. At Line~\ref{line_root_comm} in that round, we must have $|\sigma^D(r_2)| \le m-1$, which implies that $j \le 2d |\sigma^D(r_0)| \le 2d(m-1)$. Since $j+g_3\le j+d \le 2d(m-1)+ d < 2dm+s-1$, Lemma~\ref{lem_comm_acc} tells us that by the end of the execution, node $B$ must either accept $r_2$ or accept two different roots. In either case, this contradicts with ${\tt root\_accepted}^B = \{r_1\}$.
		
		\Paragraph{If $X$ is empty.} If $X = \emptyset$, then since ${\tt root\_accepted}^A \ne {\tt root\_accepted}^B$ and since ${\tt root\_accepted}^B = \{r_1 \}$, we must have 
		${\tt root\_accepted}^A = \emptyset$. Given that $B$ accepts $r_1$ in some round, using a similar proof as above, one can show that $A$ must accept at least one root. This then contradict with ${\tt root\_accepted}^A = \emptyset$.    
	\end{proof}
}

\vspace*{1mm}
\subsection{Agreement on Fragments}
\label{app:fragagree}
\vspace*{1mm}

\ifthenelse{\boolean{short}}{
	The proofs of all the lemmas/theorem in this section are deferred to \cite{badmajority_techreport}.
}{}
Recall from Section~\ref{sec:design4} that in the second phase, a node $B$ uses the Merkle root contained in its most promising push done so far in the first phase, as $B$'s current guess for the final accepted Merkle root. Lemma~\ref{lem_high_push} below says that under certain conditions, after an honest node $A$ accepts a Merkle root, within a certain number of rounds, the guesses made by other honest nodes will become correct.

\begin{lemma}
	\label{lem_high_push}
	Consider any given execution of Algorithm~\ref{alg_top}, where $|\tt root\_accepted| = 1$ at Line~\ref{line_final} on some honest node $A$. Let round $i$ be when $A$ first accepts the sole element $r_0$ in ${\tt root\_accepted}^A$. Let $B$ be any honest node ($B$ can be $A$ itself), and let $g\in[0,d]$ be the honest distance between $A$ and $B$. Then in round $i + g$ and all later rounds, the push $p$ (i.e., the most promising push) chosen by node $B$ at Line~\ref{line_frag_pick_root} must contain $r_0$, if either of the following two conditions is satisfied:
	\begin{itemize}
		\item $A$ is a committee member.
		\item $A$ is a non-committee member and there exists some honest committee member $D$ such that the honest distance between $B$ and $D$ is no more than $d-g$.
	\end{itemize}
\end{lemma}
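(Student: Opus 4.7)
The plan is to show that in round $i+g$ and every subsequent round, $B$'s set ${\tt all\_push}^B$ contains a push for $r_0$ whose score strictly exceeds that of every push $B$ has ever made for any root $r'\ne r_0$. Since Line~\ref{line_frag_pick_root} picks a push of maximum score (with arbitrary tie-breaking), this suffices. The argument splits into a lower bound on the score of $B$'s best push for $r_0$ and an upper bound on the score of any push $B$ makes for $r'\ne r_0$; both are obtained by combining the propagation machinery already developed in the proofs of Lemma~\ref{lem_comm_acc} and Lemma~\ref{lem_push_score_g} with Theorem~\ref{thm_root}.

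For the lower bound, I would first inspect the push $A$ records at Line~\ref{line_root_push} in round $i$. If $A$ is a committee member (Case~1), then at Line~\ref{line_root_addsig} $A$ first adds its own signature to $\sigma^A(r_0)$, giving its push score at least $2d$. If $A$ is not in the committee (Case~2), then the non-committee acceptance threshold $2d|\sigma^A(r_0)|\ge i+d$ gives $A$ a push of score at least $d$. I would then replay the inductive hop-by-hop argument from the proof of Lemma~\ref{lem_push_score_g} to conclude that by round $i+g$, either $r_0\in{\tt top\_root}^B$ with $|\sigma^B(r_0)|$ large enough to make $B$'s push for $r_0$ have score at least $2d-g$ in Case~1 or at least $d-g$ in Case~2, or else ${\tt top\_root}^B$ already contains two distinct roots with matching signature counts. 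The latter would force $B$ to accept two different roots by the end of the execution, contradicting Theorem~\ref{thm_root} (since $|{\tt root\_accepted}^A|=1$ forces $|{\tt root\_accepted}^B|=1$). Therefore $B$ records the desired push for $r_0$ by round $i+g$, and because ${\tt all\_push}^B$ only grows over time, this push persists in every later round.

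For the upper bound, suppose for contradiction that some push in ${\tt all\_push}^B$ is for $r'\ne r_0$ with score $\alpha$, recorded in some round $t'$. Applying Lemma~\ref{lem_push_score_g} with $B$ as the source and this push, every honest node at honest distance at most $\alpha$ from $B$ must by round $t'+\alpha$ either accept $r'$ or accept two different roots. By Theorem~\ref{thm_root} no honest node can do either, so $\alpha$ is strictly less than the honest distance from $B$ to the nearest honest committee member. In Case~1 this distance is at most $g$ (take $A$ itself), yielding $\alpha\le g-1$; in Case~2 the hypothesized $D$ is at distance at most $d-g$, yielding $\alpha\le d-g-1$. Comparing with the lower bound gives $2d-g>g-1$ (using $g\le d$) in Case~1 and $d-g>d-g-1$ in Case~2, so $B$'s push for $r_0$ strictly dominates and the max-score push must contain $r_0$.

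The main obstacle is Case~2, where the slack between the lower and upper bounds collapses to one and both bounds must be extracted tightly from the single distance hypothesis on $D$; the non-committee member $A$ contributes only score $d$, so the entire propagation envelope must be threaded through the threshold $i+d$ rather than the stronger $i$ available in Case~1. Minor care is also needed for the boundary $g=0$, which is handled directly by inspecting $A$'s own pushes and invoking Theorem~\ref{thm_root} against $A$ itself, and for checking that the round constraint $t'+\alpha\le 2dm+s-1$ in the contradiction step is automatic since $\alpha\le 2dm-t'$ always holds by the score formula in Line~\ref{line_root_calculate}.
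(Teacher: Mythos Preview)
Your overall strategy matches the paper's: both the lower bound on a push at $B$ and the upper bound on pushes for $r'\ne r_0$ are the right ingredients, and Case~1 goes through as you describe. However, there is a gap in your lower-bound argument for Case~2. You claim that if the hop-by-hop induction lands in the second branch (${\tt top\_root}^B$ contains two distinct roots, each with at least $\lceil(i+d)/(2d)\rceil$ signers, in round $i+g$), then $B$ must accept two different roots. This is false when $B$ is a non-committee member and $g\ge 1$: the non-committee threshold at Line~\ref{line_root_noncomm} requires $2d|\sigma^B(r)|\ge (i+g)+d$, while the branch only guarantees $2d|\sigma^B(r)|\ge i+d$, which falls short. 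So you cannot rule out the second branch this way, and hence cannot conclude that $B$'s high-score push is specifically for $r_0$.

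The fix is easy, and it is essentially how the paper organizes the argument. Drop the requirement that the high-score push at $B$ be for $r_0$: the induction (either branch) already shows that $B$ makes \emph{some} push in round $i+g$ with score at least $2d-g$ (Case~1) or $d-g$ (Case~2). Hence the maximum-score push $p$ at Line~\ref{line_frag_pick_root} in any round $j\ge i+g$ has at least that score. Now apply your upper-bound argument directly to $p$: if $p$ were for some $r_1\ne r_0$, then in Case~1 the bound $p.{\tt score}\ge 2d-g\ge d$ forces $B$ itself to have accepted $r_1$ (check both Lines~\ref{line_root_comm} and~\ref{line_root_noncomm}), while in Case~2 Lemma~\ref{lem_push_score_g} applied to $p$ and the committee member $D$ at distance at most $d-g$ forces $D$ to accept $r_1$ or two roots. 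Either way Theorem~\ref{thm_root} is contradicted. This is exactly the paper's proof by contradiction, and it sidesteps the two-branch case analysis entirely.
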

\ifthenelse{\boolean{short}}{
}{
	\begin{proof}
		We prove this lemma via a contradiction --- assume that in some round $j\ge i+g$, the push (i.e., the most promising push) chosen by node $B$ at Line~\ref{line_frag_pick_root} does not contain $r_0$. Let this push be $p_1$, and let $r_1\ne r_0$ be the root contained in $p_1$. We consider the two cases as listed in the lemma.
		
		\Paragraph{First case:} $A$ is a committee member.  
		We will later prove that $p_1.{\tt score} \ge 2d-g$. 
		Let $j_1\le j$ be the round during which node $B$ did the push $p_1$. We claim that since $p_1.{\tt score} \ge 2d-g$, node $B$ must have accepted the root $r_1$ contained in $p_1$ in round $j_1$. To see why, observe that $p_1.{\tt score} \ge 2d-g$ implies that $2d|\sigma^B(r_1)| - j_1 \ge 2d-g\ge d$
		and $2d|\sigma^B(r_1)|\ge j_1 + d$ at Line~\ref{line_root_calculate} in round $j_1$. If $B$ 
		is a non-committee member, it would have previously satisfied the condition at  Line~\ref{line_root_noncomm} in round $j_1$. If $B$ is a committee member, it must have previously satisfied the condition at  Line~\ref{line_root_comm} in round $j_1$ --- otherwise we would have 
		$2d|\sigma^B(r_1)|< j_1$ at Line~\ref{line_root_calculate}, contradicting with $2d|\sigma^B(r_1)|\ge j_1 + d$. Hence regardless of whether $B$ is a committee member, $B$ must have accepted $r_1$. Since $r_1\ne r_0$ and since ${\tt root\_accepted}^A = \{r_0\}$, this contradicts with Theorem~\ref{thm_root}. 
		
		The following proves that $p_1.{\tt score} \ge 2d-g$. 
		Since $A$ first accepts $r_0$ in round $i$, we know that in Line~\ref{line_root_comm} of that round, $A$ must see $2d|\sigma^A(r_0)| \ge i$. Then $A$ will add its own signature for $r_0$, which means that $A$ must send to its neighbors $r_0$ together with an aggregate signature containing at least $\lceil \frac{i}{2d} \rceil +1$ weighted signers on $r_0$. The push $p_2$ corresponding to this send done by $A$ in round $i$ has a score of at least $2d(\lceil \frac{i}{2d} \rceil +1) - i \ge 2d$.
		
		If $g =0$, then $A=B$ and $B$ has made a push $p_2$ with score at least $2d$ in round $i$. Since $p_1$ is the push chosen by $B$ as the push with the highest score in round $j \ge i+g = i$, we must have $p_1.{\tt score} \ge p_2.{\tt score} \ge 2d \ge 2d-g$.
		
		If $g\ge 1$, 
		consider the $k$-th node $C$ on the honest path from $A$ (exclusive) to $B$ (inclusive). A trivial induction can show that in round $i+k$, node $C$ sends some root $r_2$ with an aggregate signature on $r_2$ containing $\lceil \frac{i}{2d} \rceil +1$ weighted signers. (Here $r_2$ may or may not equal $r_0$. Also, $C$ may send one more root beyond $r_2$.)
		This means that $B$ must send some root $r_3$ with a signature on $r_3$ containing at least $\lceil \frac{i}{2d} \rceil +1$ weighted signers in round $i+g$. Let $p_3$ be the push corresponding to this send done by $B$. We immediately know that $p_3.{\tt score} \ge 2d (\lceil \frac{i}{2d} \rceil +1) -(i+g) \ge 2d -g$. Since $p_1$ is chosen in round $j\ge i+g$ as the push with the highest score, we must have $p_1.{\tt score} \ge p_3.{\tt score} \ge 2d-g$. 
		
		\Paragraph{Second case:} $A$ is a non-committee member and there exists some honest committee member $D$ such that the honest distance between $B$ and $D$ is no more than $d-g$.
		
		In this case, in round $i$ node $A$ must send to its neighbors $r_0$ together with an aggregate signature on $r_0$ containing at least $\lceil \frac{i+d}{2d} \rceil$ weighted signers. In turn, by similar reasoning as above, we can show that $p_1.{\tt score} \ge 2d \lceil \frac{i+d}{2d} \rceil -(i+g) \ge d -g$. Since the distance between $B$ and $D$ is no more than $d-g$, Lemma~\ref{lem_push_score_g} tells us that $D$ must either accepts $r_1$ or accepts two different roots by the end of the execution. In either case, this contradicts with Theorem~\ref{thm_root} since ${\tt root\_accepted}^A = \{r_0\}$.
	\end{proof}
}

Lemma~\ref{lemma:frag1} and \ref{lemma:frag2} next reason about the agreement properties for the fragments, under certain conditions.

\begin{lemma}\label{lemma:frag1}
	Consider any given execution of Algorithm~\ref{alg_top}, where at least one honest node has $|{\tt root\_accepted}| = 1$ at Line~\ref{line_final}. If there exists some honest committee member having ${\tt frag\_accepted} = {\tt true}$ at Line~\ref{line_final}, then all honest nodes must have ${\tt frag\_accepted}$ being true at Line~\ref{line_final}.
\end{lemma}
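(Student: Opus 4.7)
The plan is to combine the Merkle-root agreement from Theorem~\ref{thm_root} with a propagation analysis for the final fragment $x_s$, in the same spirit as the proofs of Lemma~\ref{lem_comm_acc} and Lemma~\ref{lem_noncomm_acc} for roots. By Theorem~\ref{thm_root}, every honest node has the same ${\tt root\_accepted}=\{r_0\}$. Let $A$ be the honest committee member with ${\tt frag\_accepted}^A={\tt true}$, and let $t^*$ be the round in which $A$ first sets this flag at Line~\ref{line_frag_comm}. At that instant $A$ must have all $s$ fragments corresponding to the root picked at Line~\ref{line_frag_pick_root}; a short separate argument (using that $t_{\textnormal{frag}}^A=\infty$ as long as $t_{\textnormal{root}}^A=\infty$) shows this root must be $r_0$. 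The acceptance condition gives $2d|\sigma^A(x_s)|\ge t_{\textnormal{frag}}^A-(s-1)$, and after $A$ adds its signature at Line~\ref{line_frag_addsig} and forwards $x_s$ at Line~\ref{line_frag_send}, the aggregate $A$ transmits carries at least $\lceil(t_{\textnormal{frag}}^A-(s-1))/(2d)\rceil+1$ weighted signers on $x_s$.

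Next I would argue, by induction on the honest distance $g\in[0,d]$ from $A$, that within $g+(s-1)$ rounds after $t^*$ every honest node $B$ has both received $x_s$ with an aggregate signature of weight at least $\lceil(t_{\textnormal{frag}}^A-(s-1))/(2d)\rceil+1$ and has itself reached Line~\ref{line_frag_receive_last} with that signature as $\sigma^B(x_s)$. The honest-distance part contributes one round per hop exactly as in Lemma~\ref{lem_comm_acc}; the extra $s-1$ is the classical delay budget from Section~\ref{sec:design2} and \cite{topkis85}, which accounts for the fact that an honest intermediate node may be forced by the forerunner rule at Line~\ref{line_frag_allfrag} to forward some earlier fragment $x_i$ ($i<s$) before it can forward $x_s$. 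The key enabler is Lemma~\ref{lem_high_push}: once it kicks in (after at most the $d$ rounds following the first honest committee member's acceptance of $r_0$), every honest node's choice at Line~\ref{line_frag_pick_root} is locked onto the push carrying $r_0$, so honest nodes all operate on the same set of $s$ fragments and each burns at most $s-1$ total rounds preferring $x_1,\ldots,x_{s-1}$ before $x_s$.

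With the arrival bound in hand, the acceptance check at $B$ reduces to verifying $2d|\sigma^B(x_s)|\ge t_{\textnormal{frag}}^B-(s-1)$ for committee $B$ and the analogous $+d$ version for non-committee $B$. Writing $t^B$ for the round in which $B$ processes $x_s$ at Line~\ref{line_frag_receive_last}, the inductive claim gives $t^B\le t^*+g+(s-1)$, while $t_{\textnormal{root}}^B\le t^*+g$ follows by applying Lemma~\ref{lem_comm_acc} to the committee member that first accepted $r_0$. These together yield $t_{\textnormal{frag}}^B=\max(t^B,t_{\textnormal{root}}^B+s-1)\le t^*+g+(s-1)$, and combined with $2d|\sigma^B(x_s)|\ge t_{\textnormal{frag}}^A-(s-1)+2d\ge t^*-(s-1)+2d$, the committee condition holds whenever $g\le d$. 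For non-committee $B$, the extra $d$ slack is absorbed exactly as in Lemma~\ref{lem_noncomm_acc} by routing through a nearby honest committee member. Finally, using $|\sigma^A(x_s)|\le m$ one gets $t^*\le 2dm$ and hence $t^B\le 2dm+s-1$, so the acceptance happens strictly before the algorithm returns.

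The hard part will be the delay bound itself: re-deriving the \cite{topkis85} result in our Byzantine, multi-hop setting, where the adversary can inject spurious fragments or signatures to divert the honest priority queue at Line~\ref{line_frag_pick_root} during the window before Lemma~\ref{lem_high_push} applies. The crux is to show that any such diversion either (i) produces a competing push whose score exceeds that of the correct push on $r_0$, which by Lemma~\ref{lem_high_push} and Theorem~\ref{thm_root} cannot persist, or (ii) consumes one of the at-most-$(s-1)$ slots already accounted for in the forerunner budget. Once that invariant is established, the honest cumulative delay over any honest path is bounded by $s-1$ regardless of adversarial injections, which is exactly what the compensation term $-(s-1)$ in the acceptance rule is designed to absorb.
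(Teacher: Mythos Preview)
Your overall structure matches the paper's: invoke Theorem~\ref{thm_root} to fix $r_0$, use Lemma~\ref{lem_high_push} to lock honest nodes onto $r_0$, prove a propagation bound for the $s$ fragments along the honest path, and then verify the acceptance inequality at the target node. But the quantitative bound you aim for is too loose, and this breaks the final step.

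You set the reference round to $t^*$ (the round $A$ first sets {\tt frag\_accepted}) and claim that $B$ processes $x_s$ by round $t^*+g+(s-1)$, giving $t_{\textnormal{frag}}^B\le t^*+g+(s-1)$. Plugging this into the committee check together with your signature lower bound $2d|\sigma^B(x_s)|\ge t^*-(s-1)+2d$, the inequality you actually need is $t^*-(s-1)+2d\ge t^*+g$, i.e.\ $2d\ge g+(s-1)$. This is false whenever $s>d+1$ (in the paper's experiments $s=800$, $d=6$), so ``the committee condition holds whenever $g\le d$'' does not follow. The same looseness also kills your termination check: from $|\sigma^A(x_s)|\le m$ you only get $t_{\textnormal{frag}}^A\le 2dm+(s-1)$, hence $t^*\le 2dm+(s-1)$, not $t^*\le 2dm$; then $t^B\le t^*+g+(s-1)$ can exceed $2dm+s-1$.

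The paper avoids this by taking the reference round to be $j:=t_{\textnormal{frag}}^A=\max(t^*,\,t_{\textnormal{root}}^A+s-1)$ rather than $t^*$, and proving the sharper Claim~1: every honest $C$ at distance $g$ has already \emph{sent} all $s$ fragments (and the signature $z$) by round $j+g$. The key is that $j\ge t_{\textnormal{root}}^A+s-1$, so between $t_{\textnormal{root}}^A+g$ and $j+g$ there are at least $s$ rounds during which (by Lemma~\ref{lem_high_push}) each node on the path is locked onto $r_0$; that is exactly enough rounds to push all $s$ fragments through, hop by hop, with no extra $(s-1)$ added on top of $g$. With $t_{\textnormal{frag}}^C\le j+g$ the committee check becomes $j-(s-1)+2d\ge j+g-(s-1)$, i.e.\ $2d\ge g$, which holds; the non-committee check needs only $d\ge g$, so no routing through another committee member is needed. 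Also note that the paper takes $A$ to be the \emph{first} honest committee member to accept, which gives $|\sigma^A(x_s)|\le m-1$ and hence $j+g\le 2d(m-1)+(s-1)+d<2dm+s-1$; without ``first'' the termination bound would again be off by~$d$. In short, the $(s-1)$ compensation is already absorbed into the definition of $t_{\textnormal{frag}}$ and must not be re-added as a separate Topkis delay budget.
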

\ifthenelse{\boolean{short}}{
}{
	\begin{proof}
		Let $A$ be the first honest committee member that sets its ${\tt frag\_accepted}^A$ to be true. 
		Consider any given honest node $C$ whose honest distance to $A$ is $g\in[1,d]$. We will show that $C$ must set ${\tt frag\_accepted}^C$ to be true by Line~\ref{line_final}.
		
		\Paragraph{Properties on $A$.}
		Let $j$ be the value of $t_{\textnormal{frag}}^A$ when $A$ first sets ${\tt frag\_accepted}^A$ to be true. We claim that $j+g\le 2dm+s-1$, namely, round $j+g$ must be before the end of the execution: 
		Since this is the first time that $A$ sets ${\tt frag\_accepted}^A$ to be true, at Line~\ref{line_frag_comm} node $A$ must have $|\sigma^A(x_s)| \leq m-1$ and $2d|\sigma^A(x_s)| \geq j-(s-1)$. Thus we have $j+g\le 2d|\sigma^A(x_s)|+ (s-1)+ g \leq 2d(m-1) + (s-1) + d < 2dm + s - 1$. 
		
		By Theorem~\ref{thm_root}, we know that all honest nodes must have the same singleton set ${\tt root\_accepted}$ at Line~\ref{line_final}. Let $r_0$ be the sole element in this set, and let $y$ be the last fragment corresponding to the Merkle root $r_0$. Let $x$, $z$, and $t_0$ be the value of $x^A_s$, $\sigma^A(y)$, and round number on $A$, respectively, when $A$ first sets ${\tt frag\_accepted}^A$ to be true. We next prove that $x = y$. 
		Since $j+g < 2dm + s - 1$, we must have $j\ne \infty$ and hence $t_{\textnormal{root}}^A\ne \infty$ at Line~\ref{line_frag_t2}. This means that $t_{\textnormal{root}}^A$ has already been assigned some value at Line~\ref{line_root_t0_comm} during or before round $t_0$. Since $t_{\textnormal{root}}^A$ is never assigned a value larger than the current round, we must have 
		$t_{\textnormal{root}}^A \le t$. Lemma~\ref{lem_high_push} tells us that starting from round $t_{\textnormal{root}}^A$ (inclusive), the most promising push at Line~\ref{line_frag_pick_root} on node $A$ must always contain $r_0$.
		Hence we have $x = y$.
		
		\Paragraph{Property for $C$.}
		We next prove the following claim via an induction on $g\in[0,d]$: 
		\begin{itemize}
			\item {\bf Claim 1}. By the end of round $j+g$, node $C$ must have forwarded/sent all $s$ fragments corresponding to the Merkle root $r_0$, as well as $z$, to its neighbors. 
		\end{itemize}
		For $g=0$, note that $x=y$, and hence in round $t_0$ node $A$ must send $y$ and $z$ to its neighbors. Since $t_0\le j$ and since $A$ can only send $y$ after it has previously sent the other $s-1$ fragments corresponding to $r_0$, the induction base holds. 
		
		Next, assume that  Claim 1 holds for $g$ and we consider $g+1$. Let node $B$ be the second to the last node on the shortest honest path from $A$ to $C$. By inductive hypothesis, by the end of round $j+g$, node $B$ must have forwarded/sent all $s$ fragments corresponding to $r_0$, as well as $z$, to $C$. Hence, $C$ must have received $y$ (together with $z$) from $B$ at the beginning of round $j+g+1$. 
		Lemma~\ref{lem_high_push} tells us that starting from round $t_{\textnormal{root}}^A+g+1$ (inclusive), the most promising push  at Line~\ref{line_frag_pick_root} on node $C$ must always contain $r_0$. From round $t_{\textnormal{root}}^A+g+1$ to round $j+g+1$ (both inclusive), there are at least total $s$ rounds. Since $C$ receives $y$ (together with $z$) from $B$ by the last round among these $s$ round, and since $B$ can only send one fragment to $C$ in each round, node $C$ must have sent all the $s$ fragments (together with $z$) by the end of round $j+g+1$. This completes the inductive proof for Claim 1. 
		
		\Paragraph{Tracing execution on $C$.}
		We next show that $C$ must set ${\tt frag\_accepted}^C$ to be true in round $j+g$, by tracing all the steps in Algorithm 3 in that round:
		\begin{itemize}
			\item {\bf Line~\ref{line_frag_nopush}:} Because $t_{\textnormal{root}}^A+g \le j+g \le 2dm+s-1$ (the second ``$\le$'' was proved earlier), Lemma~\ref{lem_comm_acc} tells us that node $C$ must have $t_{\textnormal{root}}^C \le t_{\textnormal{root}}^A+g\le j+g$. This means that by Line~\ref{line_frag_nopush} in round $j+g$, $C$ has already accepted $r_0$ --- hence ${\tt all\_push}^C \ne \emptyset$.
			
			\item {\bf Line~\ref{line_frag_pick_root}:} Since $C$ has already accepted $r_0$ in or before round $j+g$, Lemma~\ref{lem_high_push} tells us that the most promising push chosen by node $C$ at Line~\ref{line_frag_pick_root} must contain $r_0$.

			\item {\bf Line~\ref{line_frag_allfrag} and \ref{line_frag_receive_last}:} Claim 1 tells us that in round $j+g$, node $C$ will not satisfy the condition at Line~\ref{line_frag_allfrag}, and will satisfy the condition at Line~\ref{line_frag_receive_last}.
			
			\item{\bf Line~\ref{line_frag_comm}:} If $C$ is a committee member, we claim that the condition at Line~\ref{line_frag_comm} must be satisfied. Assume otherwise. First,
			recall that we showed earlier that $t_{\textnormal{root}}^C \le t_{\textnormal{root}}^A+g$. Hence $t_{\textnormal{frag}}^C = \max(j+g,t_{\textnormal{root}}^C+(s-1)) \le \max(j+g, t_{\textnormal{root}}^A+g+(s-1))= j+g$. Next, since node $A$ sets $\tt frag\_accepted$ to be true in round $t_0$, at Line~\ref{line_frag_comm} node $A$ must satisfy 
			$2d|\sigma^A(x_s)| \geq j - (s-1)$, at Line~\ref{line_frag_addsig} $A$ must add itself as a signer to the aggregate signature, and hence $2d|z|\ge j-(s-1)+2d$.
			Finally, by Claim 1, node $C$ has sent $z$ to all its neighbors by the end of round $j+g$. Hence at Line~\ref{line_frag_comm} on $C$, we must have $2d|\sigma^C(y)|\ge 2d|z|\ge j-(s-1)+2d \ge j-(s-1)+g \ge t_{\textnormal{frag}}^C -(s-1)$. This contradicts with the assumption that the condition at Line~\ref{line_frag_comm} is not satisfied.
			
			\item {\bf Line~\ref{line_frag_noncomm}:} If $C$ is a non-committee member, a similar proof as above can show that the condition at Line~\ref{line_frag_noncomm} must be satisfied.
			
			\item {\bf Line~\ref{line_frag_comm_set} and \ref{line_frag_noncomm_set}:} Since $C$ must satisfy either the condition at Line~\ref{line_frag_comm} or Line~\ref{line_frag_noncomm}, $C$ will reach either Line~\ref{line_frag_comm_set} or \ref{line_frag_noncomm_set}, and will set ${\tt frag\_accepted}$ in round $j+g$, which is before the end of the execution. 
		\end{itemize}
	\end{proof}
}

\begin{lemma}\label{lemma:frag2}
	Consider any given execution of Algorithm~\ref{alg_top}, where at least one honest node has $|{\tt root\_accepted}| = 1$ at Line~\ref{line_final}. If there exists some honest non-committee member having ${\tt frag\_accepted} = {\tt true}$ at Line~\ref{line_final}, then all honest committee members must have ${\tt frag\_accepted}$ being true at Line~\ref{line_final}.
\end{lemma}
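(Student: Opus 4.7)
My plan is to reduce Lemma~\ref{lemma:frag2} to Lemma~\ref{lemma:frag1} by showing that if any honest non-committee member $A$ has ${\tt frag\_accepted}^A = {\tt true}$ at Line~\ref{line_final}, then there exists at least one honest committee member $D^\star$ that also has ${\tt frag\_accepted}^{D^\star} = {\tt true}$ by the end of the execution. Once such a $D^\star$ is exhibited, Lemma~\ref{lemma:frag1} immediately gives the conclusion (indeed the stronger statement that all honest nodes have ${\tt frag\_accepted} = {\tt true}$). The existence of $D^\star$ is plausible because the non-committee acceptance rule at Line~\ref{line_frag_noncomm} carries an extra ``$+d$'' slack over the committee rule at Line~\ref{line_frag_comm}, which is exactly the budget needed to propagate the last fragment along an honest path of length at most $d$.

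Concretely, I would first use Theorem~\ref{thm_root} to fix the common singleton Merkle root $r_0$ and let $y$ be the last fragment corresponding to $r_0$. Let $t_0$ be the round in which $A$ first sets ${\tt frag\_accepted}^A$, so the condition $2d|\sigma^A(x_s^A)| \ge t_{\textnormal{frag}}^A-(s-1)+d$ holds at Line~\ref{line_frag_noncomm} in round $t_0$, and $x_s^A = y$ by Lemma~\ref{lem_high_push} applied to $A$. Pick any honest committee member $D^\star$ whose honest distance $g^\star$ to $A$ is at most $d$. I would then run an induction along the shortest honest path from $A$ to $D^\star$ entirely analogous to the one proving Claim~1 in Lemma~\ref{lemma:frag1}: by the end of round $t_0 + g^\star$, node $D^\star$ must have received all $s$ fragments for $r_0$ together with an aggregate signature $z$ on $y$ of weight $|z| \ge |\sigma^A(y)|$. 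A round-count check, using that $|\sigma^A(y)| \le m$ combined with the acceptance inequality, shows $t_0 + g^\star \le 2dm+s-1$, so the propagation actually lies inside the protocol's horizon.

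Next I would trace Algorithm~3 on $D^\star$ in round $t_0 + g^\star$: Lemma~\ref{lem_comm_acc} (or \ref{lem_noncomm_acc}) applied to whichever honest node first accepted $r_0$ shows ${\tt all\_push}^{D^\star} \ne \emptyset$ and $t_{\textnormal{root}}^{D^\star} \le t_{\textnormal{root}}^A + g^\star$; Lemma~\ref{lem_high_push} (with $D^\star$ itself playing the role of the nearby committee member) shows that the push chosen at Line~\ref{line_frag_pick_root} contains $r_0$; and the induction above rules out Line~\ref{line_frag_allfrag} while satisfying Line~\ref{line_frag_receive_last}. The core numerical step is verifying Line~\ref{line_frag_comm} on $D^\star$: using $t^{D^\star} = t_0 + g^\star$ and $t_{\textnormal{root}}^{D^\star} \le t_{\textnormal{root}}^A + g^\star$ I can bound $t_{\textnormal{frag}}^{D^\star} \le t_{\textnormal{frag}}^A + g^\star$, after which
\[
  2d|\sigma^{D^\star}(y)| \;\ge\; 2d|\sigma^A(y)| \;\ge\; t_{\textnormal{frag}}^A - (s-1) + d \;\ge\; t_{\textnormal{frag}}^{D^\star} - (s-1) + (d - g^\star) \;\ge\; t_{\textnormal{frag}}^{D^\star} - (s-1),
\]
so $D^\star$ sets ${\tt frag\_accepted}^{D^\star} = {\tt true}$, completing the bootstrap.

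The main obstacle is the careful bookkeeping of the four different ``time'' quantities, $t_0$, $t_{\textnormal{root}}^A$, $t_{\textnormal{root}}^{D^\star}$, and the $\max$ defining $t_{\textnormal{frag}}$ at Line~\ref{line_frag_t2}, together with making sure the forerunner rule can actually deliver all $s-1$ non-last fragments to $D^\star$ within $g^\star$ additional rounds; this is where the extra ``$+d$'' slack in Line~\ref{line_frag_noncomm} and the $s-1$ compensation interact nontrivially. After that, invoking Lemma~\ref{lemma:frag1} finishes the proof, since $D^\star$ is an honest committee member with ${\tt frag\_accepted}^{D^\star} = {\tt true}$ and the hypothesis that some honest node has $|{\tt root\_accepted}| = 1$ is preserved.
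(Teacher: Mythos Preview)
Your approach is essentially the paper's: propagate the fragments and the signature $z$ along an honest path from $A$ to a committee member via an induction identical to Claim~1, then trace Algorithm~3 on that committee member. The paper simply does this for an \emph{arbitrary} honest committee member $C$ and stops; your final invocation of Lemma~\ref{lemma:frag1} is harmless but redundant, since your own argument already works for every choice of $D^\star$.

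There is, however, one concrete slip. You anchor the propagation and the tracing at round $t_0+g^\star$, where $t_0$ is the \emph{actual} round in which $A$ first sets ${\tt frag\_accepted}^A$. The paper instead anchors at $j+g^\star$ with $j=t_{\textnormal{frag}}^A=\max(t_0,\,t_{\textnormal{root}}^A+s-1)$, and this distinction matters. The inductive step of Claim~1 needs the window $[\,t_{\textnormal{root}}^A+g_1,\;(\text{anchor})+g_1\,]$ to contain at least $s$ rounds, so that an intermediate node $B$ (which Lemma~\ref{lem_high_push} only guarantees is locked onto $r_0$ from round $t_{\textnormal{root}}^A+g_1$) has enough rounds to flush all $s$ fragments. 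With anchor $j$ this holds because $j\ge t_{\textnormal{root}}^A+s-1$; with anchor $t_0$ it can fail. Concretely, $A$ may have been (non-provably) locked onto $r_0$ and forwarding its fragments well \emph{before} it accepted $r_0$, so that $t_0=t_{\textnormal{root}}^A<t_{\textnormal{root}}^A+s-1$; then an adversarially steered intermediate $B$ that only locks onto $r_0$ at round $t_{\textnormal{root}}^A+1$ cannot forward all $s$ fragments by round $t_0+1$, and your induction breaks. The fix is exactly what the paper does: replace $t_0$ by $j=t_{\textnormal{frag}}^A$ throughout the propagation claim and the tracing. Your numerical chain for Line~\ref{line_frag_comm} already uses $t_{\textnormal{frag}}^A$ and goes through unchanged once you trace at round $j+g^\star$ (noting $t^{D^\star}=j+g^\star\le t_{\textnormal{frag}}^A+g^\star$ and $t_{\textnormal{root}}^{D^\star}\le t_{\textnormal{root}}^A+g^\star$ via Lemma~\ref{lem_noncomm_acc}).
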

\ifthenelse{\boolean{short}}{
}{
	\begin{proof}
		Let $A$ be any honest non-committee member having ${\tt frag\_accepted} = {\tt true}$ at Line~\ref{line_final}.
		Let $C$ be any honest committee member, and let $g\in[1,d]$ be the honest distance between $A$ and $C$. 
		
		\Paragraph{Properties on $A$.}
		Let $j$ be the value of $t_{\textnormal{frag}}^A$ when $A$ first sets ${\tt frag\_accepted}^A$ to be true. We claim that $j+g\le 2dm+s-1$, namely, round $j+g$ must be before the end of the execution: When $A$ sets ${\tt frag\_accepted}^A$ to be true, Line~\ref{line_frag_noncomm} must be satisfied, implying that $2d|\sigma^A(x_s)| \geq j-(s-1)+d$. Thus we have $j+g\le j+d \le 2d|\sigma^A(x_s)|+ (s-1)-d+ d \leq 2dm + (s-1)-d+d = 2dm+s-1$. 
		
		By Theorem~\ref{thm_root}, we know that all honest nodes must have the same singleton set ${\tt root\_accepted}$ at Line~\ref{line_final}. Let $r_0$ be the sole element in this set, and let $y$ be the last fragment corresponding to the Merkle root $r_0$. Let $x$ and $z$ be the value of $x^A_s$ and $\sigma^A(y)$, respectively, on $A$ when $A$ first sets ${\tt frag\_accepted}^A$ to be true. Using the exact same arguments as in the proof of Lemma~\ref{lemma:frag1}, one can show that $x = y$. 
		
		\Paragraph{Property for $C$.}
		We next claim the following: Consider any node $B$ on the shortest honest path from $A$ to $C$ (both inclusive). Let $g_1\in[0,d]$ be the honest distance between $A$ and $B$. Then by the end of round $j+g_1$, node $B$ must have forwarded/sent all $s$ fragments corresponding to the Merkle root $r_0$, as well as $z$, to its neighbors. 
		
		The above claim can be proved via an induction, in the same way as in the proof of Lemma~\ref{lemma:frag1} (after replacing ``$g$'' and ``$C$'' with ``$g_1$'' and ``$B$'', respectively). Applying the claim to $C$ shows that by the end of round $j+g$, node $C$ must have sent all $s$ fragments and $z$ to its neighbors. 
		
		\Paragraph{Tracing execution on $C$.}
		Since $z$ is the value of $\sigma^A(y)$ on node $A$ when $A$ first sets ${\tt frag\_accepted}$ to be true, and since $y=x_s^A$ in that round, we have $2d|z|\ge j-(s-1)+d$. With this property, we can now trace all the steps in Algorithm 3 on node $C$ to show that $C$ must set ${\tt frag\_accepted}^D$ to be true in round $j+g$, using the same proof as in Lemma~\ref{lemma:frag1}.
	\end{proof}
}

Building upon Lemma~\ref{lemma:frag1} and \ref{lemma:frag2}, Theorem~\ref{thm_frag} next shows that all honest nodes must agree on whether they accept the fragements:
\begin{theorem}[Agreement on Fragments] \label{thm_frag}
	Consider any execution of Algorithm~\ref{alg_top}, where at least one honest node has $|{\tt root\_accepted}| = 1$ at Line~\ref{line_final}. Then in this execution, all honest nodes must have the same ${\tt frag\_accepted}$ value at Line~\ref{line_final}.
\end{theorem}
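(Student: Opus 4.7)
The plan is to proceed by a short case analysis that plugs the two lemmas we already have (Lemma~\ref{lemma:frag1} and Lemma~\ref{lemma:frag2}) together. Since ${\tt frag\_accepted}$ is a Boolean, agreement fails only when some honest node has it set to ${\tt true}$ while another has it set to ${\tt false}$. So it suffices to show that if any honest node has ${\tt frag\_accepted}={\tt true}$ at Line~\ref{line_final}, then every honest node does.

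First I would dispatch the trivial case: if no honest node has ${\tt frag\_accepted}={\tt true}$ at Line~\ref{line_final}, then every honest node has ${\tt frag\_accepted}={\tt false}$ and agreement holds vacuously. Otherwise, pick any honest $A$ with ${\tt frag\_accepted}^A={\tt true}$ and split on its role. If $A$ is an honest committee member, Lemma~\ref{lemma:frag1} applies immediately (its premise that some honest node has $|{\tt root\_accepted}|=1$ at Line~\ref{line_final} is exactly the hypothesis of the theorem), and we conclude ${\tt frag\_accepted}={\tt true}$ on every honest node.

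If instead $A$ is an honest non-committee member, we first apply Lemma~\ref{lemma:frag2} to conclude that every honest committee member has ${\tt frag\_accepted}={\tt true}$ at Line~\ref{line_final}. Under the standing assumption (carried through from Theorem~\ref{thm_agreement}) that the committee contains at least one honest member, this in particular produces some honest committee member $A'$ with ${\tt frag\_accepted}^{A'}={\tt true}$. Feeding $A'$ into Lemma~\ref{lemma:frag1} then lifts the conclusion from committee members to all honest nodes, closing the proof.

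The main obstacle here is not really in this theorem at all: all of the delicate reasoning about honest-path propagation, the compensation by $s-1$ rounds, the forerunner rule, and the interaction between $t_{\textnormal{root}}$ and $t_{\textnormal{frag}}$ was discharged inside the proofs of Lemma~\ref{lemma:frag1} and Lemma~\ref{lemma:frag2}. The only subtlety at this level is the two-step chain needed in the non-committee subcase, because Lemma~\ref{lemma:frag2} alone only concludes agreement among committee members, not among all honest nodes; bridging that gap is precisely what the second invocation (of Lemma~\ref{lemma:frag1}) accomplishes.
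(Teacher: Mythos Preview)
Your proposal is correct and follows essentially the same approach as the paper: dispose of the all-${\tt false}$ case, then split on whether the honest node $A$ with ${\tt frag\_accepted}={\tt true}$ is a committee member, invoking Lemma~\ref{lemma:frag1} directly in the first subcase and chaining Lemma~\ref{lemma:frag2} into Lemma~\ref{lemma:frag1} in the second. You are slightly more explicit than the paper in spelling out that the non-committee subcase needs an honest committee member to bridge the two lemmas (the paper just writes ``by Lemma~\ref{lemma:frag2} and by Lemma~\ref{lemma:frag1}''), which is a fair clarification since that assumption is stated in Theorem~\ref{thm_agreement} rather than in Theorem~\ref{thm_frag} itself.
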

\ifthenelse{\boolean{short}}{
}{
	\begin{proof}
		If all honest nodes have ${\tt frag\_accepted} = {\tt false}$ at Line~\ref{line_final}, we are done.
		Otherwise there is some honest node $A$ having ${\tt frag\_accepted} = {\tt true}$. 
		If $A$ is a committee member, then by Lemma~\ref{lemma:frag1}, all honest nodes must have ${\tt frag\_accepted} = {\tt true}$. If $A$ is a non-committee member, then by Lemma~\ref{lemma:frag2} and by Lemma~\ref{lemma:frag1}, all honest nodes must also have ${\tt frag\_accepted} = {\tt true}$ at Line~\ref{line_final}.
	\end{proof}
}

\vspace*{1mm}
\subsection{Proving Theorem~\ref{thm_agreement}}
\label{app:finalproof}
\vspace*{1mm}

\objagree*

\begin{proof}
	We first prove that all honest nodes must return the same object. If all honest nodes output $\bot$, we are done. Otherwise some honest node must satisfy Line~\ref{line_final} with $|\tt root\_accepted| = \{r_0\}$ and $\tt frag\_accpeted = true$, for some $r_0$. Let $A$ be any honest node. Then by Theorem~\ref{thm_root} and Theorem~\ref{thm_frag}, node $A$ must also have ${\tt root\_accepted}^A= \{r_0\}$ and ${\tt frag\_accepted}^A = \tt true$. Next, it suffices to prove that ${\tt all\_frag}^A$ contains all $s$ fragments corresponding to $r_0$. Let $t_0$ be the round during which $A$ first set 
	${\tt frag\_accepted}^A$ to be true. In round $t_0$, since Line~\ref{line_frag_comm} or Line~\ref{line_frag_noncomm} must be satisfied on $A$, we must have $t_{\textnormal{frag}}^A\ne \infty$ and $t_{\textnormal{root}}^A\ne \infty$. This means that $t_{\textnormal{root}}^A$ has already been assigned some value in or before round $t_0$. Since $t_{\textnormal{root}}^A$ is never assigned a value larger than the current round, we must have $t_{\textnormal{root}}^A \leq t_0$. Then by Lemma~\ref{lem_high_push}, in round $t_0$ the most promising push chosen by $A$ at  Line~\ref{line_frag_pick_root} must contain $r_0$. Since $A$ later sets ${\tt frag\_accepted}^A$ to be true in that round, by Line~\ref{line_frag_receive_last}, all the $s$ fragments corresponding to $r_0$ must already be in ${\tt all\_frag}^A$. 
	
	We next prove that if an honest broadcaster $A$ broadcasts an object $O$, then all honest nodes must return $O$. Given we have already proved that all honest nodes must return the same object, it suffices to show that $A$ will return $O$.
	Let the Merkle root for $O$ be $r_0$. By Line~\ref{line_broadcast_sign}, no other Merkle roots will  ever be processed by any honest node, since they do not have a signature from $A$. In round 0, node $A$ must reach and satisfy Line~\ref{line_root_comm}. Then $A$ will add $r_0$ to ${\tt root\_accepted}$ and set $t_{\textnormal{root}}$ to be $0$. Since no other root will ever be processed by $A$, $A$ must have $|{\tt root\_accepted}|=1$ at Line~\ref{line_final}.
	Next, one can trivially follow the steps in Algorithm 3 and verify that during round $s-1$, node $A$ must have $t_{\textnormal{root}}=0$ and $t_{\textnormal{frag}}=s-1$ at Line~\ref{line_frag_comm}. Hence node $A$ must later set ${\tt frag\_accepted}$ to be true.
	Finally, one can trivially verify that at Line~\ref{line_final}, node $A$ must have all $s$ fragments corresponding to $r_0$ in ${\tt all\_frag}$. Putting everything together, $A$ must return $O$.
	
	Finally, it is obvious from the pseudo-code that the algorithm always returns within $2dm+s$ rounds, regardless of whether the committee has any honest member.     
\end{proof}

\end{document}